\newtheorem{definition}{Definition}
\newtheorem{theorem}{Theorem}
\newtheorem{corollary}{Corollary}
\newtheorem{proposition}{Proposition}
\newtheorem{remark}{Comment}
\DeclareMathOperator*{\argmax}{arg\,max}
\newcommand{\Tgood}{\mbox{Tx}_1}
\newcommand{\Tbad}{\mbox{Tx}_2}
\newcommand{\E}{\mathds{E}}
\newcommand{\mE}{\mathcal{E}}
\newcommand{\cov}{\mbox{cov}}
\newcommand{\C}{\mathds{C}}
\newcommand{\Amat}{\mathds{A}}
\newcommand{\Bmat}{\mathds{B}}
\newcommand{\Hmat}{\mathds{H}}
\newcommand{\Imat}{\mathds{I}}
\newcommand{\Yvec}{\mathbf{Y}}
\newcommand{\yvec}{\mathbf{y}}
\newcommand{\Xvec}{\mathbf{X}}
\newcommand{\xvec}{\mathbf{x}}
\newcommand{\Zvec}{\mathbf{Z}}
\newcommand{\Cset}{\mathfrak{C}}
\newcommand{\Rset}{\mathfrak{R}}
\renewcommand{\th}{\tilde{h}}
\newcommand{\utH}{\underline{\tilde{H}}}
\newcommand{\uth}{\underline{\tilde{h}}}
\newcommand{\utheta}{\underline{\theta}}
\newcommand{\CN}{\mathcal{CN}}
\newcommand{\mS}{\mathcal{S}}
\newcommand{\M}{\mathcal{M}}
\newcommand{\tH}{\tilde{H}}
\newcommand{\uH}{\underline{H}}
\newcommand{\uh}{\underline{h}}
\newcommand{\mT}{\mathcal{T}}
\newcommand{\eps}{\epsilon}
\newcommand{\SNRR}{\makebox{SNR}}
\newcommand{\SNR}{\rho}
\newcommand{\sSNR}{\scriptsize \rho}
\newcommand{\outage}{\mathcal{O}}
\newcommand{\code}{\mathcal{S}_c}
\renewcommand{\P}{\makebox{P}}
\newcommand{\Real}{\mathfrak{Re}}
\long\def\symbolfootnote[#1]#2{\begingroup\def\thefootnote{\fnsymbol{footnote}}\footnote[#1]{#2}\endgroup}
\newcommand{\tend}{\hfill$\blacksquare$}
\title{Diversity-Multiplexing Tradeoff for the Interference Channel with a Relay
\thanks{The material in this paper was presented in part at the IEEE International Symposium on Information Theory (ISIT), Jul. 2013, Istanbul, Turkey. This work was supported in part by the European Commission’s Marie Curie IRG Fellowship PIRG05-GA-2009-246657 under the Seventh Framework Programme, the DARPA ITMANET program under grant 1105741-1-TFIND, ONR grant N000140910072, the DTRA program under grant HDTRA1-08-1-0010, DoD with grant HDTRA1-13-1-0029, and the NSF Center for the Science of Information with grants CNS-1343155, ECCS-1305979, CNS-1265227, NSFC-61328102, and under Award CCF-0939370.}
}
\author{
\IEEEauthorblockN{ Daniel Zahavi$^\star$, {\em Student Member, IEEE}, Lili Zhang$^{\S}$, {\em Member, IEEE},\\ Ivana Maric$^{\dagger}$, {\em Member, IEEE}, Ron Dabora$^\star$, {\em Senior Member, IEEE},\\ Andrea J. Goldsmith$^\ddagger$, {\em Fellow, IEEE}, and Shuguang Cui$^{\dagger\dagger}$, {\em Fellow, IEEE}}\\

\vspace{0.3cm}

\authorblockA{\small $\star$ \!Ben-Gurion University \hspace{0.0cm} $\S$ \!Qualcomm Inc. \hspace{0.0cm} $\dagger$ \!Ericsson Research \hspace{0.0cm} $\ddagger$ \!Stanford University \hspace{0.0cm} $\dagger\dagger$ \!\!Texas A$\&$M University}\vspace{-0.5 cm}}
\begin{document}
\maketitle
\begin{picture}(0,0)
\put(-10,280){\tt\small Accepted to the IEEE Transactions on
Information Theory,  Dec 2014.}
\end{picture}

\begin{abstract}
    We study the diversity-multiplexing tradeoff (DMT) for the slow fading interference channel with a relay (ICR). We derive four inner bounds on the DMT region: the first is based on the compress-and-forward (CF) relaying scheme, the second is  based on the decode-and-forward (DF) relaying scheme, and the last two bounds are based on the half-duplex (HD) and full-duplex (FD) amplify-and-forward (AF) schemes. For the CF and DF schemes, we find conditions on the channel parameters and the multiplexing gains, under which the corresponding inner bound achieves the optimal DMT region. We also identify cases in which the DMT region of the ICR corresponds to that of two parallel slow fading relay channels, implying that {\em interference does not decrease} the DMT for each pair, and that a {\em single} relay can be DMT-optimal for two pairs {\em simultaneously}. For the HD-AF scheme we derive conditions on the channel coefficients under which the proposed scheme achieves the optimal DMT for {\em the AF-based} relay channel. Lastly, we identify conditions under which adding a relay strictly enlarges the DMT region relative to the interference channel without a relay.\vspace{-0.1cm}
\end{abstract}
\section{Introduction}
\label{sec_intro}
\vspace{-0.2cm}
The interference channel with a relay (ICR) is a canonical network model in which a relay helps two independent transmitters, Tx$_1$ and Tx$_2$, in sending messages to their corresponding receivers, Rx$_1$ and Rx$_2$, simultaneously over a shared channel. The ICR provides design insights and performance bounds on cooperation strategies for wireless networks with interference.

The ICR was first studied in \cite{Sahin:07} and has since been the focus of considerable research. Inner and outer bounds on the capacity region of the two-user ICR with additive white Gaussian noise (AWGN) were characterized in \cite{Sahin:07}, \cite{Tian:11}, and \cite{Maric:09}. In \cite{Sahin:07} an achievable region was obtained by employing a rate-splitting scheme at the transmitters, a decode-and-forward (DF) strategy at the relay, and a backward decoding scheme at the receivers. The work in \cite{Tian:11} used the compress-and-forward (CF) strategy at the relay to obtain an achievable rate region.
Outer bounds for the AWGN-ICR were obtained by applying the cut-set bound in \cite{Tian:11}, \cite{Maric:09}, and by using a potent relay in combination with genie-aided methods in \cite{Tian:11}. Capacity regions for ergodic phase-fading and for ergodic Rayleigh fading ICRs in the strong interference regime were characterized in \cite{Dabora:12} for the case in which the source-relay links are good, i.e., when DF achieves capacity.

The diversity-multiplexing tradeoff (DMT), first introduced in \cite{Zheng:03}, characterizes the fundamental tradeoff between rate and reliability for multiple-antenna channels in the high signal-to-noise ratio (SNR) regime. In general, employing multiple-input multiple-output (MIMO) schemes allows higher rates compared to single-antenna schemes. However, these schemes require the physical dimensions of the nodes in a network to be large enough such that (s.t.) it is practical to mount a multiple-antenna array on each node.

In recent studies it was shown that some of the benefits of MIMO from the DMT perspective can be gained through user cooperation rather than using physical multiple-antenna arrays. In \cite{Yuksel:07}, the DMT characteristics of several relaying configurations were derived for both half-duplex (HD) and full-duplex (FD) relaying in two scenarios: (1) A clustered scenario, in which the relay nodes are clustered either with the source or with the destination (the channel between the nodes in the cluster is modeled as an AWGN channel), and (2) A non-clustered scenario, in which all channel coefficients matrices have independent and identically distributed (i.i.d.) Rayleigh-distributed entries. Specifically, \cite{Yuksel:07} first studied the DMT of FD multiple-antenna single-relay channels and showed that while the DF scheme is optimal for single-antenna relay channels, it is suboptimal for multiple-antenna relay channels. On the other hand, CF was shown to be optimal for the MIMO relay channel over the whole range of multiplexing gains, and for both clustered and  non-clustered relay networks using either HD or FD relay nodes. Finally, \cite{Yuksel:07} compared different cooperation strategies and models against conventional MIMO schemes and concluded that in many scenarios relaying cannot achieve the same DMT as non-virtual MIMO systems. In \cite{Karmakar:12} it was shown that quantize-map-and-forward (QMF) achieves the optimal DMT for certain configurations of the HD relay channel without channel state information (CSI) at the relay node. DMT analysis of the two-hop two-way MIMO relay channel was presented in \cite{Gunduz:08}, which showed that in such a scenario CF at the relay is optimal. The work in \cite{Gunduz:08} also proposed a dynamic CF protocol for the one-way multi-hop MIMO relay channel with a HD relay node, and showed that this scheme achieves the optimal DMT.

The DMT of the single-antenna block Rayleigh fading interference channel (IC) was studied in \cite{Leveque:09} for the scenario in which CSI is available both at the receivers (Rx-CSI) and at the transmitters (Tx-CSI), and in \cite{Bolcskei:09} for the scenario with only Rx-CSI. In \cite{Bolcskei:09}, it was shown that in the very strong interference regime, successive decoding with interference cancellation is DMT optimal.
Since for the ergodic fading case, using the approach of \cite{Bolcskei:09} achieves the capacity region of the IC \cite{Dabora:12}, it follows that the same strategy is optimal from both DMT and capacity perspectives. Additionally, for general interference regimes, \cite{Bolcskei:09} proposed a transmission scheme using a Han-Kobayashi \cite{Han:81} type superposition encoding, in which each receiver jointly decodes the common messages from both transmitters, and the private message from its intended transmitter. This scheme was shown to be DMT optimal in the strong and in the very strong interference regimes, and over a certain range of multiplexing gains. The DMT region of the block Rayleigh fading Gaussian MIMO ICR was studied in \cite{Maric:11} for the case where all links have the same exponential behavior as a function of the SNR. In \cite{Maric:11}, an outer bound on the DMT  was derived using the cut-set theorem, and an achievable DMT was characterized using CF at the relay subject to probabilistic conditions. The generalized degrees of freedom (GDoF) of the ICR was studied in \cite{Chaaban:12} where new outer bounds on the achievable GDoF of the ICR were derived using genie-aided methods. The work \cite{Chaaban:12} also proposed a {\em functional decode-and-forward} (FDF) strategy which was shown to achieve the optimal GDoF of the ICR given some constraints on the gains of the links in the channel. It was also shown in \cite{Chaaban:12} that relaying can increase the GDoF compared to the IC.

\vspace{-0.5cm}

\subsection*{Main Contributions}
\vspace{-0.2cm}
In this work we present results on the DMT region of the single-antenna ICR considering both FD and HD relaying. We consider the scenario in which the receivers have perfect Rx-CSI, but there is no Tx-CSI at the sources. We allow the direct link gains, the interfering link gains, the source-relay link gains, and the relay-destinations link gains to scale differently as exponential functions of the $\SNRR$, and focus  on characterizing the effects of the relationship between interference and cooperation on the DMT region. The channel model is symmetric in the sense that the scaling of the corresponding links at both pairs is identical.

The main contributions of this work are:
\begin{enumerate}
    \item Four achievable DMT regions are derived based on the CF, DF, and the HD and FD AF relaying schemes. For each scheme, we analyze the effect of the cross-link gains (interference), the relay-destination link gains (cooperation), and the source-relay links gains on the achievable DMT region.
    \item We derive sufficient conditions under which each of the schemes DF and CF achieves the {\em optimal DMT}. Based on the optimality results, we further obtain sufficient conditions under which the ICR has the same DMT as that of two parallel single-relay channels. Thus, the relay assistance to one pair {\em does not degrade} the DMT performance at the other pair, and {\em a single relay is simultaneously DMT-optimal for two separate communicating pairs sharing one channel}.
    \item For the AF scheme, we derive conditions under which each pair in the ICR has a DMT which is equal to the {\em best known} DMT for the {\em relay channel} with AF relaying.
    \item We compare the DMT of the ICR with that of the IC, and provide sufficient conditions under which {\em adding a relay to the IC strictly increases the DMT region}.
\end{enumerate}
The rest of the paper is organized as follows: The channel model and notation are presented in Section \ref{sec:Model}. The DMT performance with CF at the relay is studied in Section \ref{sec:DMTCF}, and the DMT performance with DF relaying is studied in Section \ref{sec:DMTDF}. AF relaying is studied in Section \ref{sec:DMTAF} for the HD and the FD regimes. Concluding remarks are presented in Section \ref{sec:summary}.

\section{Notation and System Model}
\label{sec:Model}

\vspace{-0.2 cm}

In the following we denote random variables (RVs) with upper-case letters, e.g., $X$, $Y$, and their realizations with lower-case letters, e.g., $x,y$. We denote the probability density function (p.d.f.) of a continuous complex-valued RV $X$ with $f_{X}(x)$. For brevity, the subscript $X$ may be omitted when it is the upper-case version of the realization symbol $x$. Upper-case double-stroke letters are used for denoting matrices, e.g., $\Amat$, with the exception that $\E\{X\}$ denotes the stochastic expectation of $X$. $\Imat_m$ denotes the $m\times m$ identity matrix. Bold-face letters, e.g., $\mathbf{x}$, denote column vectors (unless otherwise specified), and the $i$'th element of a vector $\mathbf{x}$ is denoted by $x_i$. We use $x^j$  to denote the vector $(x_1, x_{2},...,x_{j-1},x_j)$, $X^*$ to denote the conjugate of $X$, $\Amat^H$ to denote the Hermitian transpose of $\Amat$, and  $(x)^+$  to denote $\max \left\{x,0\right\}$. Given two $n\times n$ Hermitian matrices, $\Amat, \Bmat$, we write $\Bmat\preceq\Amat$ if $\Amat-\Bmat$ is positive semidefinite (p.s.d.) and $\Bmat\prec\Amat$ if $\Amat-\Bmat$ is positive definite (p.d.). $A^{(n)}_\epsilon(X)$ denotes the set of weakly jointly typical sequences with respect to $f_{X}(x)$, as defined in \cite[Sec. 8.2]{cover-thomas:it-book}. We denote the circularly symmetric, complex Normal distribution with mean $\mu$ and variance $\sigma^2$ with $\CN(\mu,\sigma^2)$, and the set of complex numbers with $\Cset$. Lastly, we denote $f(\SNR)\doteq \SNR^c$ if $\lim_{\substack{\sSNR\to\infty}}\frac{\log f(\SNR)}{\log\SNR}=c$. Given $f(\SNR)\doteq \SNR^c$ and $g(\SNR)\doteq \SNR^d$, we write $f(\SNR)\dot{\le}g(\SNR)$ if $c\le d$. All logarithms are to base $2$.

The ICR consists of two transmitters, two receivers, and a relay node that assists communications from the transmitters to their
respective receivers, as shown in Fig. \ref{fig:ICR}. Tx$_k$ sends messages~to~Rx$_k$,~$k=1,2$. The received signals at~Rx$_1$, Rx$_2$ and at the relay at time $i$ are denoted by $Y_{1,i}$, $Y_{2,i}$, and $Y_{3,i}$ respectively. The channel inputs from $\Tgood$, $\Tbad$ and the relay at time $i$ are denoted by $X_{1,i}$, $X_{2,i}$, and $X_{3,i}$, respectively. Let $\SNR$ denote the average received SNR over the direct link for both pairs, and let $H_{kl}$ denote the normalized channel coefficient from node $k$ to node $l$, s.t. its variance is normalized to one. We assume that all normalized channel coefficients are generated i.i.d. according to $\CN(0,1)$. The relationship between the channel inputs and outputs at time $i$, $i= 1$, $2,...$, $n$, is characterized by the following equations:
\vspace{-0.1cm}
\begin{subequations}
\label{eq:ICR_model}
\begin{eqnarray}
    \label{eq:channel_model_1}
    Y_{1,i} &=& \sqrt{\SNR}H_{11}X_{1,i}+\sqrt{\SNR^{\alpha}}H_{21}X_{2,i}+\sqrt{\SNR^{\beta}}H_{31}X_{3,i}+Z_{1,i},\\
    \label{eq:channel_model_2}
    Y_{2,i} &=& \sqrt{\SNR^{\alpha}}H_{12}X_{1,i}+\sqrt{\SNR}H_{22}X_{2,i}+\sqrt{\SNR^{\beta}}H_{32}X_{3,i}+Z_{2,i},\\
    \label{eq:channel_model_3}
    Y_{3,i} &=& \sqrt{\SNR^{\gamma}}H_{13}X_{1,i}+\sqrt{\SNR^{\gamma}}H_{23}X_{2,i}+Z_{3,i},
\end{eqnarray}
\end{subequations}
\begin{figure} [t]
    \centering
    \includegraphics[scale=0.27]{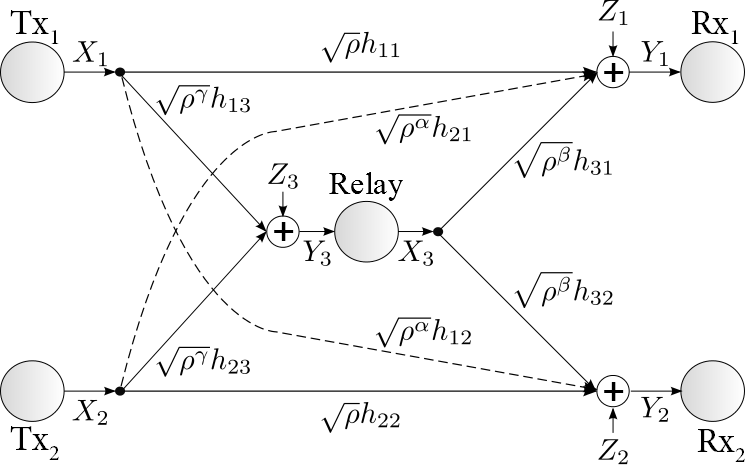}
    \vspace{-0.3cm}
    \caption{\small The ICR model. $\SNR$ denotes the SNR.}
\label{fig:ICR}
\vspace{-1cm}
\end{figure}
Here, $Z_{1,i}$, $Z_{2,i}$, and $Z_{3,i}$ are mutually independent RVs, distributed according to $\CN(0,1)$, independent over time and independent of the channel inputs and of the normalized channel coefficients. Each channel input has a per-symbol unit power constraint: $\E\{|X_{k,i}|^2\}\leq 1, k\in\{1,2,3\}$. The normalized channel coefficients are generated at the beginning of the codeword, and once determined, they remain fixed throughout the transmission of the entire codeword, which corresponds to slow fading.

Note that in \eqref{eq:ICR_model}, the cross-link gains scale as $\sqrt{\SNR^\alpha}$, the relay-destination link gains scale as $\sqrt{\SNR^\beta}$, the source-relay link gains scale as $\sqrt{\SNR^\gamma}$, and the source-destination link gains scale as $\sqrt{\SNR}$. This allows us to identify the impact of the scaling of the links in the channel on the DMT. The SNR exponents in \eqref{eq:ICR_model} define the ratio of the strength of different links in decibels. This model is very common in high SNR analysis of interference networks, see also \cite{Bolcskei:09}, \cite{Tse:08}, \cite{Jafar:10}, and \cite{Chaaban:12}. From a physical point of view, the different exponents represent different pathloss scaling behaviour due to different propagation conditions. The values of these exponents depend on the geographical setup and on whether a line-of-sight (LOS) exists between a transmitter and a receiver. Specifically, let $l$ denote the distance between a transmitter and a receiver. Then, as observed, e.g. in \cite{Molisch:06}, the received signal power at the receiver (and therefore the SNR\footnote{In this work we fix the power of the additive white Gaussian noise to one.}) scales proportionally to $l^{-\eta}$, where $\eta$ is commonly referred to as the {\em pathloss exponent}. Different studies show that the LOS pathloss exponents in indoor environments range from $1$ to $2$, while non-LOS pathloss exponents typically range from $3$ to $7$ \cite{Molisch:06}.

\noindent
The following CSI assumptions are made for the different schemes studied in this work:
\begin{itemize}
    \item In the study of CF relaying in Section \ref{sec:DMTCF}, it is assumed that Rx$_k$ has Rx-CSI on all of its incoming links. This Rx-CSI is represented by $\tH_k \triangleq \big( H_{1k}, H_{2k}, H_{3k}\big)\in\Cset^3\triangleq\tilde{\mathfrak{H}}_k, k\in\{1,2\}$. It is also assumed that the relay has Rx-CSI  represented by $\tH_{3}\triangleq(H_{13},H_{23})\in\Cset^2\triangleq\mathfrak{H}_3$. {\em In this section only}, we assume that the relay also has Tx-CSI on its outgoing links represented by $\tH_{3,T}\triangleq(H_{31},H_{32})\in\Cset^2$.
    \item In the study of DF relaying in Section \ref{sec:DMTDF}, it is assumed that the receivers and the relay have only Rx-CSI, each on its incoming links. The Rx-CSI of Rx$_k$ is $\tH_k, k=1,2$, and the Rx-CSI of the relay is $\tH_{3}$.
    \item In the study of AF relaying in Section \ref{sec:DMTAF}, it is assumed that the relay has only Rx-CSI, $\tH_{3}$, and that Rx$_k, k=1,2$ has Rx-CSI on its incoming links and on the incoming links of the relay, i.e., $(\tH_k,\tH_{3})$.
    \item Throughout this work, there is no Tx-CSI at Tx$_1$ or at Tx$_2$.
\end{itemize}
Finally, we let $\utH\triangleq(\tH_1,\tH_2,\tH_3,\tH_{3,T})$ be the vector of all channel coefficients.

\vspace{-0.3cm}

\begin{definition}
\label{def:code}
    An $(R_1,R_2,n)$ code \em{}for the slow-fading ICR consists of two message sets $\M_k \triangleq \big\{1,2,...,2^{n R_k}\big\}$, $k = 1,2$, two encoders at the sources,   $e_k: \M_k \mapsto \Cset^n$, $k=1,2$, and two decoders at the destinations,  $g_k: \tilde{\mathfrak{H}}_k \times \Cset^n \mapsto \M_k$, $k=1,2$. With Rx-CSI only, the transmitted signal at the relay  at time $i$ is $x_{3,i} = t_i\big(y_{3,1}^{i-1},\th_{3}\big) \in \Cset$, $i=1,2,...,n$. With both Rx-CSI and Tx-CSI at the relay we have $x_{3,i} = t_i\big(y_{3,1}^{i-1},\th_{3},\th_{3,T}\big) \in \Cset$. We denote a coding scheme by $\code$.
\end{definition}

\begin{definition}
The average probability of error \em{}where each sender selects its message independently and uniformly from its message set is $\P_e^{(n)} \triangleq \Pr\big(g_1(\tH_1, Y_1^n) \ne M_1 \mbox{ or } g_2(\tH_2, Y_2^n) \ne M_2\big)$.
\end{definition}

\begin{definition}
\em{} A rate pair $(R_1, R_2)$ is called {\em achievable} if for any $\epsilon\! >\!0$ and $\delta\! >\!0$ there exists some blocklength $n_0(\epsilon,\delta)$ s.t. for every $n\! >\! n_0(\epsilon,\delta)$ there exists an  \!$(R_1\! -\! \delta, R_2\! -\! \delta ,n)$\! code with $\P_e^{(n)}\! <\! \epsilon$. $\mathcal{R}(\uth,\code,\SNR)$ denotes the maximum achievable rate region, achieved by a coding scheme $\code$ for the ICR whose channel coefficients are $\uth$, and the direct-link SNR is $\SNR$.
\end{definition}

\begin{definition}
\em{} {\em The probability of an outage} event in the slow-fading ICR, for the scheme $\code$ and target rates $R_{1,T}$ for pair Tx$_1$-Rx$_1$, and $R_{2,T}$ for pair Tx$_2$-Rx$_2$, is defined as:
    \begin{equation*}
      \label{eq:Outage probability of ICR}
      P_\outage(R_{1,T}, R_{2,T},\SNR,\code)\triangleq \Pr\big((R_{1,T},R_{2,T})\notin\mathcal{R}(\utH,\code,\SNR)\big).
    \end{equation*}
\end{definition}

\begin{definition}
    \em{}We say that a coding scheme $\code$ for the ICR achieves {\em multiplexing gains} $(r_1, r_2)$, if there exist rates $\big(R_1(\uth,\code,\SNR),R_2(\uth,\code,\SNR)\big)\in\mathcal{R}(\uth,\code,\SNR)$ that scale as (see, e.g., \cite[Section III]{Yuksel:07})
    \begin{equation*}
        \lim_{\footnotesize{\SNR\rightarrow\infty}}\frac{R_k(\uth,\code,\SNR)}{\log(\SNR)}=r_k, \qquad k=1,2.
    \end{equation*}
\end{definition}

\begin{definition}
\label{Def:DG}
    \em{}We say that a scheme $\code$ achieves a diversity gain  of $d(r_1,r_2)$ for multiplexing gains $(r_1, r_2)$, if (see, e.g., \cite[Section III]{Yuksel:07})
    \begin{equation*}
        \label{eq:DG definition}
        - \lim_{\footnotesize{\SNR\rightarrow\infty}}\frac{\log P_\outage\Big(r_1\log(\SNR), r_2\log(\SNR),\SNR,\code\Big)}{\log(\SNR)}=d(r_1,r_2).
    \end{equation*}
    \noindent
\end{definition}

\section{Rx-CSI and Tx-CSI at the Relay: Compress-and-Forward}
\label{sec:DMTCF}
\subsection{An Outer Bound on the DMT Region}
\label{sec:OB}
When the receivers have Rx-CSI and the relay has
Rx-CSI  and Tx-CSI we have the following outer bound on the DMT region:
\vspace{-0.5cm}
\begin{proposition}
\label{thm:theorem_cutset}
    An outer bound on the DMT region of the symmetric ICR  is given by
    \begin{equation}
        \label{eq:cut_upper}
        d^{+}(r_1,r_2) = \min_{k\in\{1,2,3,4\}}\big\{d_k^+(r_1,r_2)\big\},
    \end{equation}

    \noindent
    where,
    \begin{subequations}
    \label{eq:cut_dmt}
    \begin{eqnarray}
        \label{eq:cut_dmt1}d^+_1(r_1,r_2)  &=& (1-r_1)^++(\gamma-r_1)^+\\
        \label{eq:cut_dmt2}d^+_2(r_1,r_2)  &=& (1-r_1)^++(\beta-r_1)^+\\
        \label{eq:cut_dmt3}d^+_3(r_1,r_2)  &=& (1-r_2)^++(\gamma-r_2)^+\\
        \label{eq:cut_dmt4}d^+_4(r_1,r_2)  &=& (1-r_2)^++(\beta-r_2)^+
    \end{eqnarray}
    \end{subequations}
\end{proposition}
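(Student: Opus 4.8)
The plan is to prove this as a cut-set/outage bound. Since the claim is an \emph{outer} bound, it suffices to show that for an arbitrary coding scheme $\code$ the outage probability obeys $P_\outage\dot{\ge}\SNR^{-d^+(r_1,r_2)}$, equivalently $d(r_1,r_2)\le d^+(r_1,r_2)$. The key containment is that $\mathcal{R}(\utH,\code,\SNR)$ lies inside a cut-set region for every $\code$, so that a target pair lying outside the cut-set region forces an outage. To decouple the two pairs I would derive the pair-$1$ bound by conditioning on $M_2$ (and the pair-$2$ bound on $M_1$): this is a genie argument, valid because $M_1\perp M_2$ implies $I(M_1;Y_1^n\,|\,\tH)\le I(M_1;Y_1^n\,|\,M_2,\tH)$, so it only enlarges the relevant mutual information. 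Conditioned on $M_2$ the interfering input $X_2^n$ is deterministic, hence removed at both Rx$_1$ and the relay, reducing pair $1$ to a single-relay channel to which the standard relay cut-set bound applies.

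For pair $1$ the cut-set bound then yields two constraints: the broadcast cut $R_1\le I(X_1;Y_1,Y_3\,|\,X_3)$ (separating Tx$_1$ from $\{$Rx$_1$, relay$\}$) and the multiple-access cut $R_1\le I(X_1,X_3;Y_1)$ (separating $\{$Tx$_1$, relay$\}$ from Rx$_1$), each evaluated for the realization $\tH$ and maximized under the unit-power constraints. The broadcast cut is a single-input two-output channel with effective gains $\{\SNR|H_{11}|^2,\SNR^{\gamma}|H_{13}|^2\}$, so $I(X_1;Y_1,Y_3\,|\,X_3)\le\log\big(1+\SNR|H_{11}|^2+\SNR^{\gamma}|H_{13}|^2\big)$. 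The multiple-access cut is a two-input single-output channel with effective gains $\{\SNR|H_{11}|^2,\SNR^{\beta}|H_{31}|^2\}$, and since coherent combining does not change the SNR exponent, its mutual information has the same exponent as $\log\big(1+\SNR|H_{11}|^2+\SNR^{\beta}|H_{31}|^2\big)$. By symmetry the pair-$2$ cuts involve the gains $\{\SNR|H_{22}|^2,\SNR^{\gamma}|H_{23}|^2\}$ and $\{\SNR|H_{22}|^2,\SNR^{\beta}|H_{32}|^2\}$.

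Next I would carry out the Zheng--Tse outage computation. Writing $|H_{kl}|^2\doteq\SNR^{-v_{kl}}$, the probability of the fading state scales as $\SNR^{-\sum_{kl}v_{kl}}$ over $v_{kl}\ge 0$ (states with $v_{kl}<0$ contribute negligibly). An outage of the broadcast cut at target rate $r_1\log\SNR$ occurs exactly when both $\SNR|H_{11}|^2\dot{<}\SNR^{r_1}$ and $\SNR^{\gamma}|H_{13}|^2\dot{<}\SNR^{r_1}$, i.e. $v_{11}>1-r_1$ and $v_{13}>\gamma-r_1$; minimizing $v_{11}+v_{13}$ over this region puts the dominant event at the corner and gives exponent $(1-r_1)^++(\gamma-r_1)^+=d_1^+$. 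The same computation for the multiple-access cut gives $(1-r_1)^++(\beta-r_1)^+=d_2^+$, and the pair-$2$ cuts give $d_3^+$ and $d_4^+$. Finally, a rate pair is outside the cut-set region whenever any one of the four constraints is violated, so the cut-set outage probability is at least the maximum of the four individual outage probabilities, whose exponent is $\min_k d_k^+$. Combined with $\mathcal{R}(\utH,\code,\SNR)$ being contained in the cut-set region, this yields $P_\outage\dot{\ge}\SNR^{-\min_k d_k^+}$ for every scheme, which is the desired bound.

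The main obstacle I anticipate is not the outage arithmetic but the careful setup of the cut-set/genie step: one must verify that conditioning on $M_2$ genuinely reduces pair $1$ to a relay channel for which the standard cut-set bound holds under the present CSI assumptions (Rx-CSI at the receivers, and both Rx-CSI and Tx-CSI at the relay), and that the relay Tx-CSI---which permits coherent beamforming on the multiple-access cut---does not alter the SNR exponent of the corresponding mutual information. A secondary technical point is justifying that Gaussian inputs attain the correct exponent in each cut and that restricting to $v_{kl}\ge 0$ is without loss in the exponent.
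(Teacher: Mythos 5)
Your proposal is correct and follows essentially the same route as the paper's Appendix A: cut-set bounds for the cuts $\{\mbox{Tx}_k\}$ and $\{\mbox{Tx}_k,\mbox{Relay}\}$, Gaussian upper bounds on the resulting mutual informations (with the observation that the coherent cross term in the multiple-access cut does not change the SNR exponent), and the Zheng--Tse outage-exponent computation restricted to nonnegative $v_{kl}$. The only cosmetic difference is that you obtain the conditioning on the interfering signal via an explicit genie argument on $M_2$ followed by the single-relay cut-set bound, whereas the paper applies the multi-terminal cut-set theorem (re-derived with conditioning on the channel realization) directly to the five-node network, which yields the conditioning on $X_2$ automatically.
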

\vspace{-0.3cm}
\begin{proof}
The DMT outer bounds presented in \eqref{eq:cut_dmt} are obtained by applying the cut-set bound \cite[Thm. 15.10.1]{cover-thomas:it-book} to the ICR. A detailed proof is provided in Appendix \ref{app:CUTSET}.
\end{proof}

We note that the mutual information expressions in the cut-set outer bound, which is the basis for deriving the DMT outer bound, are simultaneously maximized by mutually independent complex Normal channel inputs, each with zero-mean and unit power \cite[Appendix C]{Dabora:12}.

\vspace{-0.5 cm}

\subsection{An Achievable DMT Region Via Compress-and-Forward Relaying}
\label{sec:CFCOMMON}
\!We now derive an achievable DMT region based on CF relaying and  provide sufficient conditions under which this region coincides with the DMT outer bound \eqref{eq:cut_upper}, leading to the characterization of the {\em optimal} DMT of the ICR.\! The achievability scheme is based on a special case of \cite{Tian:11} in which the sources transmit only common messages.
\!The result is summarized in the following theorem:

\vspace{-0.4 cm}

\begin{theorem}
\label{thm:theorem_cf}
An inner bound on the DMT region of the symmetric slow-fading Gaussian ICR is given by:
\begin{equation}
    \label{eq:dmt_cf}d^{-}_{\textrm{CF}}(r_1,r_2) = \min_{k\in\{1, 2, 3\}}\left\{d_{k,\textrm{CF}}^{-}(r_1,r_2)\right\},
\end{equation}
where,
\begin{subequations}
\label{eq_cf_dmt}
\begin{eqnarray}
    \hspace{-0.5 cm}\label{eq:cf_dmt1}d_{1,\textrm{CF}}^{-}(r_1,r_2) &=&
    \begin{cases}
        (1\!-\!r_1)^+ + \left(\gamma\!-\!(\gamma\!+\!\alpha\!-\!\beta)^+\! -\!r_1\right)^+ & \qquad\qquad\qquad\qquad\qquad ,\alpha>1\\
        (1\!-\!r_1)^+ + \left(\gamma\!-\!(\gamma\!+\!1\!-\!\beta)^+ \!-\!r_1\right)^+   & \qquad\qquad\qquad\qquad\qquad ,\alpha\le1
    \end{cases}\\
    \label{eq:cf_dmt2}d_{2,\textrm{CF}}^{-}(r_1,r_2) &=&
    \begin{cases}
        (1\!-\!r_2)^+ + \left(\gamma\!-\!(\gamma\!+\!\alpha\!-\!\beta)^+ \!-\!r_2\right)^+ & \qquad\qquad\qquad\qquad\qquad\: ,\alpha>1\\
        (1\!-\!r_2)^+ + \left(\gamma\!-\!(\gamma\!+\!1\!-\!\beta)^+ \!-\!r_2\right)^+   & \qquad\qquad\qquad\qquad\qquad\: ,\alpha\le1
    \end{cases}\\
    \label{eq:cf_dmt3}d_{3,\textrm{CF}}^{-}(r_1,r_2) &=&
    \begin{cases}
        (1\!-\!r_1\!-\!r_2)^+ + (\alpha\!-\!r_1\!-\!r_2)^+ +\big(\gamma\!-\!(\gamma\!+\!\alpha\!-\!\beta)^+ \!-\!r_1\!-\!r_2\big)^+ &,\alpha>1\\
        (1\!-\!r_1\!-\!r_2)^+ + (\alpha\!-\!r_1\!-\!r_2)^+ +\big(\gamma\!-\!(\gamma\!+\!1\!-\!\beta)^+ \!-\!r_1\!-\!r_2\big)^+ &,\alpha\le1
    \end{cases}
\end{eqnarray}
\end{subequations}
\end{theorem}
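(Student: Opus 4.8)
The plan is to start from the CF achievable rate region for the ICR obtained by specializing the scheme of \cite{Tian:11} to the case in which both sources transmit only common messages, so that each receiver must decode both messages. Instantiating that region with mutually independent inputs $X_1,X_2,X_3\sim\CN(0,1)$ and a Gaussian test channel $\hat{Y}_3=Y_{3}+\hat{Z}$ for the compression, the decodability conditions at $\Rgood$ reduce to a multiple-access region in $(R_1,R_2)$ with the three constraints $R_1\le I(X_1;\hat{Y}_3,Y_1\,|\,X_2,X_3)$, $R_2\le I(X_2;\hat{Y}_3,Y_1\,|\,X_1,X_3)$ and $R_1+R_2\le I(X_1,X_2;\hat{Y}_3,Y_1\,|\,X_3)$, together with the Wyner--Ziv compression constraint that $\hat{Y}_3$ be recoverable at the receiver through the relay--destination link of exponent $\beta$. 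By the symmetry of the model the analogous constraints at $\Rbad$ have identical SNR exponents, so only one set of three bounds survives; these will become $d^-_{1,\textrm{CF}}$, $d^-_{2,\textrm{CF}}$ and $d^-_{3,\textrm{CF}}$. I would use the Tx-CSI at the relay to choose the variance of $\hat{Z}$ as a function of the realized gains so that the compression constraint is met with equality, extracting the best effective relay contribution.

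The central quantity is the effective exponent of the relay path once compression is accounted for. The relay observes the sources through links of exponent $\gamma$, and its quantized observation reaches each receiver through a link of exponent $\beta$; however, the compression index must be resolved at the receiver against a received-power floor set by the stronger of the direct and cross signals, whose exponent is $\max\{1,\alpha\}$. Solving the compression constraint then shows that the relay delivers, per source, a path of exponent $\min\{\gamma,\beta-\max\{1,\alpha\}\}=\gamma-(\gamma+\max\{1,\alpha\}-\beta)^+$. This is exactly the origin of the two branches in \eqref{eq_cf_dmt}: the branch $\alpha>1$ uses $\max\{1,\alpha\}=\alpha$ and the branch $\alpha\le1$ uses $\max\{1,\alpha\}=1$. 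I expect establishing this effective exponent, and in particular justifying the $\max\{1,\alpha\}$ floor and the resulting case split, to be the main obstacle.

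With the mutual-information terms in hand I would pass to the high-SNR regime following the approach of \cite{Zheng:03}. Writing $|H_{kl}|^2\doteq\SNR^{-v_{kl}}$, each gain satisfies $\Pr(v_{kl}\ge v)\doteq\SNR^{-v}$ on $v\ge0$, and every mutual-information term becomes piecewise linear in the exponents $v_{kl}$, e.g. $I(X_1;\hat{Y}_3,Y_1|X_2,X_3)\doteq\max\{(1-v_{11})^+,(\gamma-(\gamma+\max\{1,\alpha\}-\beta)^+-v_{13})^+\}\log\SNR$. The outage event for each rate constraint is the set of exponents on which the corresponding term falls below its target $r$; because the direct, cross, and relay contributions travel through independent gains that are summed inside a single logarithm, such a term falls below threshold only when every contributing path is individually in outage, so their exponents add. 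By the Laplace principle the diversity order of each constraint equals $\inf\sum_{kl}v_{kl}$ over its outage region subject to $v_{kl}\ge0$, a linear program whose solution is precisely the additive $(\cdot)^+$ expressions in \eqref{eq:cf_dmt1}--\eqref{eq:cf_dmt3}.

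Finally I would combine the three constraints: since an outage of the CF region occurs iff at least one of the three rate bounds is violated, a union bound together with the matching dominant-event lower bound gives $P_\outage\doteq\SNR^{-\min_k d^-_{k,\textrm{CF}}}$, which is \eqref{eq:dmt_cf}. The routine but lengthy parts — the exact Gaussian mutual-information computations, the verification that the compression constraint can be met with equality under Tx-CSI, and the solution of the three linear programs — would be deferred to an appendix; the conceptual crux remains the compression analysis of the second paragraph.
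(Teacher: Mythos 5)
Your route is essentially the paper's (Appendix \ref{app:CFProof}): specialize the CF region of \cite{Tian:11} to common messages with mutually independent Gaussian inputs and a Gaussian test channel $\hat{Y}_3=Y_3+Z_{\textrm{Q}}$, use the relay's Tx-CSI to set the quantization noise so that the compression constraints hold, identify the effective relay-path exponent $\gamma-(\gamma+\max\{1,\alpha\}-\beta)^+$, and then solve per-constraint linear programs in the exponents $\theta_{kl}$ and union-bound over the constraints. Your $\max\{1,\alpha\}$ floor and the resulting case split are correct; in the paper it emerges from the cross terms $\SNR^{\gamma+\alpha}|h_{13}|^2|h_{21}|^2$ and $\SNR^{\gamma+1}|h_{23}|^2|h_{11}|^2$ in the determinant inside $h(Y_k,\hat{Y}_3|X_3,\uth)$ after worst-casing the source-side exponents, which yields \eqref{eq:eq25New}. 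One technicality you gloss over: since $N_{\textrm{Q}}$ is chosen from the relay's CSI, its exponent depends on the realized $\theta_{31},\theta_{32}$, so these must be carried as variables in each outage LP; the paper shows the minimizer sits at $\theta_{31}=\theta_{32}=0$, which is what legitimizes treating $\gamma-(\gamma+\max\{1,\alpha\}-\beta)^+$ as a fixed effective exponent.

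The step that would fail as written is your starting rate region. You keep the full MAC region at $\Rgood$, including $R_2\le I(X_2;Y_1,\hat{Y}_3|X_1,X_3,\uth)$. That constraint sees Tx$_2$ through the cross link $\sqrt{\SNR^{\alpha}}H_{21}$, so its outage exponent is $(\alpha-r_2)^++\big(\gamma-(\cdot)^+-r_2\big)^+$, not the $(1-r_2)^++\cdots$ of $d_{2,\textrm{CF}}^-$; your claim that ``by symmetry only one set of three bounds survives'' silently replaces $\alpha$ by $1$ here. For $\alpha\le 1$ this extra term can be strictly smaller than all of \eqref{eq:cf_dmt1}--\eqref{eq:cf_dmt3} (e.g.\ $\alpha=0.5$, $\beta=\gamma=1$, $r_1=0$, $r_2=0.1$ gives $0.4$ versus a minimum of $0.9$), so carrying it through produces a strictly weaker inner bound than the theorem asserts. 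The fix is the one the paper implicitly uses: an error only in the interfering message is not counted as an error event, so the individual-rate constraint on the non-intended message at each receiver is dropped, and the region \eqref{eq:cf_rate} contains only the own-message individual constraints plus the two sum-rate constraints (whose two DMT exponents coincide by symmetry, giving exactly the three expressions of the theorem).
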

\begin{proof}
The proof is provided in Appendix \ref{app:CFProof}.
\end{proof}
\vspace{-0.3 cm}
\begin{corollary}
\label{cor:cf_optimallity}
Consider the symmetric slow-fading Gaussian ICR defined in Section \ref{sec:Model}. If
    \begin{subequations}
    \label{eq:CFOpCon000}
    \begin{eqnarray}
        \label{eq:CFOpCon0}
        &&\beta\geq \max\{\gamma+1,\gamma+\alpha\}\\
        &&\min\big\{(1-r_1)^++(\gamma-r_1)^+,(1-r_2)^++(\gamma-r_2)^+\big\}\leq \nonumber\\ && \qquad\qquad\qquad\qquad\qquad\qquad (1-r_1-r_2)^++(\alpha-r_1-r_2)^++(\gamma-r_1-r_2)^+,       \label{eq:CFOpCon}
    \end{eqnarray}
    \end{subequations}
then the optimal DMT region is
\begin{equation}
    \label{eq:cf_optimal_dmt}
    d_{\mbox{\footnotesize Opt-CF}}(r_1,r_2) = \min\big\{(1-r_1)^++(\gamma-r_1)^+,(1-r_2)^++(\gamma-r_2)^+\big\},
\end{equation}
and it is achieved with CF at the relay.
\end{corollary}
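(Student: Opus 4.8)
The plan is to prove optimality by a sandwiching argument: I will show that, under the two hypotheses \eqref{eq:CFOpCon0} and \eqref{eq:CFOpCon}, the CF inner bound of Theorem \ref{thm:theorem_cf} and the cut-set outer bound of Proposition \ref{thm:theorem_cutset} both collapse to the single expression \eqref{eq:cf_optimal_dmt}. Since every achievable DMT is upper bounded by $d^{+}(r_1,r_2)$ of \eqref{eq:cut_upper} and lower bounded by $d^{-}_{\textrm{CF}}(r_1,r_2)$ of \eqref{eq:dmt_cf}, establishing $d^{-}_{\textrm{CF}}=d^{+}=d_{\mbox{\footnotesize Opt-CF}}$ immediately yields the claim, with achievability supplied by the CF scheme underlying Theorem \ref{thm:theorem_cf}.

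First I would simplify the outer bound. Condition \eqref{eq:CFOpCon0} gives $\beta\ge\gamma+1>\gamma$, hence $(\beta-r_1)^+\ge(\gamma-r_1)^+$ and $(\beta-r_2)^+\ge(\gamma-r_2)^+$. Comparing the four terms in \eqref{eq:cut_dmt}, this forces $d^{+}_2\ge d^{+}_1$ and $d^{+}_4\ge d^{+}_3$, so the constraints carrying the relay-destination exponent $\beta$ are inactive and $d^{+}=\min\{d^{+}_1,d^{+}_3\}$, which is exactly the right-hand side of \eqref{eq:cf_optimal_dmt}.

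Next I would simplify the inner bound using the same hypothesis. For $\alpha>1$ we have $\beta\ge\gamma+\alpha$, so $(\gamma+\alpha-\beta)^+=0$; for $\alpha\le1$ we have $\beta\ge\gamma+1$, so $(\gamma+1-\beta)^+=0$. In either regime the compression penalty inside \eqref{eq_cf_dmt} vanishes, so that $d^{-}_{1,\textrm{CF}}=(1-r_1)^++(\gamma-r_1)^+=d^{+}_1$ and $d^{-}_{2,\textrm{CF}}=(1-r_2)^++(\gamma-r_2)^+=d^{+}_3$, while in both cases $d^{-}_{3,\textrm{CF}}=(1-r_1-r_2)^++(\alpha-r_1-r_2)^++(\gamma-r_1-r_2)^+$. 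Condition \eqref{eq:CFOpCon} is precisely the statement $\min\{d^{+}_1,d^{+}_3\}\le d^{-}_{3,\textrm{CF}}$, so the joint sum-constraint does not tighten the minimum, and therefore $d^{-}_{\textrm{CF}}=\min\{d^{+}_1,d^{+}_3,d^{-}_{3,\textrm{CF}}\}=\min\{d^{+}_1,d^{+}_3\}$.

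Combining the two simplifications gives $d^{-}_{\textrm{CF}}=\min\{d^{+}_1,d^{+}_3\}=d^{+}$, which equals \eqref{eq:cf_optimal_dmt}, completing the argument. This proof is essentially bookkeeping once Proposition \ref{thm:theorem_cutset} and Theorem \ref{thm:theorem_cf} are in hand; there is no genuinely hard step. The only point demanding care is handling the positive-part operators consistently across the two regimes $\alpha>1$ and $\alpha\le1$, and checking that the two parts of \eqref{eq:CFOpCon000} are exactly what is needed to (i) deactivate the $\beta$-constraints in the outer bound while nulling the compression penalty in the inner bound, and (ii) render the joint constraint $d^{-}_{3,\textrm{CF}}$ inactive.
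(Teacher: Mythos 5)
Your proof is correct and follows the same sandwiching argument as the paper's (much terser) proof: condition \eqref{eq:CFOpCon0} both deactivates the $\beta$-terms in the outer bound of Proposition \ref{thm:theorem_cutset} and nulls the compression penalty in Theorem \ref{thm:theorem_cf}, while \eqref{eq:CFOpCon} renders the sum-rate constraint inactive, so inner and outer bounds coincide at \eqref{eq:cf_optimal_dmt}. The only difference is that you spell out the bookkeeping that the paper leaves implicit.
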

\vspace{-0.3 cm}
\begin{proof}
    The corollary follows from Theorem \ref{thm:theorem_cf}. Note that when $ \beta \geq \max\{\gamma+1,\gamma+\alpha\}$ and \eqref{eq:CFOpCon} is satisfied, then the achievable DMT region of the CF scheme is $\min\big\{(1-r_1)^++(\gamma-r_1)^+,(1-r_2)^++(\gamma-r_2)^+\big\}$ which coincides with the DMT outer bound derived in Proposition \ref{thm:theorem_cutset}.
\end{proof}

\subsection{Discussion}
\label{com:discom1}

{\em The physical meaning of conditions \eqref{eq:CFOpCon000}:} To understand the physical meaning of conditions \eqref{eq:CFOpCon000} note that $\frac{\SNR^\beta}{\SNR}$ represents the relay-destination SNR, treating only the desired signal as noise, and $\frac{\SNR^\beta}{\SNR^\alpha}$ represents the relay-destination SNR, treating only the interference as noise. Condition \eqref{eq:CFOpCon0} can now be rewritten as $\min\{\beta-1, \beta-\alpha\}>\gamma$, or equivalently, $\min \left\{\frac{\SNR^\beta}{\SNR},\frac{\SNR^\beta}{\SNR^\alpha}\right\}> \SNR^\gamma$. Thus, \eqref{eq:CFOpCon0} can be interpreted as requiring that the SNR of the relay-destination link (i.e., $\min \left\{\frac{\SNR^\beta}{\SNR},\frac{\SNR^\beta}{\SNR^\alpha}\right\})$ be higher than the SNR of the source-relay link (i.e., $\SNR^{\gamma}$). It follows that this conditions guarantees that the relay be able to reliably convey its received information to destination nodes using only minor compression. The second condition, stated in \eqref{eq:CFOpCon}, requires that the multiplexing gains be such that jointly decoding both messages at the destination (with the help of the relay) does not constrain the individual rates. The combination of this condition with the fact that the relay can reliably convey its information to the destination nodes leads naturally to the optimality of the CF scheme used in Theorem \ref{eq_cf_dmt}.

    {\em On the equivalence to the DMT of two parallel relay channels:} Note that under the conditions of Corollary \ref{cor:cf_optimallity}, \eqref{eq:cf_optimal_dmt} corresponds to the optimal DMT of two interference-free parallel relay channels. This can be seen by inspecting the cut-set bound for the relay channel. The relationship between the channel inputs and outputs for the relay channel at times  $i=1,2,...,n$, is given by:
    \begin{eqnarray*}
        Y_{1,i} &=& \sqrt{\SNR}H_{11}X_{1,i}+\sqrt{\SNR^{\beta}}H_{31}X_{3,i}+Z_{1,i}\\
        Y_{3,i} &=& \sqrt{\SNR^{\gamma}}H_{13}X_{1,i}+Z_{3,i},
    \end{eqnarray*}
    where $X_1$ and $X_3$ denote the transmitted signals of the source and of the relay, respectively, and $Y_1$ and $Y_3$ denote the received signals at the destination and at the relay node, respectively. The rest of the definitions are as given in Section \ref{sec:Model}.
    Let $\uH_R = (H_{11},H_{13}, H_{31})$. From \cite[Eqns. (4), (5)]{Yuksel:07}, for a given realization $\uh_r$
    the capacity of the relay channel is upper-bounded by
    \begin{equation*}
        C_{\mbox{\footnotesize Relay}}(\uh_r)\!\!\le\!\!\max_{f(x_1,x_2)} \!\!\min\big\{I(X_1,X_3;Y_1|\uh_r), I(X_1;Y_1,Y_3|X_3,\uh_r)\big\}.
    \end{equation*}
    \noindent
    Following steps similar to \cite[Proof of Thm. 1]{Maric:11}, we bound each expression as follows: 
    \begin{eqnarray}
        I(X_1;Y_1,Y_3|X_3,\uh_r)&\le&\log\Big(1+\SNR|h_{11}|^2+\SNR^{\gamma}|h_{13}|^2\Big)\nonumber\\
        \label{eq:eq16}&=&\log\Big(1+\SNR^{1-\theta_{11}}+\SNR^{\gamma-\theta_{13}}\Big).\\
        \label{eq:eq15}I(X_1,X_3;Y_1|\uh_r)&\le&
        \log\big(\SNR^{1-\theta_{11}}+2\cdot\SNR^{\frac{1-\theta_{11}+\beta-\theta_{31}}{2}}+\SNR^{\beta-\theta_{31}}+1\big).
    \end{eqnarray}
    \noindent
    Hence, the DMT of the relay channel is upper-bounded by
    {\begin{equation}
        \label{eq:relayDMTUB}
       d_{\mbox{\footnotesize Relay}}^+(r)=\min\big\{(1-r)^++(\gamma-r)^+, (1-r)^++(\beta-r)\big\}.
    \end{equation}}
    Next, recall that in Corollary \ref{cor:cf_optimallity} we have $\beta\ge \gamma+1$. Comparing $d_{\mbox{\footnotesize Relay}}^+(r)$ and $d_{\mbox{\footnotesize Opt-CF}}(r,r)$, we conclude that for $\beta\ge \gamma+1$, $d_{\mbox{\footnotesize Relay}}^+(r)  =d_{\mbox{\footnotesize Opt-CF}}(r, r)$. Thus, under the conditions of Corollary \ref{cor:cf_optimallity}, the optimal DMT of the ICR {\em coincides with the outer bound
    on the optimal DMT of two interference-free parallel relay channels}. Hence, {\em a single relay} employing the CF strategy in this situation is DMT optimal for {\em both} communicating pairs simultaneously, and in fact interference does not degrade the DMT performance in this case.

    {\em Notes on the optimality of CF:} Observe that there exist values of $\alpha$ and $\beta$ for which DMT optimality of CF holds over the entire range of multiplexing gains, i.e., for any $0\!\le\! r_1,r_2\! \le\!1$. One such example is
    when $\gamma=1, \alpha\!=\!2$ and $\beta\!=\!3$. However, if $\beta\!<\!\max\{\gamma+1,\gamma+\alpha\}$,  CF is not DMT-optimal in the sense that its achievable DMT region does not coincide with the DMT outer bound derived based on the cut-set theorem (we cannot rule out a tighter outer bound for this scenario). To understand the reason for this, note that as stated in the previous discussion, when the relay-destination links are strong (large $\beta$), then the compression loss at the relay is minor which allows conveying the relay information to the destinations with negligible distortion. However, if the relay-destination links are not strong enough, i.e., if $\beta\!<\!\max\{\gamma+1,\gamma+\alpha\}$, then the compression loss at the relay is substantial. As a result, much of the information received at the relay is not conveyed to the destinations, leading to sub-optimality of CF. It follows that the achievable DMT region of the CF-based scheme used for deriving Theorem \ref{thm:theorem_cf} coincides with the DMT region derived from the cut-set bound only over a strict subset of channel coefficients, {\em contrary} to the situation for the CF scheme in the single-relay channel, as observed in \cite{Yuksel:07}. This difference is due to the fact that unlike the single-relay channel, in the ICR there is interference, and hence the relay-destination links should be stronger than in the case of single-relay channel in order for CF to be optimal. This follows since the destination has to be able to decode the relay signal in the presence of interference. An alternative to CF in situations where the relay-destination links are too weak for CF optimality is to use the DF scheme. However, DF has other limitations due to the required source-relay link strength. This is discussed in detail in the next section.

    Furthermore, note that from Theorem \ref{thm:theorem_cf} it follows that when $\alpha\le1$, i.e., when interference is weak, the achievable DMT region obtained with CF at the relay (when each receiver decodes both messages) is an increasing function of $\alpha$, that is, increasing the interference between the communicating pairs enlarges the DMT region. On the other hand, if $\alpha>1$, i.e., if interference is strong, then there are two cases:
    \begin{itemize}
        \item When $\beta\geq\gamma+\alpha$, then \eqref{eq:cf_dmt1} and \eqref{eq:cf_dmt2} do not depend on $\alpha$,  while \eqref{eq:cf_dmt3} increases as $\alpha$ increases.
        We conclude that in this case, the DMT region with CF is enlarged as the interference  becomes stronger. This is because the cooperation links are strong enough to support the required relay information rate and  to facilitate decoding of the interference and of the desired message at each receiver.

        \item  When $\beta<\gamma+\alpha$ and $r_1+r_2\leq\min\{\beta-\alpha,\alpha\}$, then 
          \eqref{eq:cf_dmt1} and \eqref{eq:cf_dmt2} decrease as  $\alpha$ increases, while \eqref{eq:cf_dmt3} does not depend on $\alpha$. Thus, increasing the interference decreases the achievable DMT obtained with the CF strategy. This follows as the SNR at the relay-destination links is not high enough for the relay to convey the information on its received signal to the destinations without substantial compression, and hence, its assistance is limited. Thus, increasing $\alpha$ decreases the DMT.
    \end{itemize}
    Additionally, it can be observed that the achievable DMT region of CF is  a non-decreasing function of $\beta$, which represents the strength of the cooperation links from the relay to the destinations. Thus, better relay-destinations links (i.e.,  better cooperation) improves the DMT performance for the ICR. Similarly, note that the DMT region of CF is a non-decreasing function of $\gamma$, which corresponds to the strength of the source-relay links. Thus, strong source-relay links (i.e., increasing the reliability of the information at the relay) increases the achievable DMT for the ICR as well.

    {\em Maximum achievable diversity gain with CF:} For the Gaussian ICR defined in Section~\ref{sec:Model}, the maximum achievable diversity gain with CF relaying is:
    \begin{equation*}
       \qquad \qquad \qquad d^{\max}_{\mbox{\footnotesize CF}} =
        \begin{cases}
            1+\min\big\{\gamma,(\beta-\alpha)^+\big\} &  \qquad \qquad ,\alpha>1\\
            1+\min\big\{\gamma,(\beta-1)^+\big\} &  \qquad \qquad ,\alpha\le 1.
        \end{cases}
    \end{equation*}
    In \cite{Yuksel:07} it is shown that for the single relay channel with $\gamma=\beta=1$, the maximum achievable diversity gain is $2$ and it is achieved with CF at the relay. Note that in the ICR with $\gamma=1$, and $\alpha\le1$ there are two cases:
     \begin{itemize}
        \item If $\beta\geq 2$, then we have $d^{\max}_{\mbox{\footnotesize CF}}=2$
        \item If $\beta<2$, then we have $d^{\max}_{\mbox{\footnotesize CF}}=\beta<2.$
    \end{itemize}
    Hence, when $\alpha\le1$, although the DMT region of the ICR increases with $\alpha$ (see also the discussion above), the {\em maximal diversity gain} of the ICR does not depend on $\alpha$. Lastly, we note that when $\alpha\le 1$, $\beta<2$, and $\gamma=1$, then the maximal diversity gain of the CF scheme (as implemented in Theorem \ref{thm:theorem_cf}) is smaller than the diversity gain of the relay channel. Thus, interference decreases the diversity gain (subject to joint decoding at the receivers).

    \noindent
    For the scenario with $\gamma=1$ and $\alpha>1$:
     \begin{itemize}
        \item If $\beta\geq 1+\alpha$, then we have $d^{\max}_{\mbox{\footnotesize CF}}=2$
        \item If $\beta<1+\alpha$, then we have $d^{\max}_{\mbox{\footnotesize CF}}=1+(\beta-\alpha)^+\le2.$
    \end{itemize}
    It follows that when $\gamma=1$, $\alpha>1$ and $\beta\geq 1+\alpha$, although the DMT region of the ICR generally increases with $\alpha$, the maximal diversity gain of the ICR does not depend on $\alpha$. For the scenario in which $\gamma=1$, $\alpha>1$, and $\beta< 1+\alpha$, both the maximal diversity gain and the DMT region of the ICR decrease with respect to $\alpha$.

    In conclusion, in the ICR (with $\gamma=1$) the same diversity gain as that of the relay channel is achieved if $\beta\ge\max\{2,\alpha+1\}$. This is due to the fact that unlike the single-relay channel, in the ICR there is interference and therefore, in order to achieve the same diversity gain, the relay-destination links in the ICR should be stronger than that in the single-relay channel to overcome the interference.

    {\em CF relaying can increase the DMT of the IC:} We note that the relay can enlarge the DMT region of the IC in certain regimes:
    The DMT region of the IC without a relay was outer bounded in \cite[Thm. 1]{Leveque:09} 
    by $\min\big\{(1-r_1)^+,(1-r_2)^+\big\}$. This outer bound can be achieved in certain regimes, e.g., in the very strong interference regime, characterized by $\alpha\ge 2$
    \cite[Section VI]{Bolcskei:09}.
    Note that in the scenario considered in Corollary \ref{cor:cf_optimallity}, if we consider $\gamma=1$, then the achievable DMT of ICR is {\em twice the maximum achievable DMT of the IC}. Note that for values of $\gamma>1$, this gap becomes even larger, i.e., the achievable DMT of the ICR becomes strictly larger than twice the maximum achievable DMT of the IC. Furthermore, from Theorem \ref{thm:theorem_cf} it follows that the DMT performance of the ICR is better than that of  the IC also in the scenarios where CF is not DMT optimal:
    for example, when $\beta\ge\max\{\gamma+1,\gamma+\alpha\}$ and
    \begin{eqnarray*}
        &&\min\big\{(1-r_1)^+,(1-r_2)^+\big\}\leq (1-r_1-r_2)^++(\gamma-r_1-r_2)^++(\alpha-r_1-r_2)^+\le\\
        &&\qquad\qquad\qquad\qquad\qquad\qquad\qquad\qquad\qquad\min\big\{(1-r_1)^++(\gamma-r_1)^+,(1-r_2)^++(\gamma-r_2)^+\big\},
    \end{eqnarray*}
    the optimal DMT region of the IC \cite{Bolcskei:09} is $d_{\mbox{\footnotesize Opt-IC}}(r_1,r_2)=\min\big\{(1-r_1)^+,(1-r_2)^+\big\}$, while for the ICR $d_{\mbox{\footnotesize CF}}^-(r_1,r_2)= (1-r_1-r_2)^++(\gamma-r_1-r_2)^++(\alpha-r_1-r_2)^+$, and hence, $d_{\mbox{\footnotesize Opt-IC}}(r_1,r_2)\le d_{\mbox{\footnotesize CF}}^-(r_1,r_2)$. These observations motivate the use of relaying in wireless networks.





\section{Rx-CSI Only at the Relay: Decode-and-Forward}
\label{sec:DMTDF}

\vspace{-0.2 cm}

\subsection{DMT Outer Bound}
\label{sec:OBDF}

\vspace{-0.2 cm}

Since the case with only Rx-CSI at the relay is a special case of Proposition \ref{thm:theorem_cutset},  the outer bound of Proposition \ref{thm:theorem_cutset} is an outer bound also for the  case where both the receivers and the relay have only Rx-CSI. Thus, we have the following proposition:
\vspace{-0.4 cm}
\begin{proposition}
    For the symmetric ICR with Rx-CSI only at the relay, 
    the region $d^+(r_1,r_2)$ defined in Equations \eqref{eq:cut_upper}-\eqref{eq:cut_dmt}
    is an outer bound on the DMT region.
\end{proposition}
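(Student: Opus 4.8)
The plan is to prove this by a \emph{CSI-restriction} (degradedness) argument: removing the Tx-CSI at the relay only shrinks the family of admissible coding schemes, so any DMT outer bound valid for the richer setting of Proposition \ref{thm:theorem_cutset}, in which the relay has both Rx-CSI and Tx-CSI, must remain valid when the relay has Rx-CSI only.

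First I would make the embedding of the two scheme classes precise using Definition \ref{def:code}. In the Rx-CSI-only setting the relay transmits $x_{3,i} = t_i\big(y_{3,1}^{i-1},\th_{3}\big)$, whereas with both Rx-CSI and Tx-CSI it transmits $x_{3,i} = t_i\big(y_{3,1}^{i-1},\th_{3},\th_{3,T}\big)$. Any map of the former type is simply a special case of the latter, namely a relay function that is constant in its last argument $\th_{3,T}$. Hence every coding scheme $\code$ admissible with only Rx-CSI at the relay is also admissible in the richer Rx-CSI-plus-Tx-CSI model, interpreted as a scheme that ignores its Tx-CSI. The crucial point is that this embedding preserves every operational quantity in the DMT definitions: since the relay input process, and therefore the joint distribution of all channel inputs and outputs induced by $\code$, is identical in the two models, the achievable rate region $\mathcal{R}(\uth,\code,\SNR)$, the outage probability $P_\outage(\cdot,\SNR,\code)$, the multiplexing gains $(r_1,r_2)$, and the diversity gain $d(r_1,r_2)$ are unchanged.

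Next I would invoke Proposition \ref{thm:theorem_cutset}, which states that $d^+(r_1,r_2)$ of \eqref{eq:cut_upper}-\eqref{eq:cut_dmt} is an outer bound on the DMT region of the richer model, i.e.\ every scheme there satisfies $d(r_1,r_2)\le d^+(r_1,r_2)$. Applying this to the image of an arbitrary Rx-CSI-only scheme, and using the invariance of $d(r_1,r_2)$ established above, gives $d(r_1,r_2)\le d^+(r_1,r_2)$ for that scheme as well. Since the scheme is arbitrary, $d^+$ bounds the DMT region in the Rx-CSI-only case, proving the proposition. The argument is essentially immediate once the embedding is stated, so there is no substantial obstacle; the only point worth checking is that Proposition \ref{thm:theorem_cutset} bounds \emph{every} scheme in the richer model rather than only CF-type schemes. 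This holds because it is a cut-set bound, which is information-theoretic and scheme-independent, and because, as noted after Proposition \ref{thm:theorem_cutset}, the maximizing inputs are mutually independent zero-mean unit-power complex Normal and hence do not exploit Tx-CSI, so the cut-set expressions are unaffected by the additional relay knowledge.
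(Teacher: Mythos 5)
Your proof is correct and takes essentially the same approach as the paper: the paper's entire argument is the one-sentence observation that the Rx-CSI-only relay is a special case of the relay with both Rx-CSI and Tx-CSI, so the outer bound of Proposition~\ref{thm:theorem_cutset} carries over unchanged. Your write-up merely makes explicit the embedding of the scheme classes and the invariance of the operational quantities (rate region, outage probability, diversity gain), which the paper leaves implicit.
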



\vspace{-0.5 cm}

\subsection{An Achievable DMT Region Via Decode-and-Forward Relaying}
\label{sec:DFCOMMON}
\vspace{-0.2 cm}
We begin with a statement of the achievable DMT region:

\vspace{-0.4 cm}

\begin{theorem}
\label{thm:theorem_df}
Define $d^{-}_{\textrm{DF}}(r_1,r_2) $ as follows:
\begin{equation}
\label{eq:df_achievable_dmt}
d^{-}_{\textrm{DF}}(r_1,r_2) =
\begin{cases}
    \min\big\{d_{\mbox{\tiny IC}}(r_1,r_2)+d_{\mbox{\tiny Relay}}(r_1,r_2),d_{\mbox{\tiny Coop.}}(r_1,r_2)\big\} & \quad r_1+r_2<\gamma\\
    d_{\mbox{\tiny IC}}(r_1,r_2) & \quad r_1+r_2\ge\gamma,
\end{cases}
\end{equation}
\vspace{-0.4 cm}
where
\begin{subequations}
\begin{eqnarray}
    \label{eqn:d_RELAYDF}
    d_{\mbox{\tiny Relay}}(r_1,r_2)&=&\min\Big\{(\gamma-r_1)^+,(\gamma-r_2)^+,2(\gamma-r_1-r_2)^+\Big\}\\
    \label{eqn:d_IC}
    d_{\mbox{\tiny IC}}(r_1,r_2)&=&\min\Big\{(1-r_1)^+,(1-r_2)^+,(1-r_1-r_2)^+ + (\alpha-r_1-r_2)^+\Big\}\\
    d_{\mbox{\tiny Coop.}}(r_1,r_2)&=& \min\Big\{(1-r_1)^++(\beta-r_1)^+,(1-r_2)^++(\beta-r_2)^+,\nonumber\\
    &&\qquad\qquad(1-r_1-r_2)^++(\alpha-r_1-r_2)^++(\beta-r_1-r_2)^+\Big\}.
\end{eqnarray}
\end{subequations}
Then, $d^{-}_{\textrm{DF}}(r_1,r_2)$ is an achievable DMT for the symmetric slow-fading Gaussian ICR.
\end{theorem}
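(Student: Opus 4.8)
The plan is to establish achievability through a block-Markov decode-and-forward scheme in which both sources transmit common messages (the special case of \cite{Sahin:07} referenced above), the relay decodes both messages and forwards them, and each receiver performs backward decoding of its intended message over its incoming direct, cross, and relay-destination links. The diversity gain is obtained by analyzing the SNR exponent of the outage probability $P_\outage$ for this scheme, where an outage is declared whenever the realized channel $\uth$ cannot support the target rates $\big(r_1\log\SNR, r_2\log\SNR\big)$. Throughout I take the channel inputs to be mutually independent, zero-mean, unit-power complex Normal, which is optimal for the mutual-information expressions by the remark following Proposition \ref{thm:theorem_cutset}.

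First I would fix a realization $\uth$ and write the achievable region as an intersection of two families of mutual-information constraints. The first family forces the relay to reliably decode both messages from $Y_3$; since the source-relay links scale as $\SNR^\gamma$ and $Y_3$ does not depend on $X_3$, this is a two-user multiple-access region whose individual-rate and sum-rate constraints have SNR exponents $(\gamma-r_1)^+$, $(\gamma-r_2)^+$, and $2(\gamma-r_1-r_2)^+$, the factor of two arising from the two independent gains $H_{13},H_{23}$ contributing to the sum-rate term; the minimum of these is $d_{\mbox{\tiny Relay}}(r_1,r_2)$. The second family is the destination decoding. When the relay forwards, message $m_1$ reaches $\Rgood$ over two independent paths, the direct link (exponent $1$) and the relay-destination link (exponent $\beta$), in the presence of the cross link (exponent $\alpha$); the joint-typicality error events for decoding $m_1$ (wrong $m_1$ with correct or with incorrect $m_2$) yield the individual term $(1-r_1)^++(\beta-r_1)^+$ and the sum term $(1-r_1-r_2)^++(\alpha-r_1-r_2)^++(\beta-r_1-r_2)^+$, whose minimum, together with the symmetric term at $\Rbad$, is $d_{\mbox{\tiny Coop.}}(r_1,r_2)$. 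When the relay is silent, the term in $\SNR^\beta$ drops out of every expression and the same computation collapses to $d_{\mbox{\tiny IC}}(r_1,r_2)$.

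Next I would perform the outage analysis by conditioning on the relay-decoding event. Let $\mathcal{E}_{\mbox{\tiny relay}}$ be the event that the source-relay MAC cannot support $(r_1,r_2)$, and let $\mathcal{E}_{\mbox{\tiny coop}}$ and $\mathcal{E}_{\mbox{\tiny IC}}$ be the corresponding failure events for cooperative and interference-channel destination decoding. The scheme is adaptive: on $\overline{\mathcal{E}_{\mbox{\tiny relay}}}$ the relay forwards and outage requires the cooperative region to fail, while on $\mathcal{E}_{\mbox{\tiny relay}}$ the relay stays silent and the system falls back to interference-channel operation, so that $P_\outage \le \Pr\!\big(\mathcal{E}_{\mbox{\tiny coop}}\big) + \Pr\!\big(\mathcal{E}_{\mbox{\tiny relay}}\big)\Pr\!\big(\mathcal{E}_{\mbox{\tiny IC}}\big)$, where the product in the second term uses the independence of $(H_{13},H_{23})$ from the direct and cross gains. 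Evaluating the exponents gives $P_\outage \dot{\le} \SNR^{-d_{\mbox{\tiny Coop.}}} + \SNR^{-(d_{\mbox{\tiny Relay}}+d_{\mbox{\tiny IC}})}$, hence diversity $\min\{d_{\mbox{\tiny Coop.}},\, d_{\mbox{\tiny IC}}+d_{\mbox{\tiny Relay}}\}$, matching the first case of \eqref{eq:df_achievable_dmt}. For $r_1+r_2\ge\gamma$ the term $2(\gamma-r_1-r_2)^+$ vanishes, so $d_{\mbox{\tiny Relay}}=0$; since each term of $d_{\mbox{\tiny Coop.}}$ dominates the matching term of $d_{\mbox{\tiny IC}}$ we have $d_{\mbox{\tiny Coop.}}\ge d_{\mbox{\tiny IC}}$, and the bound collapses to $d_{\mbox{\tiny IC}}(r_1,r_2)$, matching the second case.

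The main obstacle I anticipate is twofold. The computational part is evaluating each exponent: for a log-of-polynomial expression in $\SNR$ with Rayleigh gains $|H_{kl}|^2\doteq\SNR^{-\theta_{kl}}$, one must minimize $\sum_{k,l}\theta_{kl}$ over the nonnegative exponents in the region where the target rate is not supported, done separately for the relay-MAC, cooperative, and interference-channel constraints while correctly tracking the interference contributions. The more delicate, conceptual part is justifying the outage decomposition itself: one must verify that the adaptive backward-decoding scheme indeed achieves the cooperative region on $\overline{\mathcal{E}_{\mbox{\tiny relay}}}$ and the interference-channel region on $\mathcal{E}_{\mbox{\tiny relay}}$ (for instance, via the relay remaining silent on its failure event and each receiver employing a decoder robust to both relay states), and that the error propagation inherent to block-Markov coding does not degrade the exponent. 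Making this decomposition rigorous, together with the per-term exponent computations, constitutes the detailed argument.
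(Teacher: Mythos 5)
Your proposal is correct and follows essentially the same route as the paper's proof in Appendix~\ref{app:DFProof}: condition on the relay-decoding event, bound the source--relay MAC outage exponent by $d_{\mbox{\tiny Relay}}$, use the cooperative joint-decoding region (exponent $d_{\mbox{\tiny Coop.}}$) when the relay succeeds and the IC region (exponent $d_{\mbox{\tiny IC}}$) when it stays silent, and combine via $P_\outage \dot{\le} \SNR^{-d_{\mbox{\tiny Coop.}}}+\SNR^{-(d_{\mbox{\tiny Relay}}+d_{\mbox{\tiny IC}})}$. The only cosmetic difference is that the paper handles the case $r_1+r_2\ge\gamma$ by noting $\Pr(\outage_R)\doteq 1$ there, whereas you argue $d_{\mbox{\tiny Relay}}=0$ and $d_{\mbox{\tiny Coop.}}\ge d_{\mbox{\tiny IC}}$ termwise; both yield the same conclusion.
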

\begin{proof}
The proof is provided in Appendix \ref{app:DFProof}.
\end{proof}

\begin{corollary}
\label{cor:df_optimal_con}
Consider the symmetric slow-fading Gaussian ICR as defined in Section \ref{sec:Model}.
If the following conditions are satisfied:
\begin{subequations}
\label{eq:df_optimal_con}
\begin{eqnarray}
    \label{eq:df_optimal_con2}&&r_1+r_2\leq \gamma\\
    \label{eq:df_optimal_con2-2}&&\max\big\{(\gamma-r_1)^+,(\gamma-r_2)^+\big\}\leq 2(\gamma-r_1-r_2)^+\\
    \label{eq:df_optimal_con3}&&\max\big\{(1-r_1),(1-r_2)\big\}\leq (1-r_1-r_2)^++(\alpha-r_1-r_2)^+\\
    \label{eq:df_optimal_con4}&&\max\big\{(1-r_1)+(\beta-r_1)^+,(1-r_2)+(\beta-r_2)^+\big\}\leq (1-r_1-r_2)^+ \nonumber\\ &&
    \qquad\qquad\qquad\qquad\qquad\qquad\qquad\qquad\qquad\qquad+(\alpha-r_1-r_2)^++(\beta-r_1-r_2)^+,
\end{eqnarray}
\end{subequations}
then the optimal DMT is
\vspace{-0.2 cm}
\begin{eqnarray}
    d_{\footnotesize Opt-DF}(r_1,r_2) &=& \min\big\{(1-r_1)^++(\gamma-r_1)^+,(1-r_2)^++(\gamma-r_2)^+,\nonumber\\
    &&\label{eq:df_optimal_dmt}\qquad\qquad(1-r_1)^++(\beta-r_1)^+,(1-r_2)^++(\beta-r_2)^+\big\},\vspace{-0.3 cm}
\end{eqnarray}
and it is achieved with DF at the relay.
\end{corollary}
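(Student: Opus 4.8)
The plan is to show that, under the stated conditions, the achievable DF region of Theorem \ref{thm:theorem_df} coincides with the cut-set outer bound of Proposition \ref{thm:theorem_cutset}. The key observation that drives everything is that the target expression $d_{\footnotesize Opt-DF}(r_1,r_2)$ in \eqref{eq:df_optimal_dmt} is \emph{exactly} the minimum of the four terms $d_1^+,d_2^+,d_3^+,d_4^+$ that define the outer bound $d^+(r_1,r_2)$ in \eqref{eq:cut_upper}--\eqref{eq:cut_dmt} (the two lists of four summands agree after reordering). Since every achievable DMT is upper bounded by $d^+(r_1,r_2)$, and $d^{-}_{\textrm{DF}}(r_1,r_2)$ is achievable, we always have $d^{-}_{\textrm{DF}}\le d^+=d_{\footnotesize Opt-DF}$; hence it suffices to prove the reverse inequality $d^{-}_{\textrm{DF}}(r_1,r_2)=d_{\footnotesize Opt-DF}(r_1,r_2)$ whenever \eqref{eq:df_optimal_con} holds. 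Equality with the outer bound then certifies optimality of DF.

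First I would use condition \eqref{eq:df_optimal_con2}, $r_1+r_2\le\gamma$, to select the first branch of \eqref{eq:df_achievable_dmt}, so that $d^{-}_{\textrm{DF}}=\min\{d_{\mbox{\tiny IC}}+d_{\mbox{\tiny Relay}},d_{\mbox{\tiny Coop.}}\}$; together the conditions confine $(r_1,r_2)$ to the interior of this regime, away from the branch boundary. Next I would invoke the three remaining conditions to discard the joint (sum-rate) terms in each constituent minimum. Condition \eqref{eq:df_optimal_con2-2} forces $2(\gamma-r_1-r_2)^+$ to dominate in $d_{\mbox{\tiny Relay}}$, so $d_{\mbox{\tiny Relay}}=\min\{(\gamma-r_1)^+,(\gamma-r_2)^+\}$; condition \eqref{eq:df_optimal_con3} forces $(1-r_1-r_2)^++(\alpha-r_1-r_2)^+$ to dominate in $d_{\mbox{\tiny IC}}$, so $d_{\mbox{\tiny IC}}=\min\{(1-r_1)^+,(1-r_2)^+\}$; and condition \eqref{eq:df_optimal_con4} forces the triple-sum term to dominate in $d_{\mbox{\tiny Coop.}}$, so $d_{\mbox{\tiny Coop.}}=\min\{(1-r_1)^++(\beta-r_1)^+,(1-r_2)^++(\beta-r_2)^+\}$.

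To finish, I would use the symmetry of the model and assume without loss of generality $r_1\le r_2$. Each reduced minimum is then attained at the $r_2$-index, giving $d_{\mbox{\tiny IC}}=(1-r_2)^+$, $d_{\mbox{\tiny Relay}}=(\gamma-r_2)^+$, and $d_{\mbox{\tiny Coop.}}=(1-r_2)^++(\beta-r_2)^+$. Hence $d_{\mbox{\tiny IC}}+d_{\mbox{\tiny Relay}}=(1-r_2)^++(\gamma-r_2)^+$ and $\min\{d_{\mbox{\tiny IC}}+d_{\mbox{\tiny Relay}},d_{\mbox{\tiny Coop.}}\}=(1-r_2)^++\min\{(\gamma-r_2)^+,(\beta-r_2)^+\}$. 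For $r_1\le r_2$ the four-term minimum in \eqref{eq:df_optimal_dmt} likewise collapses (each $r_1$-summand dominates its $r_2$-counterpart) to $(1-r_2)^++\min\{(\gamma-r_2)^+,(\beta-r_2)^+\}$, so the two expressions agree and the corollary follows.

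The step I expect to be the main obstacle is the index-matching in the last paragraph. The quantity $d_{\mbox{\tiny IC}}+d_{\mbox{\tiny Relay}}$ is a \emph{sum of two separate minima}, and it equals a single summand of $d_{\footnotesize Opt-DF}$ only because conditions \eqref{eq:df_optimal_con2-2} and \eqref{eq:df_optimal_con3} jointly force both minima to be attained at the \emph{same} user index; without them the sum of the minima would be strictly smaller than the minimum of the per-user sums, and the match with the outer bound would break. Carefully verifying this alignment, and checking that the positive-part operators behave consistently throughout the feasible range $0\le r_1,r_2\le 1$ cut out by \eqref{eq:df_optimal_con}, is where essentially all of the bookkeeping is concentrated.
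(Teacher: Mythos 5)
Your proposal is correct and follows the same route as the paper: the paper's proof of this corollary is a one-line assertion that under \eqref{eq:df_optimal_con} the achievable region \eqref{eq:df_achievable_dmt} of Theorem \ref{thm:theorem_df} coincides with the cut-set outer bound \eqref{eq:cut_upper}, and your argument simply supplies the verification (each condition discards the corresponding sum-rate term, after which the per-user minima align). The only small quibble is in your closing remark: the alignment of the two minima at the same user index is automatic from the monotonicity of $(1-r)^+$ and $(\gamma-r)^+$ in $r$ once the sum-rate terms are removed, so conditions \eqref{eq:df_optimal_con2-2} and \eqref{eq:df_optimal_con3} are needed only for that removal, not for the index-matching itself.
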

\begin{proof}
The proof is based on Theorem \ref{thm:theorem_df}. Note that when \eqref{eq:df_optimal_con} is satisfied, then \eqref{eq:df_achievable_dmt} coincides with \eqref{eq:cut_upper}, characterizing the optimal DMT for the Gaussian ICR.
\end{proof}

\subsection{Discussion}
\label{com:discom2}

{\em The physical meaning of conditions \eqref{eq:df_optimal_con}:} There are four conditions in Corollary \ref{cor:df_optimal_con}. These conditions correspond to different requirements on the multiplexing gains and on the SNR exponents in the channel. The conditions can be interpreted as follows:
\begin{itemize}
    \item The first condition in \eqref{eq:df_optimal_con2} requires that the multiplexing gains be small enough such that jointly decoding both messages at the relay does not result in an outage, otherwise, DF is not useful and the ICR scales back to the IC, see \cite{Bolcskei:09}.
    \item The second condition in \eqref{eq:df_optimal_con2-2} requires that joint decoding at the relay does not limit the individual rates at the relay.
    \item The third condition in \eqref{eq:df_optimal_con3} requires that when the relay cannot help (i.e., is in outage), then jointly decoding both the desired message and the interference at the destination does not constrain the rate of the desired message.
    \item The fourth condition in \eqref{eq:df_optimal_con4} requires that given a successful decoding at the relay node, jointly decoding both the desired message and the interference at the destination does not constrain the rate of the desired message.
\end{itemize}
We obtain that the sum of \eqref{eq:df_optimal_con2-2} and \eqref{eq:df_optimal_con3} corresponds to the outage probability at the destination when there is outage at the relay, and \eqref{eq:df_optimal_con4} corresponds to the outage probability at the destination when the relay is not in outage. Thus, \eqref{eq:df_optimal_con2-2} combined with \eqref{eq:df_optimal_con3} guarantee DMT optimality when the relay is in outage, and \eqref{eq:df_optimal_con4} guarantees DMT optimality when the relay is not in outage.

{\em On the optimality of DF:} From Theorem \ref{thm:theorem_df} we observe that the achievable DMT region obtained with DF (when each receiver decodes both messages) is monotonically non-decreasing as $\alpha$, $\beta$, and $\gamma$ increase, i.e., the DMT performance of the DF scheme improves as either interference, $\alpha$, or cooperation (either the relay-destination links, $\beta$, or the source-relay links, $\gamma$) become stronger, or it does not change when they increase. This is since increasing $\alpha$ facilitates joint decoding at the destinations. This is contrary to CF for which there are regimes of $\alpha$ and $\beta$ in which increasing the interference decreases the DMT performance.

    {\em On the equivalence to the DMT of two parallel relay channels:} Setting $r_1=r_2=r$, we note that under conditions \eqref{eq:df_optimal_con} each Tx-Rx pair achieves a DMT of $\min\big\{(1-r)+(\gamma-r_1)^+,(1-r)+(\beta-r)^+\big\}$, which coincides with the
     DMT upper bound for the relay channel \eqref{eq:relayDMTUB}. Thus, the optimal DMT \eqref{eq:df_optimal_dmt} in this case corresponds to the optimal DMT of two interference-free
     parallel relay channels. We conclude that the DF strategy can be {\em DMT optimal for both communicating pairs simultaneously}. Note that with CF, DMT optimality was shown only for $\beta \ge \max\{\gamma+1,\gamma+\alpha\}$; but with DF, optimality is achieved for any value of $\beta$ satisfying \eqref{eq:df_optimal_con}, which represents a wider range of values than that for CF. The DMT optimality of DF for different scenarios is demonstrated in Figures \ref{fig:plots123}-\ref{fig:plots891011} in the next section. For example, observe in Fig. \ref{fig:plot211} that for $\beta=1$ DF is DMT optimal for some multiplexing gains (e.g. $r_1=r_2\le \frac{1}{3}$) while CF is suboptimal for all values of multiplexing gains. 

{\em Maximum achievable diversity gain with DF:} The maximum diversity gain achieved by the DF scheme is $d^{\mbox{\scriptsize max}}_{\textrm{DF}}\!=\!\min\{\gamma+1, \beta+1\}$. Compared with the relay channel whose maximum diversity gain is $2$, we conclude that the~DF scheme achieves for each pair the {\em maximum diversity} gain of the relay channel, as long as $\min\{\beta,\gamma\}\geq 1$. Observe that this diversity gain is obtained for both pairs simultaneously, using only {\em a single}~relay.

{\em DF relaying can increase the DMT of the IC:} When DF is DMT optimal, its DMT region outer bounds the optimal DMT region of the IC (for the same $\alpha$). Note that this conclusion holds for any value of $\beta>0$. Moreover, there are scenarios in which the achievable DMT for the ICR with DF is strictly larger than the optimal DMT of the IC even when DF is not optimal. One such example is when \eqref{eq:df_optimal_con2} and \eqref{eq:df_optimal_con4} are satisfied, while \eqref{eq:df_optimal_con3} is not satisfied. For example, when $r_1=r_2=0.4$, $\alpha=1.8$, $\beta=1$, and $\gamma=1$  then for the ICR we have a diversity gain of $1$, while for the IC the achievable diversity gain is {\em upper-bounded} by $0.6$.

\section{Amplify-and-Forward at the Relay}
\label{sec:DMTAF}
In this section we study scenarios in which the relay node uses the AF scheme. We first consider a relay operating in the full-duplex mode and then study
relaying subject to a half-duplex constraint. For each mode we propose a transmission scheme and evaluate its DMT performance.
In addition to our standard Rx-CSI assumption, we assume that each receiver knows the Rx-CSI at the relay. This can be done by sending the Rx-CSI at the relay to the receivers with a negligible rate cost, as the channel is constant during the transmission of the entire codeword. For simplicity we consider only scenarios in which $\gamma=1$.

\subsection{An Outer Bound on the DMT Region}
\vspace{-0.2 cm}
Note that the DMT region $d^{+}(r_1,r_2)$ defined in \eqref{eq:cut_upper}-\eqref{eq:cut_dmt} was derived when each receiver has CSI on all links in the channel (see Appendix \ref{app:CUTSET} for details). Therefore, $d^+(r_1,r_2)$ is an outer bound on the DMT region of the ICR with a full-duplex relay, and the ICR with a half-duplex relay.

\subsection{An Achievable DMT Region Via Full-Duplex Amplify-and-Forward Relaying}
An achievable DMT region for the ICR with a full-duplex relay employing the AF strategy is stated in the following theorem:
\begin{theorem}
\label{thm:theorem_af}
Let $d^{-}_{\textrm{AF}_{FD}}(r_1,r_2)$ be defined as follows:
\begin{equation}
    \label{eq:af_achievable_dmt}
    d_{\textrm{AF}_{FD}}^{-}(r_1,r_2)=
    \begin{cases}
        \!\min\Big\{\!(1-r_1)^+,(1-r_2)^+,(\alpha-1-r_1)^+,(\alpha-1-r_2)^+\!\Big\}  &   \hspace{-1.3 cm}\! ,\beta<1\\
        \!\min\Big\{\!(2-\beta-r_1)^++(\beta-1-r_1)^+,(2-\beta-r_2)^++(\beta-1-r_2)^+,\\
        \hspace{3 cm}(\alpha-\beta-r_1)^+,(\alpha-\beta-r_2)^+\!\Big\}  & \hspace{-2.1 cm}\!  ,1\le\beta<2\\
        \!\min\Big\{\!(1-r_1)^+,(1-r_2)^+,(\alpha-\beta-r_1)^+,(\alpha-\beta-r_2)^+\!\Big\} & \! \hspace{-1.3 cm} ,\beta\ge2
    \end{cases}
\end{equation}
The DMT region $d^{-}_{\textrm{AF}_{FD}}(r_1,r_2)$ is achievable  for the symmetric slow-fading Gaussian ICR.
\end{theorem}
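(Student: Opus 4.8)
The plan is to make the full-duplex AF scheme explicit, collapse the channel into an effective end-to-end model for each pair, and then extract the diversity via the high-$\SNR$ outage-exponent method of \cite{Zheng:03}. Since the relay knows its Rx-CSI $\tH_3=(H_{13},H_{23})$, I would have it transmit $X_{3,i}=b\,Y_{3,i-1}$ with $b^2=\big(\SNR|H_{13}|^2+\SNR|H_{23}|^2+1\big)^{-1}$, which guarantees $\E\{|X_{3,i}|^2\}\le 1$ for $\gamma=1$ and satisfies $b^2\doteq\SNR^{-1}$ for generic fading. Substituting $Y_{3,i-1}$ from \eqref{eq:channel_model_3} into \eqref{eq:channel_model_1}-\eqref{eq:channel_model_2} turns each received signal into a two-tap, one-symbol-delay channel: a direct tap $\sqrt{\SNR}H_{kk}$ on the current desired symbol, a relayed tap of order $\SNR^{\beta/2}$ acting on the delayed desired and interfering symbols through the cascade gains $H_{3k}H_{13}$ and $H_{3k}H_{23}$, the receiver noise $Z_{k,i}$, and an amplified relay-noise term $\sqrt{\SNR^\beta}H_{3k}b\,Z_{3,i-1}$ of power $\doteq\SNR^{\beta-1}$.

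Because each receiver is given $(\tH_k,\tH_3)$, it can build the effective channel, and I would let it perform successive decoding: first decode the interfering codeword while treating its own signal as noise, cancel it, and then decode the desired codeword. This is the natural choice, and it is consistent with the stated region being increasing in $\alpha$ (treating interference as noise would instead make the diversity decrease in $\alpha$). I would then split along the three $\beta$-regimes according to the order of the amplified-noise power $\SNR^{\beta-1}$, the receiver-noise power $\SNR^0$, the direct-signal power $\SNR$, and the relayed-signal power $\SNR^\beta$. For $\beta<1$ the amplified noise is negligible and the relayed signal is the weaker one, so the desired link reduces to the direct SISO link with SINR $\doteq\SNR$ and exponent $(1-r_k)^+$, while the interference, arriving with power $\SNR^\alpha$ against desired power $\SNR$, is decodable with exponent $(\alpha-1-r_k)^+$. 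For $\beta\ge2$ the relayed signal dominates and its SINR against the amplified noise is $\SNR^\beta/\SNR^{\beta-1}\doteq\SNR$, again giving $(1-r_k)^+$, whereas the interference now competes with the boosted desired power $\SNR^\beta$, yielding $(\alpha-\beta-r_k)^+$. The regime $1\le\beta<2$ is the delicate one: the direct path, whose SINR is degraded by the amplified noise to order $\SNR^{2-\beta}$, and the relayed path both contribute, and their combination produces the sum exponent $(2-\beta-r_k)^+ +(\beta-1-r_k)^+$, with the interference term $(\alpha-\beta-r_k)^+$ as before.

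In each regime I would parametrize the fading by $|H_{kl}|^2\doteq\SNR^{-\theta_{kl}}$ with $\theta_{kl}\ge0$, write each pair's achievable rate as the logarithm of one plus the relevant effective SINR, and identify the outage region as the set of exponent vectors $\{\theta_{kl}\}$ on which some target rate $r\log\SNR$ cannot be supported at one of the decoding steps. The diversity of each individual outage event is $\inf\sum_{k,l}\theta_{kl}$ over that region, evaluated by the Laplace argument of \cite{Zheng:03}. Since an error occurs when any decoding step at either receiver fails, the overall exponent is the minimum over the desired-message and interference-decoding events of both pairs; by the symmetry of the model this collapses to the four-way minimum in \eqref{eq:af_achievable_dmt}. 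A final verification confirms that the relay power constraint holds on average.

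The main obstacle is the intermediate regime $1\le\beta<2$. There the amplified relay noise and the relayed desired signal both pass through the same relay-destination fading $H_{3k}$, so in the relayed-branch SINR the factor $|H_{3k}|^2$ does not cleanly cancel, and whether the receiver noise or the amplified noise dominates switches with the realization of $(\theta_{31},\theta_{13},\theta_{23})$. Partitioning the exponent space along these switches and showing that the infimum of $\sum_{k,l}\theta_{kl}$ over the combined outage region equals the sum $(2-\beta-r_k)^+ +(\beta-1-r_k)^+$, rather than a single $(\cdot)^+$ term, is the heart of the calculation. The one-symbol relay delay also makes the exact channel an ISI channel; since only SNR exponents matter for the DMT, I would handle this by sandwiching its mutual information between the single-tap and flattened two-tap approximations, which share the same exponents.
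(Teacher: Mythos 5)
Your overall strategy -- AF with a one-unit relaying delay, successive decoding of the interference followed by cancellation and decoding of the desired message, and a three-way split on $\beta$ driven by the orders $\SNR^{\beta-1}$, $\SNR^0$, $\SNR$, $\SNR^\beta$ -- parallels the paper's proof, which uses a block-level AF scheme (the relay forwards the previous \emph{block}, and each receiver combines two consecutive blocks D-BLAST-style) rather than your symbol-level ISI formulation. The decoding order and the identification of which term dominates in each regime match the paper's Appendix D, so the skeleton is right. However, two points in your plan are genuine problems rather than routine details.

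First, your relay gain $b^2=\big(\SNR|H_{13}|^2+\SNR|H_{23}|^2+1\big)^{-1}$ satisfies $b^2\doteq\SNR^{-1}$ only on \emph{typical} fading realizations, but the DMT is determined by the atypical ones: writing $|H_{kl}|^2\doteq\SNR^{-\theta_{kl}}$, you get $b^2\doteq\SNR^{-(1-\min\{\theta_{13},\theta_{23}\})^+}$, so on outage-relevant events the amplified-noise power is not $\SNR^{\beta-1-\theta_{31}}$ and the relayed-signal power is not $\SNR^{\beta-\theta_{31}-\theta_{13}}$; both exponents acquire an extra $\min\{\theta_{13},\theta_{23}\}$ that couples the branches and invalidates the regime boundaries as you stated them. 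The paper sidesteps this entirely by fixing a deterministic gain $G_R^2=(1+2\SNR)^{-1}\doteq\SNR^{-1}$, which still meets the power constraint since $\theta_{kl}\ge0$; you should do the same. Second, the middle regime $1\le\beta<2$ is asserted rather than derived, and the assertion is not even consistent with your own scheme: the reason the relayed branch contributes only $(\beta-1-r_k)^+$ in the paper is that the observation carrying the delayed desired symbol also contains the \emph{next} desired symbol on the direct tap, which cannot be cancelled (it is decoded later) and therefore sits in the denominator with power $\SNR$, capping that branch's SINR at $\SNR^{\beta-1}$. A genuine joint-ML decoding of your two-tap ISI channel has no such self-interference and yields a second term of the form $(1-r_k)^+$ instead of $(\beta-1-r_k)^+$ (which still proves the theorem, but is not what you wrote), whereas your ``flattened two-tap'' sandwich as described does not pin down which of the two you are computing. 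You need to either commit to the sequential structure and account for the un-cancellable next-symbol term explicitly, as the paper does in deriving its bound on $I\big(X_1^{(b)};\tilde Y_1^{(b)},\tilde Y_1^{(b+1)}\big)$, or commit to the ISI formulation and redo the exponent minimization; as written, the key step of the middle regime is missing.
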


\begin{proof}
The proof is provided in Appendix \ref{AFFDProof}.
\end{proof}

\subsection{An Achievable DMT Region Via Half-Duplex Amplify-and-Forward Relaying}
The noise amplification issue observed when the relay employs FD-AF at the relay motivates the consideration of half-duplex relay operation, with the goal of limiting the noise amplification,
 and thereby potentially increasing the diversity gain.
 In this section, we consider HD-AF relaying in which each receiver jointly decodes both its desired message and the interfering message. The corresponding DMT region, denoted by $d^{-}_{\textrm{AF}_{HD}}(r_1,r_2)$, is given in the following theorem:

\vspace{-0.4cm}
\begin{theorem}
\label{thm:theoremafjd}
Let $d^{-}_{\textrm{AF}_{HD}}(r_1,r_2)$ be defined as follows:
\begin{eqnarray}
\label{eq:HDAFDMT1}
\!\!\!\!\!\!d^{-}_{\textrm{AF}_{HD}}(r_1,r_2) \!= \!\begin{cases}
\!\min\!\bigg\{\!(1-r_1)^+\! +(\beta-2r_1)^+,(1-r_2)^++(\beta-2r_2)^+,&\\
\qquad\quad (1-r_1-r_2)^+ \!+(\alpha-r_1-r_2)^++(\beta-2r_1-2r_2)^+\!\bigg\}  &\!\!\! ,\beta\!\le\! 1\\
\!\min\!\bigg\{\!\max\left\{2(1-2r_1)^+,(1-2r_1)^++(\frac{3-\beta}{2}-r_1)^+\right\},&\\
    \quad\quad\max\left\{2(1-2r_2)^+,(1-2r_2)^++(\frac{3-\beta}{2}-r_2)^+\right\},&\\
\quad\quad(\frac{3-\beta}{2}-r_1-r_2)^+\!+\!(\frac{2\alpha+1-\beta}{2}-r_1-r_2)^+\!+\!(1-2r_1-2r_2)^+ \!\bigg\}  &\!\!\! ,\beta\!> \! 1
\end{cases}
\end{eqnarray}
The DMT region $d^{-}_{\textrm{AF}_{HD,JD}}(r_1,r_2)$ is achievable for the symmetric slow-fading Gaussian ICR.
\end{theorem}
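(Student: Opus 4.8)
The plan is to reduce each receiver to an effective two-slot vector channel induced by the half-duplex protocol, read off the joint-decoding (MAC) rate region, and then run the standard DMT outage computation. First I would fix the protocol: partition the block into two equal phases; in the first phase $\Tgood$ and $\Tbad$ transmit while the relay and both receivers listen, so that the relay observes $Y_3 = \sqrt{\SNR}\,(H_{13}X_1 + H_{23}X_2) + Z_3$ (using $\gamma=1$); in the second phase the relay forwards $X_3 = b\,Y_3$ with $b$ chosen to meet the unit-power constraint. The key point, which drives the whole analysis, is that $b$ is \emph{realization-dependent}: $b^2 \doteq \SNR^{-(1-\min(v_{13},v_{23}))^+}$ where I parametrize $|H_{ij}|^2 \doteq \SNR^{-v_{ij}}$, so $b^2\doteq\SNR^{-1}$ generically but degrades when the source-relay links are deeply faded. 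Stacking the two-phase observations of receiver $k$ into a vector $\yvec_k$, I would collect the contributions of $X_1$, $X_2$, the forwarded signal, and the amplified relay noise $Z_3$. Because $Z_3$ is amplified and re-transmitted, the effective noise at the receiver is \emph{colored}, with dominant covariance entry $1 + \SNR^{\beta}b^2|H_{3k}|^2 \doteq 1 + \SNR^{\beta-1}|H_{3k}|^2$; this term is exactly what triggers the regime split at $\beta=1$.

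Next, since each receiver jointly decodes $(X_1,X_2)$, I would treat it as a multiple-access channel and write its rate region through the three constraints $R_1 \le I(X_1;\yvec_k\,|\,X_2)$, $R_2 \le I(X_2;\yvec_k\,|\,X_1)$, and $R_1+R_2 \le I(X_1,X_2;\yvec_k)$, evaluated with i.i.d.\ $\CN(0,1)$ inputs after whitening the colored noise; each mutual information is a $\log\det$ of the whitened effective channel. By symmetry of the model, $\mathcal{R}$ is the intersection of the two receivers' regions, so it suffices to work out one receiver and symmetrize. I would then substitute the per-realization SNR exponents: the direct links contribute exponent $1-v_{kk}$ in each phase, the cross link contributes $\alpha-v_{jk}$, and the cascaded relay path contributes the end-to-end AF exponent obtained from signal power $\doteq \SNR^{\beta+1}b^2|H_{3k}|^2|H_{k3}|^2$ divided by the effective noise above (and analogously with $|H_{k3}|^2$ replaced by $|H_{j3}|^2$ for the interfering stream).

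With the region in hand, I would translate outage into the fading exponents. Using $\Pr(v_{ij}>a)\doteq\SNR^{-a^+}$, each MAC constraint turns into a linear inequality among the $v_{ij}$ and the multiplexing gains, where the half-duplex pre-log $\tfrac12$ converts the target $r_k\log\SNR$ into a threshold $2r_k$ (this is the origin of the $2r_k$ terms in \eqref{eq:HDAFDMT1}). The diversity for each constraint is then the minimum of $\sum_{ij}v_{ij}$ over the outage region, a standard dominant-exponent (Laplace-principle) optimization; taking the minimum over the three constraints and over the two receivers should reproduce the claimed region. The split into $\beta\le1$ and $\beta>1$ follows from whether the amplified relay noise $\SNR^{\beta-1}|H_{3k}|^2$ is negligible or dominant; in the latter regime the cascaded AF exponents collapse to the values $\tfrac{3-\beta}{2}$ and $\tfrac{2\alpha+1-\beta}{2}$, and the competition between keeping the relay look and relying on two independent direct-strength looks (each of exponent-one type, since $\gamma=1$ bottlenecks the relay path) produces the $\max\{\cdot,\cdot\}$ in the individual-rate constraints.

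I expect the main obstacle to be the correct treatment of the realization-dependent amplification $b$ together with the colored, channel-dependent noise it produces. Unlike DF or CF, the effective noise covariance and the signal gains here share the random factors $|H_{3k}|^2$ and $|H_{k3}|^2$, so the SNR exponents do not decouple and the cascaded SNR must be analyzed piecewise over the relative magnitudes of $v_{13},v_{23},v_{31}$ and $\beta$. Carrying the $\log\det$ expressions through the dominant-exponent optimization so as to obtain \emph{exactly} the exponents $\tfrac{3-\beta}{2}$ and $\tfrac{2\alpha+1-\beta}{2}$ and the $\max$ structure, while verifying continuity at $\beta=1$ against the $\beta\le1$ branch, is the technically delicate heart of the argument.
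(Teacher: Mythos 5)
Your overall architecture (a two-slot effective vector channel per receiver, joint decoding, dominant-exponent outage computation, and a regime split at $\beta=1$ driven by the amplified relay noise $\SNR^{\beta-1}|H_{3k}|^2$) matches the paper's, which follows Azarian et al.\ by grouping channel uses into ``double-symbols'' and bounding pairwise error probabilities of an ML decoder rather than writing mutual-information outage constraints. However, there is one step in your plan that would fail to reproduce the theorem. You propose to impose the full MAC region at each receiver, i.e.\ all three constraints $R_1\le I(X_1;\yvec_k|X_2)$, $R_2\le I(X_2;\yvec_k|X_1)$, $R_1+R_2\le I(X_1,X_2;\yvec_k)$, and then intersect the two receivers' regions. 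The constraint $R_2\le I(X_2;\yvec_1|X_1)$ at $\Rgood$ corresponds to the error event in which \emph{only the interfering message} is decoded incorrectly; since $\Rgood$ does not need $m_2$, this event is not an outage for pair $1$ and must be discarded (the paper does exactly this: the event $\mathcal{E}_{\{2\}}$ ``does not constrain the achievable DMT region at Rx$_1$''). Keeping it would add terms of the form $(\alpha-\cdots-2r_2)^+$ to the minimum, which for weak interference ($\alpha<1$) strictly shrinks the region below \eqref{eq:HDAFDMT1}; the stated formula contains only one individual term per pair (built on the direct-link exponent $1$) plus the joint term.

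Two further points put your route at risk of landing on a different formula. First, the paper deliberately fixes the relay gain at $G_R^2=\frac{1}{1+2\SNR}\doteq\SNR^{-1}$, independent of the realization, precisely because the realization-dependent gain $b^2\doteq\SNR^{-(1-\min(v_{13},v_{23}))^+}$ you propose couples the relayed-path signal and noise exponents to $\min(v_{13},v_{23})$ and makes the minimization intractable; with the fixed gain the relayed signal exponent is simply $\beta-\theta_{31}-\theta_{13}$ and the forwarded-noise exponent is $\beta-\theta_{31}-1$, which is what produces exactly $\tfrac{3-\beta}{2}$ and $\tfrac{2\alpha+1-\beta}{2}$. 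Your scheme is still admissible, but you would be proving achievability of a different (not obviously larger) expression, not the stated one. Second, your protocol description and your attribution of the $2r_k$ thresholds to a ``half-duplex pre-log $\tfrac12$'' suggest the sources may be silent while the relay transmits. In the paper's scheme the sources send \emph{new} symbols in every channel use; the $2r_k$ arises only from counting $\SNR^{\frac{n}{2}\cdot 2r_k}$ codewords per double-symbol in the union bound, while the product of the two direct-link diagonal entries contributes $\SNR^{2-2\theta_{11}}$ to the determinant and hence the term $(1-r_1)^+$ (not $(1-2r_1)^+$) in the $\beta\le1$ branch. An orthogonal two-phase protocol would degrade that term and miss the theorem.
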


\vspace{-0.3 cm}

\begin{proof}
The proof is provided in Appendix \ref{app:AFHDProof}.
\end{proof}

\subsection{Discussion}
    {\em On the equivalence to the DMT of two parallel relay channels:} From Theorem \ref{thm:theoremafjd} it follows that when $\beta=1$ and $\alpha\ge2$, i.e., when interference is very strong, then the achievable DMT is
    \begin{eqnarray*}
         d^{-}_{\textrm{AF}_{HD}}(r_1,r_2) = \min\Big\{(1-r_1)^++(1-2r_1)^+,(1-r_2)^++(1-2r_2)^+\Big\} .
    \end{eqnarray*}
    In \cite{Azarian:05} the class of AF relay channels has been studied and it has been shown that the nonorthogonal AF (NAF) protocol achieves the optimal DMT for AF single-relay channels which was shown in \cite[Theorem 3]{Azarian:05} to be $d^*(r)=(1-r)^++(1-2r)^+$. Thus, if $\beta=1$ and $\alpha\ge2$, then in the class of AF protocols, {\em our proposed HD-AF scheme achieves the optimal DMT for each communicating pair simultaneously, and the DMT performance corresponds to that of two parallel relay channels}. In fact, in this configuration, interference does not degrade the performance. This is because here interference is very strong, and thus, decoding the interfering message can be done without constraining the rate of the desired information.

    {\em The impact of noise amplification on the achievable DMT of AF:} Observe that for AF with $\beta\le1$, the achievable DMT of the ICR increases with respect to $\beta$, while for $\beta>1$, the DMT decreases with respect to $\beta$. This behaviour for AF can  be observed both in the strong interference regime (Figure \ref{fig:plots4567}) as well as in the weak to moderate interference regime (Figure \ref{fig:plots891011}). Hence, if the relay-destination links are weak, then forwarding desired information dominates the noise amplification caused by AF, while for strong relay-destination links, we observe the opposite behaviour. This demonstrates well the tradeoff between forwarding desired information to the receivers and amplifying the noise at the receivers.

\section{Numerical Evaluations and Additional Comments}

\begin{figure}[!t]
\centering
    \hspace{-0.95 cm}
    \subfloat[$\alpha=0.5,\mbox{ } \beta=\gamma=1$\label{fig:plot051}]{%
      \includegraphics[scale=0.41]{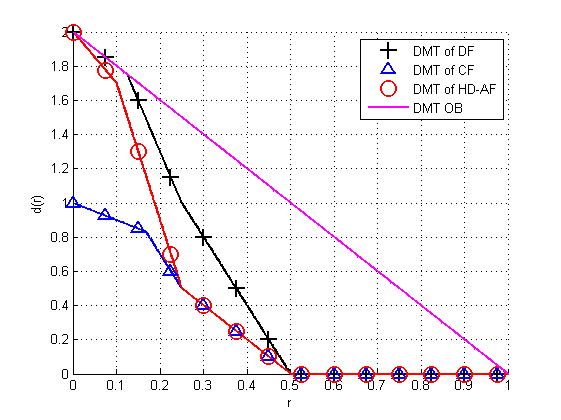}
    }
    \hspace{-0.85 cm}
    \subfloat[$\alpha=1,\mbox{ } \beta=\gamma=1$\label{fig:plot11}]{%
      \includegraphics[scale=0.41]{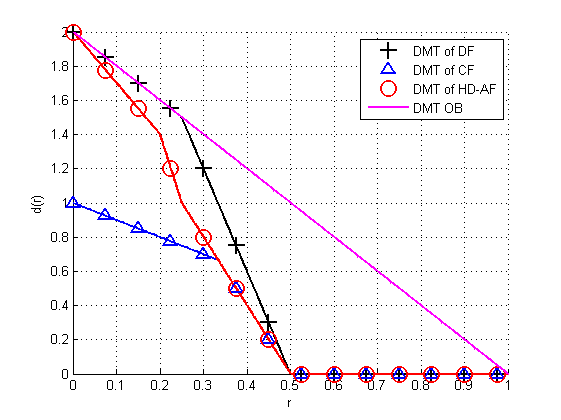}
    }
    \hspace{-0.85 cm}
    \subfloat[$\alpha\ge2,\mbox{ } \beta=\gamma=1$\label{fig:plot211}]{%
      \includegraphics[scale=0.41]{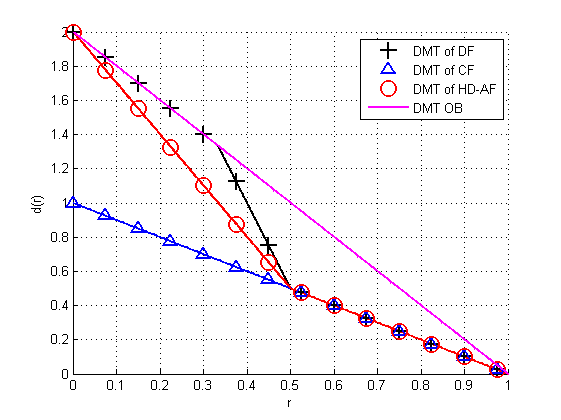}
    }
    \hspace{-1 cm}
    \caption{\small The effect of the strength of the interference on the achievable DMT of the ICR.}
    \label{fig:plots123}
\end{figure}

    {\em The effect of the strength of the interference on the achievable DMT:} Figure \ref{fig:plots123} depicts the achievable DMT of the ICR for different values of $\alpha$ when $\beta=1$, for the symmetric case where $r_1=r_2=r$. The figure demonstrates the effect of the strength of the interference on the achievable DMT. When $\alpha\le1$, i.e., when interference is weak, as in Figures \ref{fig:plot051} and \ref{fig:plot11}, the achievable diversity gain of HD-AF (Theorem \ref{thm:theoremafjd}), CF (Theorem \ref{thm:theorem_cf}), and DF (Theorem \ref{thm:theorem_df}) is equal to zero for high multiplexing gains ($r\ge0.5$).
    For CF and AF, this is due to fact that interference is not strong enough to facilitate joint decoding of the interference and of the desired message at the destinations (see equations \eqref{eq:cf_dmt3} for CF and \eqref{eq:HDAFDMT1} for AF). For DF, this is due to jointly decoding the messages from both sources at the relay node, which follows from \eqref{eqn:d_RELAYDF}. 
    However, when $\alpha=2$, i.e., when interference is very strong, decoding the interference at the destinations does not constrain the achievable DMT for high multiplexing gains. In this case, an outage for decoding the desired message at the destinations is the dominant outage event. This is the situation in Figure \ref{fig:plot211}.

    {\em The effect of the strength of the relay-destination links in the very strong interference regime:} Figure \ref{fig:plots4567} demonstrates the effect of the strength of the relay-destination links, represented by $\beta$, on the achievable DMT region of the ICR for the symmetric case where $r_1=r_2=r$ when interference is very strong ($\alpha = 2$). This makes it possible to isolate the effect of the relay-destination links. Note that when $\beta$ is small, as in Figure \ref{fig:plot202}, then the different relaying strategies achieve the same diversity gains for almost all values of multiplexing gains. This observation suggests that if the relay-destination link is very poor, then it does not matter which relaying strategy is used since the relay cannot provide much assistance to the communicating pairs. In fact, the DMT of the ICR in this case coincides with the DMT of the IC except for very low multiplexing gains, in which DF and AF provide DMT gain over the IC but CF does not.
    However, for $\beta=1$ and $\beta=2$ (Figures \ref{fig:plot21} and \ref{fig:plot22}, respectively), the achievable DMT of DF and AF reaches the maximum possible diversity gain at $r=0$, i.e., a diversity gain of $2$.

    Recall that in Corollary \ref{cor:cf_optimallity} it was shown that when $\beta\ge\max\{2,\alpha+1\}$, CF can be DMT-optimal. Indeed, in Figures \ref{fig:plot202}-\ref{fig:plot22}, where we have $\beta<\alpha$, CF is suboptimal and its achievable DMT is bounded by $1$. But, when $\beta\ge\max\{2,\alpha+1\}$, then CF becomes DMT-optimal, as is the situation in Figure \ref{fig:plot23}. Observe that for $\beta=2,3$
     (Figures \ref{fig:plot22}, \ref{fig:plot23}) the DMT of  AF decreases, and in fact becomes zero at high multiplexing gains. This is because when the relay-destination link is strong, then the noise amplification problem associated with AF becomes dominant and constrains the achievable DMT at the destinations.

\begin{figure}[!t]
\centering
    \hspace{-0.8 cm}
    \subfloat[$\alpha=2,\mbox{ } \beta=0.2,\mbox{ } \gamma=1$\label{fig:plot202}]{%
      \includegraphics[scale=0.31]{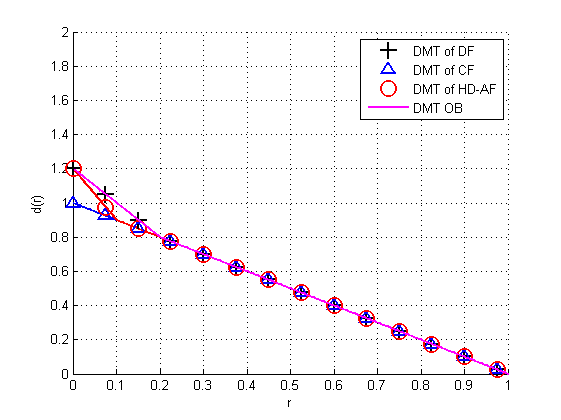}
    }
    \hspace{-0.72 cm}
    \subfloat[$\alpha=2,\mbox{ } \beta=1,\mbox{ } \gamma=1$\label{fig:plot21}]{%
      \includegraphics[scale=0.31]{21.png}
    }
    \hspace{-0.72 cm}
    \subfloat[$\alpha=2,\mbox{ } \beta=2,\mbox{ } \gamma=1$\label{fig:plot22}]{%
      \includegraphics[scale=0.31]{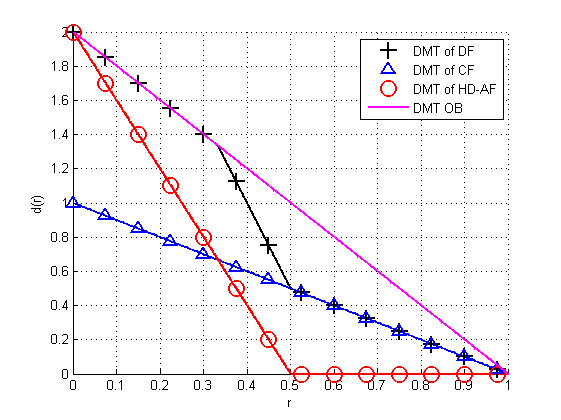}
    }
    \hspace{-0.72 cm}
    \subfloat[$\alpha=2,\mbox{ } \beta=3,\mbox{ } \gamma=1$\label{fig:plot23}]{%
      \includegraphics[scale=0.31]{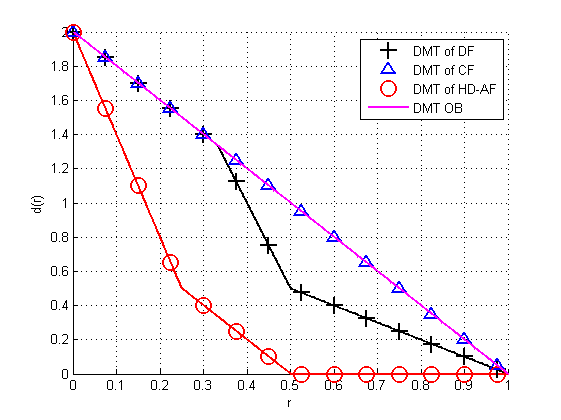}
    }
    \vspace{-0.4cm}
    \hspace{-1.05 cm}
    \caption{\small The effect of the strength of the relay-destination links on the achievable DMT of the ICR in the strong and in the very strong interference regime.}
    \label{fig:plots4567}
\end{figure}

    {\em The effect of the strength of the relay-destination links in the weak interference regime:} Figure \ref{fig:plots891011} demonstrates the effect of the strength of the relay-destination links on the achievable DMT region of the ICR in scenarios in which the interference is weak ($\alpha=0.5$). First, observe that in the weak interference regime, DF outperforms both CF and AF. Note that if the multiplexing gains are high ($r\ge0.5$), then the achievable DMT of all three strategies is equal to zero. In this case, the outage event due to jointly decoding both messages at the relay is the dominating outage event for DF, while for the CF and the AF relaying strategies, the dominating outage event is the one that corresponds to jointly decoding both messages at the destinations (See also the comment on the effect of the strength of the interference on the achievable DMT). When the multiplexing gains are low, however, then with the DF strategy, the relay can reliably decode both messages and forward {\em noiseless} desired information to the receivers, while with CF and AF strategies, the relay forwards noisy information to the receivers. Thus, DF outperforms CF and AF at low multiplexing gains.

    \noindent
    Note that the performance of CF improves with respect to the strength of the relay-destination links, i.e., when $\beta$ increases (note that for $\beta=0.5$ CF performs the same as for $\beta=1$, but when $\beta>1$, the DMT performance of CF improves as $\beta$ increases). This follows from the fact that as the relay-destination links improve, then the compression at the relay can be less substantial (see Eqn. \eqref{eq:eq25New}), enabling the relay to forward more information to the destinations.

\begin{figure}[!t]
\centering
    \hspace{-0.8 cm}
    \subfloat[$\alpha=0.5,\mbox{ } \beta=0.5,\mbox{ } \gamma=1$\label{fig:plot1a05b05}]{%
      \includegraphics[scale=0.31]{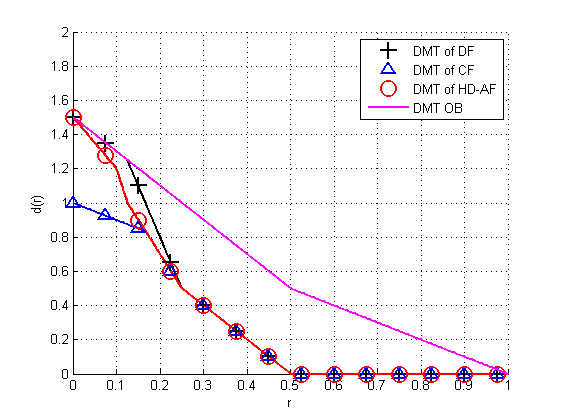}
    }
    \hspace{-0.72 cm}
    \subfloat[$\alpha=0.5,\mbox{ } \beta=1,\mbox{ } \gamma=1$\label{fig:plot2a05b1}]{%
      \includegraphics[scale=0.31]{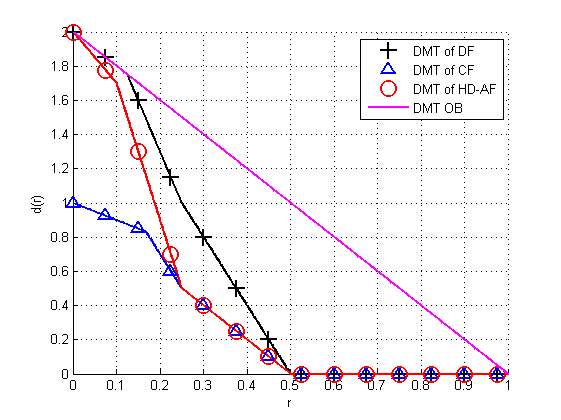}
    }
    \hspace{-0.72 cm}
    \subfloat[$\alpha=0.5,\mbox{ } \beta=1.5,\mbox{ } \gamma=1$\label{fig:plot3a05b15}]{%
      \includegraphics[scale=0.31]{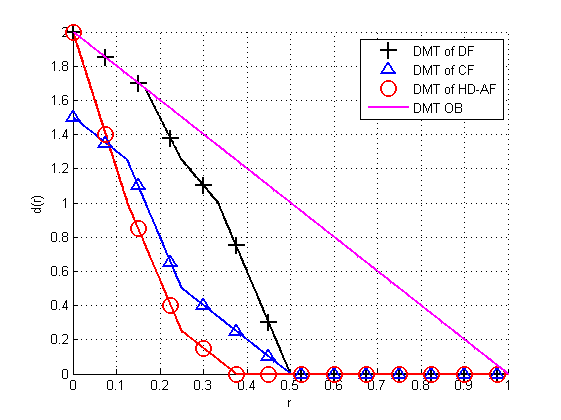}
    }
    \hspace{-0.72 cm}
    \subfloat[$\alpha=0.5,\mbox{ } \beta\ge3,\mbox{ } \gamma=1$\label{fig:plot4a05b3}]{%
      \includegraphics[scale=0.31]{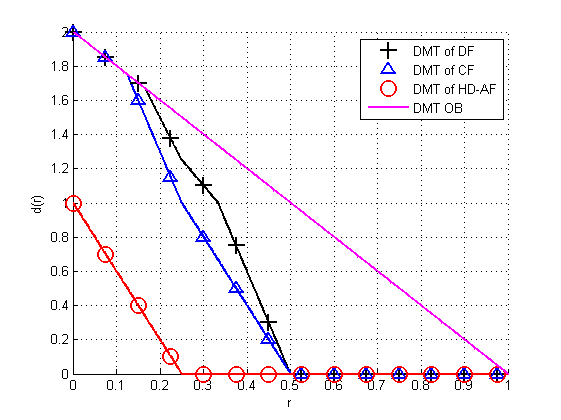}
    }
    \vspace{-0.4cm}
    \hspace{-1.05 cm}
    \caption{\small The effect of the strength of the relay-destination links on the achievable DMT of the ICR in the weak interference regime.}
    \label{fig:plots891011}
\end{figure}

    {\em Implications on the incorporation of relaying into existing wireless networks:} An important aspect to note is that all achievable DMT regions in this paper were obtained with mutually independent codebooks. This means that when attempting to achieve the DMT gains characterized in this work by adding a relay to an existing network, {\em it is not required to change the transmission scheme of the users}, and in fact they can be completely oblivious to the fact that they are being assisted by a relay node. It is enough to modify only the decoding process at the receivers. This greatly simplifies the introduction of relaying into wireless networks and provides further motivation for using relaying to mitigate interference.

    {\em Alternative approached for the weak interference regime:} The focus of the optimal DMT results obtained in this work is on the strong and on the very strong interference regimes. In these regimes, decoding both the desired message and the interfering message at each receiver achieves the optimal performance, since when interference is strong enough, it can be decoded without constraining the rate of the desired information. This observation has motivated basing the achievable schemes employed in this work on decoding both messages at each receiver. In the weak interference regime, this decoding approach constrains the rates, and higher rates can be obtained by applying partial interference cancellation as in the well-known Han-Kobayashi scheme \cite{Han:81}. Partial interference cancellation can be incorporated into these schemes by rate-splitting at the sources combined with partial decoding and rate-splitting at the relay (for example, this was done for CF in \cite[Sec. III-A]{Tian:11}). Note that for the IC (without a relay) it was shown in \cite{Bolcskei:09} that when interference is weak, then partial interference cancellation leads to a larger achievable DMT region compared to jointly decoding both messages; however, DMT optimality was demonstrated only for the strong and for the very strong interference regimes.

    Another relevant relaying strategy is noisy network coding (NNC) \cite{Lim:11}. Note that in the weak interference regime, \cite{Lim:11} showed that NNC may allow higher rate pairs than those achievable with CF relaying for the ICR with noiseless, orthogonal relay-destinations links. Thus, NNC may lead to a larger DMT region in such scenarios. In strong and in very strong interference regimes, which are the focus of this study, the receiver jointly decodes both the interference and the desired message. In such situations, when CF is DMT optimal (see, e.g., Corollary \ref{cor:cf_optimallity}), clearly NNC cannot outperform CF. When CF is not DMT optimal, then NNC may indeed provide a better DMT performance.

    {\em The operational significance of DMT analysis in modern wireless communications systems:} An interesting aspect to investigate related to our DMT results for the ICR is their operational significance, as was done for point-to-point MIMO channels in \cite{LOZ:10}. The work \cite{LOZ:10} showed that in practical wideband operating scenarios with frequency diversity, link adaptation can be used to avoid outage in slowly fading channels, while in rapidly fading scenarios, hybrid automatic-repeat-request (HARQ) provides sufficient protection from outage. Thus, in both rapidly and slowly fading channels, the transmission scheme should utilize the available antennas for increasing the information rate (i.e., multiplexing gain) rather than for decreasing the probability of outage (i.e., diversity gain). Note however, that the analysis in \cite{LOZ:10} does not easily extend to the ICR studied in this paper for two reasons: First, we assume no transmitter CSI and no feedback, which precludes link adaptation as well as HARQ. In addition, our setup is a virtual MIMO channel, and hence, even with transmitter CSI everywhere, the ability to do link adaptation and HARQ over virtual links is not straightforward. Furthermore, the relay also complicates the analysis as it introduces multiple hops which were not present in \cite{LOZ:10}. We conjecture that as in \cite{LOZ:10}, if there is transmitter CSI and/or feedback in our model, then techniques such as link adaptation and HARQ will reduce or eliminate the need to use degrees of freedom for diversity in most settings, and hence typical operating scenarios will use most degrees of freedom for multiplexing. Making this conjecture rigorous, however, is a topic of future work.


\vspace{-0.2 cm}

\section{Summary}
\vspace{-0.2 cm}
\label{sec:summary}
In this work we studied the DMT performance of single-antenna Gaussian ICRs. We derived 
four achievable DMT regions based on CF, DF, and AF at the relay. Additionally, we derived conditions on the channel coefficients under which the optimal DMT is achieved with CF and with DF. In these scenarios, we showed that the optimal DMT of the ICR is the same as the optimal DMT for two parallel interference-free relay channels which means that a single relay can be DMT-optimal for both communicating pairs simultaneously, and that interference does not degrade the DMT performance when these conditions are satisfied. For the AF strategy we characterized scenarios in which the achievable DMT of the ICR is the same as the best DMT for two parallel, AF-based relay channels, and we showed that a single relay can assist both pairs to achieve this DMT simultaneously. These results demonstrate that adding a relay can substantially improve the DMT of interference channels, which gives a strong motivation for employing relay nodes in multi-user wireless networks that have to cope with interference.

\appendices
\newpage
\setcounter{equation}{0}
\numberwithin{equation}{section}
\section{Proof of the DMT outer bound of Proposition \ref{thm:theorem_cutset}}
\label{app:CUTSET}
We begin with the statement of a cut-set bound for the ICR, which is given by the following proposition:

\noindent
{\bf Proposition A:} {\em Let $\Rset_+$ denote the set of nonnegative real numbers. When $\uth$ is given and fixed, an outer bound on the achievable rate region is given by the following region:
\begin{subequations}
\label{eq:cut_set}
\begin{eqnarray}
    \mathcal{C}_{\mbox{\scriptsize outer-bound}}(\uth)\triangleq\bigcup_{f(x_1)f(x_2)f(x_3|x_1,x_2;\th_3,\th_{3,T})}\Big\{(R_1,R_2)&\in&\Rset_+^2:\nonumber\\
    \label{eq:cut_set_1} R_1     &\leq& I(X_1;Y_1,Y_3|X_2,X_3, \uth)\\
    \label{eq:cut_set_2} R_1     &\leq& I(X_1,X_3;Y_1|X_2, \uth)\\
    \label{eq:cut_set_3} R_2     &\leq& I(X_2;Y_2,Y_3|X_1,X_3, \uth)\\
    \label{eq:cut_set_4} R_2     &\leq& I(X_2,X_3;Y_2|X_1, \uth)\Big\}
\end{eqnarray}
\end{subequations}
}\vspace{-0.9 cm}
\begin{proof}
In order to establish the conditioning on the channel realization $\uth$ in \eqref{eq:cut_set}, we review the derivation of \cite[Theorem 15.10.1]{cover-thomas:it-book} starting from \cite[Eqn. (15.324)]{cover-thomas:it-book}. Enumerate the set of nodes in the network $\{$Tx$_1$, Tx$_2$, Relay, Rx$_1$, Rx$_2$$\}$ with $\{1,2,3,4,5\}\triangleq\mathfrak{N}$, respectively. Recall that $\mS$ and $\mS^c$ are a partition of the nodes in the network into two sets, and let $\mT$ denote the set of pairs of $(i,j)$ indexes s.t. $i\in\mS$ and $j\in\mS^c$. $\mT^c$ denotes the set of all the pairs of indexes in $\mathfrak{N}^2$ not in $\mT$. Let $W^{\mT}\triangleq\{W_{ij}\}_{(i,j)\in\mT}$, and let $X^{\mS}\triangleq\{X_{ik}\}_{i\in\mS}$. Define $\eps_n\triangleq\frac{1}{n}+\left(\sum_{i\in\mS, j\in\mS^c} R_{ij}\right)\P_e^{(n)}$ and note that $\eps_n\rightarrow 0$ as $n\rightarrow\infty$. Thus, for a set $\{R_{ij}\}_{i\in\mS, j\in\mS^c}$, we have
\begin{eqnarray*}
        n\sum_{i\in\mS, j\in\mS^c}\!\!\!\!R_{ij} & = & H\big(W^{(\mT)}|W^{(\mT^c)}\big)\\
        & \stackrel{(a)}{=} & H\big(W^{(\mT)}|W^{(\mT^c)},\uth\big)\\
        & = & H\big(W^{(\mT)}|W^{(\mT^c)},\uth\big) - H\big(W^{(\mT)}|W^{(\mT^c)},Y_1^{(\mS^c)}, Y_2^{(\mS^c)},...,Y_n^{(\mS^c)},\uth\big)\\
        & & \qquad + H\big(W^{(\mT)}|W^{(\mT^c)},Y_1^{(\mS^c)}, Y_2^{(\mS^c)},...,Y_n^{(\mS^c)},\uth\big)\\
        & = & I\big(W^{(\mT)};Y_1^{(\mS^c)}, Y_2^{(\mS^c)},...,Y_n^{(\mS^c)}|W^{(\mT^c)},\uth\big)\\
        & & \qquad + H\big(W^{(\mT)}|W^{(\mT^c)},Y_1^{(\mS^c)}, Y_2^{(\mS^c)},...,Y_n^{(\mS^c)},\uth\big)\\
        &\stackrel{(b)} {\le} & I\big(W^{(\mT)};Y_1^{(\mS^c)}, Y_2^{(\mS^c)},...,Y_n^{(\mS^c)}|W^{(\mT^c)},\uth\big) +n\eps_n\\
        & = & \sum_{i=1}^n I\big(W^{(\mT)};Y_i^{(\mS^c)}|Y_1^{(\mS^c)}, Y_2^{(\mS^c)},...,Y_{i-1}^{(\mS^c)},W^{(\mT^c)},\uth\big) +n\eps_n
 \end{eqnarray*}
 \begin{eqnarray*}
        \qquad\qquad\qquad\qquad\quad& = & \sum_{i=1}^n \Big[H\big(Y_i^{(\mS^c)}|Y_1^{(\mS^c)}, Y_2^{(\mS^c)},...,Y_{i-1}^{(\mS^c)},W^{(\mT^c)},\uth\big) \\
        & & \qquad - H\big(Y_i^{(\mS^c)}|Y_1^{(\mS^c)}, Y_2^{(\mS^c)},...,Y_{i-1}^{(\mS^c)},W^{(\mT^c)},W^{(\mT)},\uth\big) \Big]+n\eps_n\\
        & \stackrel{(c)}{\le} & \sum_{i=1}^n\Big[ H\big(Y_i^{(\mS^c)}|Y_1^{(\mS^c)}, Y_2^{(\mS^c)},...,Y_{i-1}^{(\mS^c)},W^{(\mT^c)},\uth\big) \\
        & & \qquad - H\big(Y_i^{(\mS^c)}|Y_1^{(\mS^c)}, Y_2^{(\mS^c)},...,Y_{i-1}^{(\mS^c)},W^{(\mT^c)},W^{(\mT)},X_i^{(\mS^c)} ,X_i^{(\mS)},\uth\big) \Big]+n\eps_n\\
        & \stackrel{(d)}{=} & \sum_{i=1}^n \Big[ H\big(Y_i^{(\mS^c)}|Y_1^{(\mS^c)}, Y_2^{(\mS^c)},...,Y_{i-1}^{(\mS^c)},W^{(\mT^c)}, X_i^{(\mS^c)} ,\uth\big) \\
        & & \qquad - H\big(Y_i^{(\mS^c)}|Y_1^{(\mS^c)}, Y_2^{(\mS^c)},...,Y_{i-1}^{(\mS^c)},W^{(\mT^c)},W^{(\mT)},X_i^{(\mS^c)} , X_i^{(\mS)},\uth\big)\Big] +n\eps_n\\
        & \stackrel{(e)}{=} & \sum_{i=1}^n \Big[H\big(Y_i^{(\mS^c)}|Y_1^{(\mS^c)}, Y_2^{(\mS^c)},...,Y_{i-1}^{(\mS^c)},W^{(\mT^c)}, X_i^{(\mS^c)} ,\uth\big) \\
        & & \qquad - H\big(Y_i^{(\mS^c)}|X_i^{(\mS^c)} , X_i^{(\mS)},\uth\big)\Big] +n\eps_n\\
        & \stackrel{(f)}{\le} & \sum_{i=1}^n \Big[H\big(Y_i^{(\mS^c)}| X_i^{(\mS^c)} ,\uth\big) - H\big(Y_i^{(\mS^c)}|X_i^{(\mS^c)} , X_i^{(\mS)},\uth\big)\Big] +n\eps_n\\
        & = & \sum_{i=1}^n I\big(X_i^{(\mS)}; Y_i^{(\mS^c)}| X_i^{(\mS^c)} ,\uth\big)  +n\eps_n
\end{eqnarray*}
where (a) follows as the messages are independent of the realization of the channel coefficients; (b) follows from the Fano's inequality; (c) follows as we added $X_i^{(\mS^c)} $ and $X_i^{(\mS)}$ to the conditioning in the second term in the summation and used the fact that conditioning reduces entropy; (d) follows as $X_i^{(\mS^c)}$ is uniquely determined
by the messages $W^{(\mT^c)}$, the channel outputs $Y_1^{(\mS^c)}$, $Y_2^{(\mS^c)},...$ ,$Y_{i-1}^{(\mS^c)}$, and the channel coefficients $\uth$, and therefore, adding $X_i^{(\mS)}$ to the conditioning of the first term of the summation does not change the entropy; (e) follows from the memorylessness of the channel; and (f) follows as conditioning reduces entropy.

Proceeding with the steps used to arrive from \cite[Eqn. (15.333)]{cover-thomas:it-book} to \cite[Eqn. (15.338)]{cover-thomas:it-book}, we obtain
\begin{equation}
\label{eq:eq28}
\sum_{i\in\mS, j\in\mS^c} R_{ij} \le I(X^{\mS};Y^{\mS^c}|X^{\mS^c},\uth),
\end{equation}
subject to some $P(X^{\mS\cup\mS^c}|\uth)$.

Thus, equations \eqref{eq:cut_set_1}-\eqref{eq:cut_set_4} are obtained by applying \eqref{eq:eq28} to the ICR for four partitions: $\mS=\{$Tx$_1\}$, $\mS=\{$Tx$_1$,Relay$\}$, $\mS=\{$Tx$_2\}$, and $\mS=\{$Tx$_2$,Relay$\}$, respectively. Note that as in the ICR model considered there is no feedback or CSI at the transmitters and the relay
has Rx-CSI and Tx-CSI on its incoming and outgoing links, the joint distribution for the cut-set bound can be decomposed into
\begin{equation*}
    p(x_1,x_2,x_3|\uth)=p(x_1)p(x_2)p(x_3|x_1,x_2;\th_3,\th_{3,T}),
\end{equation*}
where $(\th_3,\th_{3,T})$ should be taken as fixed throughout codeword transmission.
\end{proof}

Let $\mathcal{C}(\uth)$ denote the capacity region of the ICR for the channel coefficients $\uth$. Then $\mathcal{C}(\uth)\subseteq \mathcal{C}_{\mbox{\scriptsize outer-bound}}(\uth)$. Therefore, the outage probability corresponding to the outer bound, $\mathcal{C}_{\mbox{\scriptsize outer-bound}}(\uth)$, is a lower bound on the outage probability corresponding to $\mathcal{C}(\uth)$, i.e.,
\begin{equation*}
    \Pr\big((R_1,R_2)\notin \mathcal{C}_{\mbox{\scriptsize outer-bound}}(\utH)\big)\leq\Pr\big((R_1,R_2)\notin \mathcal{C}(\utH)\big),
\end{equation*}

It follows that the DMT region corresponding to the cut-set bound $\mathcal{C}_{\mbox{\scriptsize outer-bound}}(\uth)$ constitutes an outer bound on the achievable DMT region of the ICR. In the following, we characterize the DMT curves corresponding to \eqref{eq:cut_set_1}-\eqref{eq:cut_set_4}: Consider first the DMT corresponding to \eqref{eq:cut_set_1}, and let $R_{k,T}=r_k\log\SNR, k\in\{1,2\}$ denote the target rate for the pair Tx$_k$-Rx$_k$. The outage probability corresponding to \eqref{eq:cut_set_1} is defined as $\Pr(\outage_1^+)\triangleq\Pr\big(I(X_1;Y_1,Y_3|X_2,X_3,\uth)< r_1\log\SNR\big)$. Note that similar to \cite[Appendix A]{Zahavi:12}, $I(X_1;Y_1,Y_3|X_2,X_3,\uth)$ can be upper bounded as follows:
\begin{eqnarray*}
I(X_1;Y_1,Y_3|X_2,X_3,\uth)
&\le&\log\Big(1+\SNR|h_{11}|^2+\SNR^{\gamma}|h_{13}|^2\Big)\nonumber\\
\label{eq:eq14}&=&\log\Big(1+\SNR^{1-\theta_{11}}+\SNR^{\gamma-\theta_{13}}\Big)\\
&\triangleq&R^+_1(\theta_{11},\theta_{13}).
\end{eqnarray*}
For this upper bound we have,
    $\Pr(\outage_1^+)\ge\Pr(1+\SNR^{1-\theta_{11}}+\SNR^{\gamma-\theta_{13}}< \SNR^{r_1})\triangleq\Pr (\tilde{\outage}^+_1)$.
In order to calculate $\Pr (\tilde{\outage}^+_1)$, we follow similar steps as those used in \cite[Theorem 1]{Leveque:09}: Define $\theta_{kl}$ s.t. $|h_{kl}|^2 = \SNR^{-\theta_{kl}}$, where $k,l\in\{1,2,3\}, (k,l)\neq(3,3)$, and note that from \cite[Eqn. (5)]{Azarian:05} we obtain that since $h_{kl}$ are complex Normal RVs, then in the asymptotic case as $\SNR\rightarrow\infty$, the p.d.f. of $\theta_{kl}$ is equal to zero for all negative values of $\theta_{kl}$. Therefore, we consider only $\theta_{kl}\ge 0, k,l\in\{1,2,3\}, (k,l)\neq(3,3)$. Let
\begin{equation*}
\mathcal{D}_{r_1}\triangleq\Big\{\theta_{11}\geq 0, \theta_{13}\geq 0, R^+_1(\theta_{11},\theta_{13})<r_1\log\SNR \Big\}.
\end{equation*}
\noindent
Hence, using \cite[Eqn. (6)]{Azarian:05} we obtain that when $\SNR\rightarrow\infty$, then  $\Pr(\tilde{\outage}^+_1)=\Pr\Big((\theta_{11},\theta_{13})\in\mathcal{D}_{r_1}\Big)$ scales as
\begin{equation*}
    \Pr (\tilde{\outage}^+_1) \doteq \SNR^{-d_{\tilde{\outage}^+_1}(r_1)},
\end{equation*}
where
\begin{subequations}
\begin{eqnarray}
    d_{\tilde{\outage}^+_1}(r_1) = &&\min \;\;\;  \theta_{11}+\theta_{13}\nonumber\\
    \!\!\textrm{s.t. } && (1-\theta_{11})^+\le r_1\label{eq:eq291}\\
                   && (\gamma-\theta_{13})^+\le r_1\label{eq:eq292}\\
                   && \theta_{11}\ge0, \theta_{13}\ge 0.\label{eq:eq293}
\end{eqnarray}
\end{subequations}
As the constraints \eqref{eq:eq291}-\eqref{eq:eq293} can be rewritten as $\theta_{11} \ge (1-r_1)^+$ and $\theta_{13} \ge (\gamma-r_1)^+$, then the minimal sum equals to
\begin{equation*}
    d_{\tilde{\outage}^+_1}(r_1) = (1-r_1)^++(\gamma-r_1)^+ \triangleq d^+_{1}(r_1,r_2),
\end{equation*}
given in \eqref{eq:cut_dmt1}. Next, consider \eqref{eq:cut_set_2}. Define $\Pr(\outage_2^+)=\Pr\big(I(X_1,X_3;Y_1|X_2,\uth)< r_1\log\SNR\big)$. From \cite[Eqn. (A9)]{Zahavi:12}, we upper bound $I(X_1,X_3;Y_1|X_2,\uth)$ as follows:
\begin{eqnarray}
    I(X_1,X_3;Y_1|X_2,\uth)&\le& \log\big(1+\SNR|h_{11}|^2+\SNR^{\frac{1+\beta}{2}}h_{11}h_{31}^*+\SNR^{\frac{1+\beta}{2}}h_{31}h_{11}^*+\SNR^{\beta}|h_{31}|^2\big)\nonumber\\
    &=& \log\big(1+\SNR|h_{11}|^2+2\SNR^{\frac{1+\beta}{2}}\cdot\Real\{h_{11}h_{31}^*\}+\SNR^{\beta}|h_{31}|^2\big)\nonumber\\
    &\le& \log\big(1+\SNR|h_{11}|^2+2\SNR^{\frac{1+\beta}{2}}|h_{11}||h_{31}|+\SNR^{\beta}|h_{31}|^2\big)\nonumber\\
    \label{eq:eq13}&=&   \log\big(1+\SNR^{1-\theta_{11}}+2\SNR^{\frac{1-\theta_{11}+\beta-\theta_{31}}{2}}+\SNR^{\beta-\theta_{31}}\big).
\end{eqnarray}
Thus,
\begin{equation*}
    \Pr(\outage_2^+)\ge\Pr(1+\SNR^{1-\theta_{11}}+2\SNR^{\frac{1-\theta_{11}+ \beta-\theta_{31}}{2}}+\SNR^{\beta-\theta_{31}}< \SNR^{r_1})\triangleq \Pr (\tilde{\outage}^+_2),
\end{equation*}
Following \cite[Eqn. (6)]{Azarian:05}, we obtain that as $\SNR\rightarrow\infty$, then $\Pr (\tilde{\outage}^+_2)\doteq \SNR^{ -d_{\tilde{\outage}^+_2}(r_1)}$, where
\begin{eqnarray*}
    d_{\tilde{\outage}^+_2}(r_1) = && \min\;\;\;  \theta_{11}+\theta_{31}\\
    \textrm{s.t. } && (1-\theta_{11})^+\le r_1\\
                 && (\beta-\theta_{31})^+\le r_1\\
                 && \theta_{11}\ge0, \theta_{31}\ge0.
\end{eqnarray*}
Which follows since $\frac{(1-\theta_{11}+\beta-\theta_{31})^+}{2} \le \frac{(1-\theta_{11})^++(\beta-\theta_{31})^+}{2}$. For this minimization problem we obtain the solution $d_{\tilde{\outage}^+_2}(r_1) = (1-r_1)^+ + (\beta-r_1)^+ \triangleq d^+_{2}(r_1,r_2)$, given in \eqref{eq:cut_dmt2}. Following similar steps, we obtain the DMT bounds \eqref{eq:cut_dmt3} and \eqref{eq:cut_dmt4} from  \eqref{eq:cut_set_3} and \eqref{eq:cut_set_4}, respectively.

\section{Proof of Theorem \ref{thm:theorem_cf}}
\vspace{-0.2 cm}
\label{app:CFProof}
An achievable rate region for the ICR with only common messages and CF at the relay is given in \cite[Thm. 1]{Tian:11}. This region consists of
 all nonnegative rate pairs satisfying:
\begin{subequations}
\label{eq:cf_rate}
\begin{eqnarray}
    \label{eq:cf_rate11} R_1 &\leq& I(X_1;Y_1,\hat{Y}_3|X_2,X_3,\uth)\\
    \label{eq:cf_rate22} R_2 &\leq& I(X_2;Y_2,\hat{Y}_3|X_1,X_3,\uth)\\
    \label{eq:cf_rate3} R_1+R_2 &\leq& I(X_1,X_2;Y_1,\hat{Y}_3|X_3,\uth)\\
    \label{eq:cf_rate4} R_1+R_2 &\leq& I(X_1,X_2;Y_2,\hat{Y}_3|X_3,\uth),
\end{eqnarray}
\end{subequations}
subject to the constraints:
\vspace{-0.3 cm}
\begin{subequations}
\label{eq:cf_rate_cons}
\begin{eqnarray}
    \label{eq:cf_rate_cons1}I(X_3;Y_1|\uth)&\geq&I(Y_3;\hat{Y}_3|X_3,Y_1,\uth)\\
    \label{eq:cf_rate_cons2}I(X_3;Y_2|\uth)&\geq&I(Y_3;\hat{Y}_3|X_3,Y_2,\uth),
\end{eqnarray}
\end{subequations}
 for a joint distribution $f(x_1)f(x_2)f(x_3)f(y_1,y_2,y_3|x_1,x_2,x_3)f(\hat{y}_3|y_3,x_3)$, with complex Normal inputs $X_k\sim\CN(0,1)$,
 $k\in\{1,2,3\}$, and $\hat{Y}_3 = Y_3 + Z_{\textrm{Q}}$,  $Z_{\textrm{Q}} \sim \CN(0,N_{\textrm{Q}})$, independent of $\{Y_k\}_{k=1}^3$ and $\{X_k\}_{k=1}^3$
 where $N_{\textrm{Q}}$ is selected to satisfy \eqref{eq:cf_rate_cons}.
 Using the relationships $I(X_3;Y_k|\uth) = h(Y_k|\uth)\; - \; h(Y_k|X_3,\uth)$ and $I(Y_3;\hat{Y}_3|X_3,Y_k,\uth)=$
 $h(Y_k,\hat{Y}_3|X_3,\uth)$ $-\;h(Y_k|X_3,\uth)\;-\;\log\big((\pi e)N_{\textrm{Q}}\big), k\in\{1,2\}$, we can rewrite the constraints in \eqref{eq:cf_rate_cons1} and \eqref{eq:cf_rate_cons2} as:
\begin{equation}
    \label{eq:cf_rate_cons22}
    \log\big((\pi e)N_{\textrm{Q}}\big) \geq h(Y_k,\hat{Y}_3|X_3,\uth)-h(Y_k|\uth), \qquad k\in\{1,2\}.
\end{equation}

Next we find the smallest $N_{\textrm{Q}}$ that satisfies \eqref{eq:cf_rate_cons22}.
Starting with $k=1$, we write explicitly $h(Y_1|\uth)$ for mutually independent complex Normal channel inputs:
\begin{equation}
    \label{eq:eq9}
    h(Y_1|\uth)=\log\big((\pi e)(1+\SNR|h_{11}|^2+\SNR^{\alpha}|h_{21}|^2+\SNR^{\beta}|h_{31}|^2)\big).
\end{equation}
Defining
\begin{eqnarray*}
    \Hmat \triangleq \left[ \begin{array}{cc}
    \sqrt{\SNR}h_{11} & \; \; \;\; \sqrt{\SNR^\alpha}h_{21}\\
    \sqrt{\SNR^{\gamma}}h_{13} & \; \; \;\; \sqrt{\SNR^{\gamma}}h_{23}\end{array}\right], \;\; \Xvec \triangleq \left[ \begin{array}{c}
    X_1\\
    X_2\end{array} \right], \; \; \Zvec \triangleq \left[ \begin{array}{c}
    Z_1\\
    Z_3+Z_{\textrm{Q}}\end{array} \right],
\end{eqnarray*}
we obtain
\vspace{-0.4 cm}
\begin{eqnarray}
\!\!\! \!\!   h(Y_1,\hat{Y}_3|X_3,\uth) &=&h(\Hmat\cdot\Xvec+\Zvec)\nonumber\\
    &=&\log\!\!\Big(\!(\pi e)^2\big|\Hmat\cdot\cov(\Xvec)\cdot\Hmat^H+\cov(\Zvec)\big|\!\Big)\nonumber\\
    &\le& \log\!\!\Big(\!(\pi e)^2 \big((1+N_{\textrm{Q}})(1+\SNR|h_{11}|^2+\SNR^{\alpha}|h_{21}|^2)\nonumber\\
    &&\label{eq:eq10}\qquad\!\!+\SNR^{\gamma}|h_{13}|^2\!+\!\SNR^{\gamma}|h_{23}|^2\!+\!\SNR^{\gamma+\alpha}|h_{13}|^2|h_{21}|^2\!+\!\SNR^{\gamma+1}|h_{23}|^2|h_{11}|^2\big)\!\Big).
\end{eqnarray}
Combining \eqref{eq:eq9} and \eqref{eq:eq10} we conclude that \eqref{eq:cf_rate_cons22} is satisfied for $k=1$ if
\begin{equation*}
    N_{\textrm{Q}}\ge\frac{1+\SNR^{1-\theta_{11}}+\SNR^{\alpha-\theta_{21}}+\SNR^{\gamma-\theta_{13}}+ \SNR^{\gamma-\theta_{23}}+\SNR^{\gamma+\alpha-\theta_{13}-\theta_{21}}+\SNR^{\gamma+1-\theta_{11}-\theta_{23}}}{\SNR^{\beta-\theta_{31}}}.
\end{equation*}
where $\theta_{ij}$ is defined in Appendix \ref{app:CUTSET}. Note that since $\theta_{kl}\ge0, k,l\in\{1,2,3\}, (k,l)\neq(3,3)$, the above inequality is guaranteed if
\begin{equation*}
    N_{\textrm{Q}}\ge\frac{1+\SNR^{1}+\SNR^{\alpha}+\SNR^{\gamma}+ \SNR^{\gamma}+\SNR^{\gamma+\alpha}+\SNR^{\gamma+1}}{\SNR^{\beta-\theta_{31}}}.
\end{equation*}
Thus, we obtain
\begin{equation}
    \label{eq:eq25}
    N_{\textrm{Q}}\dot{=}\max\{\SNR^{\gamma+\alpha-\beta+\theta_{31}}, \SNR^{\gamma+1-\beta+\theta_{31}}\}.
\end{equation}
Using the same arguments for $k=2$ and combining with \eqref{eq:eq25}, we conclude that
\eqref{eq:cf_rate_cons22} is satisfied with
\begin{equation}
    \label{eq:eq25New}
    N_{\textrm{Q}}\dot{=}\max\{\SNR^{\gamma+\alpha-\beta+\theta_{31}}, \SNR^{\gamma+1-\beta+\theta_{31}}, \SNR^{\gamma+\alpha-\beta+\theta_{32}}, \SNR^{\gamma+1-\beta+\theta_{32}}\}.
\end{equation}
Note that since the expression for $N_{\textrm{Q}}$ in \eqref{eq:eq25New} depends on $\theta_{31}$ and $\theta_{32}$, then the relay must have Tx-CSI in order to compute $N_{\textrm{Q}}$. Using its Tx-CSI, the relay is able to identify the minimal compression required to be applied to its received signal, to permit reliable transmission of information on its received signal to the destinations. Also note that the degree of compression is proportional to the relative strength of the source-relay links compared to the strength of the relay-destination links (represented by $\frac{\SNR^{\gamma}}{\SNR^{\beta}}$). The Rx-CSI is needed at the relay to facilitate the use of Gaussian codebooks for compression.

Denote  the event that the $k$'th inequality in \eqref{eq:cf_rate11}-\eqref{eq:cf_rate4} is violated with $\outage_k^{\textrm{CF}}$. We first evaluate $\Pr(\outage_1^{\textrm{CF}})$ as follows:
\begin{eqnarray*}
    \Pr(\outage_1^{\textrm{CF}})
    &=& \Pr\big(h(Y_1,\hat{Y}_3|X_2,X_3,\uth)-h(Z_1,Z_3+Z_Q|\uth)<r_1\log\SNR\big)\\
    &=& \Pr\big(h(\sqrt{\SNR}h_{11}X_1\!+\!Z_1,\sqrt{\SNR^\gamma}h_{13}X_1\!+\!Z_3\!+\!Z_Q)\!-\!\log\big((\pi e)^2(1\!+\!N_{\textrm{Q}})\big)\!<\!r_1\log\SNR\big).
\end{eqnarray*}
Defining  ${\bf{H}} \triangleq \left[ \sqrt{\SNR}h_{11}, \sqrt{\SNR^\gamma}h_{13}\right]^T$, and $\Zvec \triangleq \left[   Z_1,   Z_3+Z_{\textrm{Q}} \right]^T$,
%
we can write
\vspace{-0.2 cm}
\begin{eqnarray*}
    \Pr(\outage_1^{\textrm{CF}}) 
    &=& \Pr\Big(h({\bf H}\cdot X_1+\Zvec)-h(\Zvec)<r_1\log\SNR\Big)\\
    &=& \Pr\left(\log\Big((\pi e)^2\big|{\bf H}{\bf H}^H+\cov(\Zvec)\big|\Big)-\log\Big((\pi e)^2(1+N_{\textrm{Q}})\Big)<r_1\log\SNR\right)\\
    &=& \Pr\big(1+\SNR|h_{11}|^2+\frac{\SNR^{\gamma}|h_{13}|^2}{1+N_{\textrm{Q}}}<\SNR^{r_1}\big).
\end{eqnarray*}
Next, we write
\vspace{-0.3 cm}
\begin{eqnarray*}
    1+N_{\textrm{Q}}&\dot{=}&\max\{\SNR^{\gamma+\alpha-\beta+\theta_{31}}, \SNR^{\gamma+1-\beta+\theta_{31}}, \SNR^{\gamma+\alpha-\beta+\theta_{32}}, \SNR^{\gamma+1-\beta+\theta_{32}},\SNR^0\}\\
    &=&\max\{\SNR^{(\gamma+\alpha-\beta+\theta_{31})^+}, \SNR^{(\gamma+1-\beta+\theta_{31})^+}, \SNR^{(\gamma+\alpha-\beta+\theta_{32})^+}, \SNR^{(\gamma+1-\beta+\theta_{32})^+}\}.
\end{eqnarray*}
Hence, as in \cite[Theorem 1]{Leveque:09}, by following similar steps as those used in Appendix \ref{app:CUTSET}, the DMT corresponding to the event $\outage_1^{\textrm{CF}}$ can be calculated by solving the following minimization problem:
\begin{subequations}
\label{eq:eq30}
\begin{eqnarray}
    \min &&\; \theta_{11}+\theta_{13}+\theta_{31}+\theta_{32}\\
    \textrm{s.t.} &&\Big(1-\theta_{11}\Big)^+ \leq r_1,\\
                  &&\Big(\gamma-\theta_{13}-\max\big\{(\gamma+\alpha-\beta+\theta_{31})^+,(\gamma+1-\beta+\theta_{31})^+,\nonumber\\
                  &&\qquad\qquad\qquad\qquad (\gamma+\alpha-\beta+\theta_{32})^+,(\gamma+1-\beta+\theta_{32})^+\big\}\Big)^+ \leq r_1,\\
                  && \theta_{11}\ge0,\theta_{13}\ge0,\theta_{31}\ge0,\theta_{32}\ge0.
\end{eqnarray}
\end{subequations}
First, consider the case where $\alpha>1$. The case for $\alpha\le 1$ can be solved using similar arguments. For simplicity, define $\phi(\theta)\triangleq(\gamma+\alpha-\beta+\theta)^+$. Next, note that given $\theta_{31}$ and $\theta_{32}$, the optimal values for $\theta_{11}$ and $\theta_{13}$ can be obtained as
\begin{eqnarray*}
    \theta_{11}&=&(1-r_1)^+\\
    \theta_{13}&=&\Big(\gamma-\max\big\{\phi(\theta_{31}),\phi(\theta_{32})\big\}-r_1\Big)^+
\end{eqnarray*}
Define
\begin{equation*}
    \hat{f}(\theta_{31},\theta_{32})\triangleq(1-r_1)^+ + \Big(\gamma-\max\big\{\phi(\theta_{31}),\phi(\theta_{32})\big\}-r_1\Big)^+ +\theta_{31}+\theta_{32}
\end{equation*}
Thus, the optimization problem in \eqref{eq:eq30} can be rewritten as
\begin{subequations}
\label{eq:OptCFInd22}
\begin{eqnarray}
    \substack{\min\\ \theta_{31},\theta_{32}} &&\; \hat{f}(\theta_{31},\theta_{32})\\
    \textrm{s.t. }&&\     \theta_{31}\ge 0, \theta_{32} \ge 0.
\end{eqnarray}
\end{subequations}
Searching over all possible values of $\phi(\theta_{31})$ and $\phi(\theta_{32})$, there are four possible cases:
\begin{enumerate}
    \item $\gamma+\alpha-\beta+\theta_{31}\le0$ and $\gamma+\alpha-\beta+\theta_{32}\le0$: In this case we obtain $\phi(\theta_{31})=\phi(\theta_{32})=0$, for which we have
    \begin{equation*}
        \hat{f}(\theta_{31},\theta_{32})=(1-r_1)^++(\gamma-r_1)^++\theta_{31}+\theta_{32}.
    \end{equation*}
        It follows that in this case $\hat{f}(\theta_{31},\theta_{32})$ is a monotonically increasing function of $\theta_{31}$ and $\theta_{32}$.

    \item $\gamma+\alpha-\beta+\theta_{31}>0$ and $\gamma+\alpha-\beta+\theta_{32}\le0$: In this case we obtain $\phi(\theta_{31})=\gamma+\alpha-\beta+\theta_{31}$ and $\phi(\theta_{32})=0$. Thus,
    \begin{equation*}
        \hat{f}(\theta_{31},\theta_{32})=(1-r_1)^++\big(\gamma-\phi(\theta_{31})-r_1\big)^++\theta_{31}+\theta_{32}.
    \end{equation*}
    Here, there are two possibilities:
    \begin{itemize}
        \item $\gamma-\phi(\theta_{31})-r_1\le0$ for which we obtain
        \begin{equation*}
            \hat{f}(\theta_{31},\theta_{32})=(1-r_1)^++\theta_{31}+\theta_{32}.
        \end{equation*}
         It follows that in this case $\hat{f}(\theta_{31},\theta_{32})$ is again a monotonically increasing function of $\theta_{31}$ and $\theta_{32}$.
         \item $\gamma-\phi(\theta_{31})-r_1>0$ for which we obtain
        \begin{eqnarray*}
            \hat{f}(\theta_{31},\theta_{32})&=&(1-r_1)^++\big(\gamma-\phi(\theta_{31})-r_1\big)+\theta_{31}+\theta_{32}\\            &=&(1-r_1)^++\big(\gamma-(\gamma+\alpha-\beta+\theta_{31})-r_1\big)+\theta_{31}+\theta_{32}\\
            &=&(1-r_1)^++\big(\gamma-(\gamma+\alpha-\beta)-r_1\big)+\theta_{32}.
        \end{eqnarray*}
        It follows that in this case $\hat{f}(\theta_{31},\theta_{32})$ does not depend on $\theta_{31}$ but it is a  monotonically increasing function of $\theta_{32}$.
    \end{itemize}
    \item $\gamma+\alpha-\beta+\theta_{31}\le0$ and $\gamma+\alpha-\beta+\theta_{32}>0$: Following steps similar to the previous case, we conclude that $\hat{f}(\theta_{31},\theta_{32})$ is either a  monotonically increasing function of $\theta_{31}$ and $\theta_{32}$, or it does not depend on $\theta_{32}$ and is a  monotonically increasing function of $\theta_{31}$.
    \item $\gamma+\alpha-\beta+\theta_{31}>0$ and $\gamma+\alpha-\beta+\theta_{32}>0$: In this case we obtain $\phi(\theta_{31})=\gamma+\alpha-\beta+\theta_{31}$ and $\phi(\theta_{32})=\gamma+\alpha-\beta+\theta_{32}$. In this scenario,
    \begin{equation*}
        \hat{f}(\theta_{31},\theta_{32})=(1-r_1)^++\Big(\gamma-\max\big\{\phi(\theta_{31}),\phi(\theta_{32})\big\}-r_1\Big)^++\theta_{31}+\theta_{32},
    \end{equation*}
    Depending on whether $\phi(\theta_{31})>\phi(\theta_{32})$ or $\phi(\theta_{31})\le\phi(\theta_{32})$, this case becomes the same as the second or the third case, respectively.
\end{enumerate}
We conclude that for the optimization problem in \eqref{eq:OptCFInd22},  $\hat{f}(\theta_{31},\theta_{32})$ is either a  monotonically increasing function of both $\theta_{31}$ and $\theta_{32}$, or it does not depend on one and is a monotonically increasing function of the other. Thus, the optimal $\theta_{31}$ and $\theta_{32}$ for this optimization problem are zero. Note that for $\theta_{31}=\theta_{32}=0$, we have that $\phi(\theta_{31})=\phi(\theta_{32})=(\gamma+\alpha-\beta)^+$, hence, the optimal solution to \eqref{eq:OptCFInd22} is
\begin{equation*}
    \substack{\min\\ \theta_{31},\theta_{32}} \hat{f}(\theta_{31},\theta_{32}) = (1-r_1)^++(\gamma-(\gamma+\alpha-\beta)^+-r_1)^+.
\end{equation*}
Repeating the same steps for the case $\alpha\le 1$, we obtain
\begin{equation}
\label{eq:eq11}
    d_{1,\textrm{CF}}^{-} =
    \begin{cases}
        (1-r_1)^+ + \left(\gamma-(\gamma+\alpha-\beta)^+ -r_1\right)^+ & \qquad\qquad\qquad\qquad\qquad ,\alpha>1\\
        (1-r_1)^+ + \left(\gamma-(\gamma+1-\beta)^+ -r_1\right)^+        & \qquad\qquad\qquad\qquad\qquad ,\alpha\le1,
    \end{cases}
\end{equation}
which is \eqref{eq:cf_dmt1}.
Similarly, we obtain the achievable DMT \eqref{eq:cf_dmt2} by calculating the probability of the outage event $\outage_2^{\textrm{CF}}$ which follows from the rate constraint \eqref{eq:cf_rate22}.
Consider next the outage probability $\Pr(\outage_3^{\textrm{CF}})$ corresponding to the rate constraint \eqref{eq:cf_rate3}. Defining
\begin{eqnarray*}
    \Hmat \triangleq \left[ \begin{array}{cc}
    \sqrt{\SNR}h_{11} & \;\;\;\;\sqrt{\SNR^\alpha}h_{21}\\
    \sqrt{\SNR^\gamma}h_{13} & \;\;\;\;\sqrt{\SNR^\gamma}h_{23}\end{array}\right],\;\;\Xvec\triangleq \left[ \begin{array}{c}
    X_1\\
    X_2\end{array} \right], \;\;\Zvec \triangleq \left[ \begin{array}{c}
    Z_1\\
    Z_3+Z_{\textrm{Q}}\end{array} \right],
\end{eqnarray*}
we write
\begin{eqnarray*}
    \Pr(\outage_3^{\textrm{CF}})
    &=& \Pr\left(I(X_1,X_2;Y_1,\hat{Y}_3|X_3.\uth)<(r_1+r_2)\log\SNR\right)\\
    &=& \Pr\Big(h(\Hmat\cdot \Xvec+\Zvec|\uth)-\log\big((\pi e)^2(1+N_{\textrm{Q}})\big)<(r_1+r_2)\log\SNR\Big)\\
    &=& \Pr\left(\log\Big((\pi e)^2\big|\Hmat\Hmat^H+\cov(\Zvec)\big|\Big)-\log\Big((\pi e)^2(1+N_{\textrm{Q}})\Big)<(r_1+r_2)\log\SNR\right)\\
    &=& \Pr\Bigg(1+\SNR|h_{11}|^2+\SNR^\alpha |h_{21}|^2+\frac{\SNR^{\gamma}|h_{13}|^2+\SNR^{\gamma}|h_{23}|^2}{1+N_{\textrm{Q}}}+\\
    &&\qquad\qquad\qquad\qquad\qquad\qquad \frac{\left|\SNR^{\frac{\gamma+\alpha}{2}}h_{13}h_{21}^*-\SNR^{\frac{\gamma+1}{2}} h_{23}h_{11}^*\right|^2}{1+N_{\textrm{Q}}}<\SNR^{r_1+r_2}\Bigg)\\
    &\leq&\Pr\Bigg(1+\SNR|h_{11}|^2+\SNR^\alpha|h_{21}|^2+\frac{\SNR^{\gamma}|h_{13}|^2}{1+N_{\textrm{Q}}}<\SNR^{r_1+r_2}\Bigg).
\end{eqnarray*}
Thus, as in \cite[Theorem 1]{Leveque:09}, by following similar steps as those used in Appendix \ref{app:CUTSET}, a lower bound on the DMT is obtained by considering the following minimization problem:
\begin{eqnarray*}
    \min & &\;\; \theta_{11}+\theta_{21}+\theta_{13}+\theta_{31}+\theta_{32}\\
    \textrm{s.t.}
    && \big(1-\theta_{11}\big)^+ \leq r_1+r_2, \qquad      \big(\alpha-\theta_{21}\big)^+ \leq r_1+r_2,\\
    && \Big(\gamma-\theta_{13}-\max\big\{(\gamma+\alpha-\beta+\theta_{31})^+,(\gamma+1-\beta+\theta_{31})^+,\\
    &&\qquad\qquad\qquad\qquad(\gamma+\alpha-\beta+\theta_{32})^+,(\gamma+1-\beta+\theta_{32})^+\big\}\Big)^+ \leq r_1+r_2,\\
    && \theta_{k,l}\ge0, \qquad k,l\in\{1,2,3\}, (k,l)\neq(3,3).
\end{eqnarray*}
Similar to the previous case, the minimal solution is obtained at $\theta_{31}=\theta_{32}=0$. The resulting DMT relationship is characterized by $d_{3,\textrm{CF}}^{-}(r_1,r_2)$ stated in \eqref{eq:cf_dmt3}. An identical DMT expression is obtained from the analysis of $\Pr(\outage_4^{\textrm{CF}})$. This completes the proof.
\tend

\vspace{-0.2 cm}

\section{Proof of Theorem \ref{thm:theorem_df}}
\label{app:DFProof}
The achievability scheme is based on employing DF at the relay and using i.i.d. codebooks generated according to mutually independent,
zero-mean, complex Normal channel inputs. Let $\outage_R$ denote the outage event at the relay, i.e., the event that the relay fails to decode, and let $\outage_R^c$ denote its complement.
Then, the probability of an outage for the ICR can be evaluated as follows:
\begin{equation}
    \label{eq:DMT_DF}
    \Pr(\mbox{outage}) = \Pr(\mbox{outage}|\outage_R)\Pr(\outage_R) + \Pr(\mbox{outage}|\outage_R^c)\Pr(\outage_R^c),
\end{equation}
Similarly to \cite[Eqn. (A1)]{Dabora:12}, an achievable rate region for decoding at the relay is given by all nonnegative pairs $(R_1,R_2)$ satisfying
\vspace{-1.0cm}

\begin{eqnarray*}
    R_1 &\leq& I(X_1;Y_3|X_2,X_3,\th_3,\th_{3,T})=\log(1+\SNR^{\gamma}|h_{13}|^2)\\
    R_2 &\leq& I(X_2;Y_3|X_1,X_3,\th_3,\th_{3,T})=\log(1+\SNR^{\gamma}|h_{23}|^2)\\
    R_1+R_2 &\leq& I(X_1,X_2;Y_3|X_3,\th_3,\th_{3,T})=\log(1+\SNR^{\gamma}|h_{13}|^2+\SNR^{\gamma}|h_{23}|^2).
\end{eqnarray*}
\vspace{-1.0cm}

\noindent
The probability of an outage at the relay, $\Pr(\outage_R)$, corresponds to the event that at least one of the above inequalities is not satisfied. Applying similar techniques as in Appendix \ref{app:CUTSET}, we obtain
\begin{eqnarray*}
    \label{eq:outage_at_relay1}
    \Pr(\outage_R) &\dot{\leq}& \SNR^{-\min\left\{(\gamma-r_1)^+,(\gamma-r_2)^+,2(\gamma-r_1-r_2)^+\right\}} \triangleq \SNR^{-d_{\mbox{\tiny Relay}}(r_1,r_2)}.
\end{eqnarray*}
\vspace{-1cm}

\noindent
Thus, similarly to \cite[Appendix II]{Yuksel:07} it can be shown that at asymptotically high SNR
\begin{subequations}
\label{eq:outage_at_relay2}
\begin{eqnarray}
    &&\Pr(\outage_R)\doteq 
    \begin{cases}
        \SNR^{-\min\left\{(\gamma-r_1)^+,(\gamma-r_2)^+,2(\gamma-r_1-r_2)^+\right\}} & \quad r_1+r_2<\gamma\\
        1   & \quad r_1+r_2\ge \gamma.
    \end{cases}
\end{eqnarray}
\end{subequations}
When the relay fails to decode it remains silent at the next transmission block, and hence, the ICR specializes to the IC in such situations (recall that the receivers have Rx-CSI). Note that the destinations can be made aware of this (see, e.g., \cite[Appendix II]{Yuksel:07}) via a single bit sent from the relay at no rate cost asymptotically. It follows that when the relay fails to decodes, each receiver jointly decodes both messages based on its received signal which is a sum of the desired signal and the interfering signal.
As the sources use mutually independent, i.i.d. generated codebooks, then using the error analysis of \cite[Section IV-D]{Kramer:06} without superposition encoding
(i.e., setting $T_1=T_2=0$ in \cite[Section IV-D]{Kramer:06}) we obtain the following rate region:

\vspace{-1cm}
\begin{subequations}
\label{eq:eq31}
\begin{eqnarray}
    R_1 &\leq& I(X_1;Y_1|X_2,\th_1)=\log\big(1+\SNR|h_{11}|^2\big)\\
    R_2 &\leq& I(X_2;Y_2|X_1,\th_2)=\log\big(1+\SNR|h_{22}|^2\big)\\
    R_1+R_2 &\leq& I(X_1,X_2;Y_1|\th_1)=\log\big(1+\SNR|h_{11}|^2+\SNR^{\alpha}|h_{21}|^2\big)\\
    R_1+R_2 &\leq& I(X_1,X_2;Y_2|\th_2)=\log\big(1+\SNR^{\alpha}|h_{12}|^2+\SNR|h_{22}|^2\big).
\end{eqnarray}
\end{subequations}
\vspace{-0.3cm}
\noindent

Next, denote the target rates $R_{1,T}=r_1\log\SNR$ and $R_{2,T}=r_2\log\SNR$. An outage occurs if at least one of the inequalities in \eqref{eq:eq31} is not satisfied. Denote the probability of outage at the destinations given that the relay fails to decode with $\Pr(\mbox{outage}|\outage_R)$. Since this corresponds to an outage event for the IC (\eqref{eq:eq31}), then we write

\begin{equation}
    \label{eq:DMT_relay_doesnt_decodes}
    \Pr(\mbox{outage}|\outage_R) \doteq \SNR^{-d_{\mbox{\tiny IC}}(r_1,r_2)},
\end{equation}

\vspace{-0.2cm}

\noindent
where $d_{\mbox{\tiny IC}}(r_1,r_2)$ is the achievable DMT of the IC (without relay) corresponding to \eqref{eq:eq31}. From \cite[Theorem 1]{Bolcskei:09} it follows that $d_{\mbox{\tiny IC}}(r_1,r_2)$ is given by

\vspace{-0.6 cm}
\begin{equation*}
    d_{\mbox{\tiny IC}}(r_1,r_2)=\min\left\{(1-r_1)^+,(1-r_2)^+,(1-r_1-r_2)^+ + (\alpha-r_1-r_2)^+\right\}.
\end{equation*}

\vspace{-0.2 cm}

\noindent
Next, we consider the case where the relay decodes {\em both messages} successfully.\! Using relay encoding and receiver decoding as in \cite[\!Appendix A]{Dabora:12},\!
and neglecting errors in decoding the interfering message as in \cite[\!Section IV-D]{Kramer:06},\!
an achievable rate region of the ICR is obtained as:
\begin{eqnarray*}
    R_1 &\leq & I(X_1,X_3;Y_1|X_2,\th_1)     = \log(1+\SNR|h_{11}|^2+\SNR^\beta |h_{31}|^2)\\
    R_2 &\leq & I(X_2,X_3;Y_2|X_1,\th_2)     = \log(1+\SNR|h_{22}|^2+\SNR^\beta |h_{32}|^2)\\
    R_1+R_2 &\leq & I(X_1,X_2,X_3;Y_1|\th_1) = \log(1+\SNR|h_{11}|^2+\SNR^\alpha |h_{21}|^2+\SNR^\beta |h_{31}|^2)\\
    R_1+R_2 &\leq & I(X_1,X_2,X_3;Y_2|\th_2) = \log(1+\SNR|h_{22}|^2+\SNR^\alpha |h_{12}|^2+\SNR^\beta |h_{32}|^2).
\end{eqnarray*}

\vspace{-0.1 cm}

\noindent
Evaluating the DMT region corresponding to the above rate region similar to Appendix \ref{app:CUTSET}, it follows that
when the relay decodes both messages successfully, the probability of outage is given by

\vspace{-0.5cm}
\begin{equation}
    \label{eq:DMT_relay_decodes}
    \Pr(\mbox{outage}|\outage_R^c) \doteq \SNR^{-d_{\mbox{\tiny Coop.}}(r_1,r_2)},
\end{equation}
\vspace{-1.2 cm}

\noindent
where
\vspace{-0.3 cm}
\begin{eqnarray*}
    d_{\mbox{\tiny Coop.}}(r_1,r_2) & = & \min\big\{(1-r_1)^++(\beta-r_1)^+,(1-r_2)^++(\beta-r_2)^+,\\
    &&\qquad\qquad
    (1-r_1-r_2)^++(\alpha-r_1-r_2)^++(\beta-r_1-r_2)^+\big\}.
\end{eqnarray*}

\vspace{-0.2 cm}

\noindent
Finally, by substituting \eqref{eq:outage_at_relay2}, \eqref{eq:DMT_relay_doesnt_decodes}, and \eqref{eq:DMT_relay_decodes} into \eqref{eq:DMT_DF}, we obtain
\vspace{-0.9 cm}

\begin{eqnarray*}
\Pr(\mbox{outage})
&=& \Pr(\mbox{outage}|\outage_R)\Pr(\outage_R) + \Pr(\mbox{outage}|\outage_R^c)\Pr(\outage_R^c)\\
&\dot{\leq}& \begin{cases}
          \SNR^{-d_{\mbox{\tiny IC}}(r_1,r_2)}\SNR^{-d_{\mbox{\tiny Relay}}(r_1,r_2)}+\SNR^{-d_{\mbox{\tiny Coop.}}(r_1,r_2)} & \qquad \quad ,r_1+r_2<\gamma\\
          \SNR^{-d_{\mbox{\tiny IC}}(r_1,r_2)} & \quad \qquad ,r_1+r_2\ge \gamma
         \end{cases}
\end{eqnarray*}
which corresponds to the DMT region of Theorem \ref{thm:theorem_df}.

\tend

\section{Proof of Theorem \ref{thm:theorem_af}}
\label{AFFDProof}
We first derive an achievable rate region for the ICR with a full-duplex relay employing the AF scheme, and then we evaluate the DMT region obtained with this scheme.

\subsection{An Achievable Rate Region}
\vspace{-0.2cm}
\label{sec:AFregionHD}
Transmission is carried out in groups of $B-1$ messages. Each message is transmitted via a codeword of length $n$ channel symbols and an entire group of $B-1$ messages is transmitted using $nB$ channel symbols. Let $R_{k}$ denote the rate for the pair Tx$_k$-Rx$_k$. Then, the overall rate is $\frac{B-1}{B}R_{k}$ which approaches $R_{k}$ as $B$ increases. Let $\M_k \triangleq \{1,2,...,2^{n R_k}\}$ denote the message set for Tx$_k, k\in\{1,2\}$, and let the codebook for user $k$ be the set $\big\{\xvec_k(m_k)\big\}_{m_k\in\M_k}$
of mutually independent codewords selected according to $f\big(\xvec_k(m_k)\big)=\prod_{i=1}^nf_{X_k}(x_{k,i}(m_k))$.
At block $b$, Tx$_k$ sends a new message $m_{k,b}\in\M_k$ by transmitting
$\xvec_k(m_{k,b})\triangleq\xvec^{(b)}_k, k\in\{1,2\}$, and the relay transmits a scaled version of the signal received at the previous block, i.e., at block $b-1$. Let $H_{kl}^{(b)}$ denote the channel coefficient
$H_{kl}$ at block $b$ and let $G_R^{(b)}$ denote
the scaling applied by the relay at block $b$. $G_R^{(b)}$ is determined solely based on the Rx-CSI at the relay. The relationship between the channel inputs and outputs at the $i$'th
symbol of block $b$, $i\in\{1,2,...,n\}$, is given by:
\begin{subequations}
\begin{eqnarray}
\!\!\!\!    Y_{1,i}^{(b)} & = & \sqrt{\SNR}H_{11}^{(b)} X_{1,i}^{(b)} + \sqrt{\SNR^{\alpha}}H_{21}^{(b)} X_{2,i}^{(b)}\nonumber\\
    &&\quad+ \sqrt{\SNR^{\beta}}H_{31}^{(b)} G_R^{(b)} \!\left(\sqrt{\SNR}H_{13}^{(b-1)} X_{1,i}^{(b-1)}\! + \sqrt{\SNR}H_{23}^{(b-1)} X_{2,i}^{(b-1)}\! + Z_{3,i}^{(b-1)}\right)\! +Z_{1,i}^{(b)}\\
\!\!\!\!    Y_{2,i}^{(b)} & = & \sqrt{\SNR^{\alpha}} H_{12}^{(b)} X_{1,i}^{(b)} + \sqrt{\SNR}H_{22}^{(b)} X_{2,i}^{(b)}\nonumber\\
    &&\quad+ \sqrt{\SNR^{\beta}}H_{32}^{(b)} G_R^{(b)} \!\left(\sqrt{\SNR}H_{13}^{(b-1)} X_{1,i}^{(b-1)}\! + \sqrt{\SNR}H_{23}^{(b-1)} X_{2,i}^{(b-1)}\! + Z_{3,i}^{(b-1)}\right)\! +Z_{2,i}^{(b)}\;\;\;\;\;\\
\!\!\!\!    Y_{3,i}^{(b)} & = & \sqrt{\SNR}H_{13}^{(b)} X_{1,i}^{(b)} + \sqrt{\SNR}H_{23}^{(b)} X_{2,i}^{(b)} + Z_{3,i}^{(b)}.
\end{eqnarray}
\end{subequations}
Let $\tilde{H}_{k}^{(b)}\triangleq(H_{1k}^{(b)},H_{2k}^{(b)},H_{3k}^{(b)})\in\Cset^3$ denote the available Rx-CSI at Rx$_k$ at block $b, k\in\{1,2\}$, and let $\tilde{H}_{3}^{(b)}\triangleq(H_{13}^{(b)},H_{23}^{(b)})\in\Cset^2$ denote the Rx-CSI at the relay at block $b$. As the receivers know the Rx-CSI at the relay, they know $G_R^{(b)}$ as well.

The transmission scheme is inspired by the D-BLAST scheme \cite[Ch. 10.6.4]{Goldsmith:Book}: Rx$_k$ decodes $m_{k,b}$ at block $b+1$ as follows: Rx$_k$ {\em first decodes the interference in blocks $b$ and $b+1$} by treating the entire signal from the relay and its own desired signal as i.i.d. additive white Gaussian noise. This can be done reliably if $n$ is large enough and

 \vspace{-1 cm}

\begin{subequations}
\label{eq:AFVSIICancel}
\begin{eqnarray}
    R_{1}^{(b)} &\le& I\big(X^{(b)}_{1};Y_2^{(b)}|\th_{2}^{(b)}\big)\\
    R_{2}^{(b)} &\le& I\big(X^{(b)}_{2};Y_1^{(b)}|\th_{1}^{(b)}\big).\vspace{-0.3 cm}
\end{eqnarray}
\end{subequations}

Let $(\hat{\hat{m}}_{2,b},\hat{\hat{m}}_{2,b+1})$ denote the estimation of $(m_{2,b},m_{2,b+1})$ at Rx$_1$.
Rx$_1$ now jointly processes $\big(\yvec_{1}^{(b)}, \yvec_{1}^{(b+1)}\big)$ to decode $m_{1,b}$ as follows:  From decoding at the previous block, Rx$_1$ has an estimation of $(m_{1,b-1},m_{2,b-1})$ denoted by $(\hat{m}_{1,b-1},\hat{m}_{2,b-1})$. Rx$_1$ now generates the signal
\begin{subequations}
\label{eq:eq32}
\begin{eqnarray}
    \tilde{y}_{1,i}^{(b)}&=&y_{1,i}^{(b)}\!-\!\sqrt{\SNR^{1+\beta}} G_R^{(b)} h_{31}^{(b)} \Big(h_{13}^{(b-1)}x_{1,i}(\hat{m}_{1,b-1})+h_{23}^{(b-1)}x_{2,i}(\hat{m}_{2,b-1})\Big)\!-\!\sqrt{\SNR^{\alpha}}h_{21}^{(b)}x_{2,i}(\hat{\hat{m}}_{2,b})\\
    \tilde{y}_{1,i}^{(b+1)}&=&y_{1,i}^{(b+1)}-\sqrt{\SNR^{1+\beta}} G_R^{(b+1)} h_{31}^{(b+1)} h_{23}^{(b)}x_{2,i}(\hat{\hat{m}}_{2,b}) -\sqrt{\SNR^{\alpha}} h_{21}^{(b+1)}x_{2,i}(\hat{\hat{m}}_{2,b+1}),
\end{eqnarray}
\end{subequations}
$i\in\{1,2,3,...,n\}$. Assuming correct decoding of $(m_{2,b}, m_{2,b+1},m_{1,b-1},m_{2,b-1})$ at Rx$_1$, we can write $\tilde{y}_{1,i}^{(b)}$ and $\tilde{y}_{1,i}^{(b+1)}$ as:
\begin{eqnarray*}
    \tilde{y}_{1,i}^{(b)}&=&  \sqrt{\SNR}h_{11}^{(b)} x_{1,i}(m_{1,b})+\sqrt{\SNR^{\beta}} G_R^{(b)} h_{31}^{(b)} z^{(b-1)}_{3,i}+z^{(b)}_{1,i}\\
    \tilde{y}_{1,i}^{(b+1)}
    &=&\sqrt{\SNR}h_{11}^{(b+1)} x_{1,i}(m_{1,b+1})+\sqrt{\SNR^{\beta}} G_R^{(b+1)} h_{31}^{(b+1)} \Big(\sqrt{\SNR}h_{13}^{(b)}x_{1,i}(m_{1,b})+z^{(b)}_{3,i}\Big)+z^{(b+1)}_{1,i}.
\end{eqnarray*}
It follows that $\tilde{\yvec}_{1}^{(b)}$ is an interference free, noisy version of the desired signal at Rx$_1$ at block $b$ ($m_{1,b}$), and $\tilde{\yvec}_{1}^{(b+1)}$ is a noisy version of the codeword corresponding to message $m_{1,b}$ which includes interference caused by transmission of $m_{1,b+1}$ at block $b+1$. Note that this interference cannot be cancelled since Rx$_1$ decodes $m_{1,b+1}$ at block $b+2$. We conclude that $m_{1,b}$ can be reliably decoded if $n$ is large enough and
\begin{subequations}
\label{eq:AFVSIRegion}
\begin{equation}
    R_{1}^{(b)} \le I\big(X_{1}^{(b)};\tilde{Y}_{1}^{(b)},\tilde{Y}_{1}^{(b+1)}|\th_{1}^{(b)},\th_{1}^{(b+1)},\th_3^{(b)}\big).
\end{equation}
Following similar steps, we obtain that Rx$_2$ can decode $m_{2,b}$ reliably if
\begin{equation}
    R_{2}^{(b)} \le I\big(X_{2}^{(b)};\tilde{Y}_{2}^{(b)},\tilde{Y}_{2}^{(b+1)}|\th_{2}^{(b)},\th_{2}^{(b+1)},\th_3^{(b)}\big),
\end{equation}
\end{subequations}
where $\big(\tilde{Y}_{2}^{(b)},\tilde{Y}_{2}^{(b+1)}\big)$ is defined similarly to $\big(\tilde{Y}_{1}^{(b)},\tilde{Y}_{1}^{(b+1)}\big)$. Note that in general, in order to maximize the achievable rate region, we should use the values of $G^{(b)}_R$ and $G^{(b+1)}_R$ which maximize \eqref{eq:AFVSIICancel} and \eqref{eq:AFVSIRegion}. However, since this computation is very involved, we take here a suboptimal approach: Since the power of the relay is limited to $1$, then $\left(G_R^{(b)}\right)^2\le\frac{1}{1+{\sSNR}|h_{13}^{(b-1)}|^2+{\sSNR}|h_{23}^{(b-1)}|^2}$. Hence, as we define $|h_{kl}^{(b-1)}|^2=\SNR^{-\theta_{kl}^{(b-1)}}$ for $\theta_{kl}^{(b-1)}\ge0$, then setting
\begin{equation}
\label{com:com1}
\left(G_R^{(b)}\right)^2=\frac{1}{1+2\SNR}\dot{=}\SNR^{-1},
\end{equation}
for $b=1,2,...,B$, guarantees to satisfy the power constraint at the relay. We conclude that the overall achievable rate region is given by \eqref{eq:AFVSIICancel} and \eqref{eq:AFVSIRegion} subject to the assignment \eqref{com:com1}.

\subsection{Evaluating the DMT of Full-Duplex AF Relaying}
We begin by evaluating the rates and the DMT associated with the transmission of the pair Tx$_1$-Rx$_1$. Note that with AF at the relay, the outage events at consecutive transmission block are correlated. To understand the reason for this, consider Rx$_1$ and note that at for decoding $m_{1,b}$ at block $b+1$, Rx$_1$ uses the Rx-CSI at relay from blocks $b-1$ and $b$, i.e., $\th_3^{b-1}$ and $\th_3^{b}$ (see \eqref{eq:eq32}). For decoding $m_{1,b+1}$ at block $b+2$, Rx$_1$ uses $\th_3^{b}$ and $\th_3^{b+1}$. As the realization $\th_3^{b}$ is used in decoding of both $m_{1,b}$ and $m_{1,b+1}$, then the outage events corresponding to decoding of these two messages are correlated. Let $\outage_b$ denote the outage event at block $b$ at Rx$_1$. Then, the probability of outage in transmission of $B$ blocks is given by
\begin{equation*}
    \Pr(\outage)=\Pr\left(\cup_{b=1}^{b=B-1} \outage_b\right) \le \sum _{b=1}^{b=B-1}\Pr(\outage_b).
\end{equation*}
when the inequality is due to union bound.
Hence, since the probability of outage is upper bounded by the sum of per block outage probabilities, we consider the asymptotical behaviour of the outage probability for a single transmission. First, note that for mutually independent complex Normal channel inputs, i.i.d. in time, we have
\begin{subequations}
\begin{eqnarray}
    \label{eq:eq17}I(X_{1};Y_{2}|\th_2,\th_3)&\dot{\ge}&\log\left(\frac{\SNR^{\alpha-\theta_{12}^{(b)}}}{\SNR+\SNR^{\beta}}\right)\\
     \label{eq:eq20}I\big(X_{1}^{(b)};\tilde{Y}_{1}^{(b)},\tilde{Y}_{1}^{(b+1)}\big|\th_1,\th_3\big)&\dot{\ge}& \log\bigg(\frac{\SNR^{2-\theta_{11}^{(b)}-\theta_{11}^{(b+1)}}}{\SNR+\SNR^{\beta}+\SNR^{2\beta-2}}+\frac{\SNR^{2\beta-1-\theta_{13}^{(b)}-\theta_{31}^{(b)}-\theta_{31}^{(b+1)}}}{\SNR+\SNR^{\beta}+\SNR^{2\beta-2}}\bigg).
\end{eqnarray}
\end{subequations}
Hence, every nonnegative $R_1$ satisfying
\begin{eqnarray*}
    &&\label{eq:R1const}\!\!\!\!\!\!\!\!\!\!R_1\dot{\le} \min\Bigg\{\log\Big(\SNR^{\alpha-\theta_{12}^{(b)}-\max\{1,\beta\}}\Big), \log\Big(\SNR^{2-\max\{1,\beta,2\beta-2\}-\theta_{11}^{(b)}-\theta_{11}^{(b+1)}}+\\
    &&\qquad\qquad\qquad\qquad\qquad\qquad\qquad\qquad\qquad\SNR^{2\beta-1-\max\{1,\beta,2\beta-2\}-\theta_{13}^{(b)}-\theta_{31}^{(b)}-\theta_{31}^{(b+1)}}\Big)\Bigg\},
\end{eqnarray*}
is achievable. We therefore obtain the following DMT region for transmission Tx$_1$-Rx$_1$ at block $b$:
\begin{eqnarray*}
    d_{\textrm{AF}_{FD}}^{-}(r_1)&=&
    \begin{cases}
        \min\Big\{(1-r_1)^+,(\alpha-1-r_1)^+\Big\} & \qquad\qquad ,\beta<1\\
        \min\Big\{(2-\beta-r_1)^++(\beta-1-r_1)^+,(\alpha-\beta-r_1)^+\Big\}  & \qquad\qquad ,1\le\beta<2\\
        \min\Big\{(1-r_1)^+,(\alpha-\beta-r_1)^+\Big\} & \qquad\qquad ,\beta>2
    \end{cases}
\end{eqnarray*}
Since $\Pr(\outage_{b})\doteq\SNR^{-d_{\textrm{AF}_{FD}}^{-}(r_1)}$ is independent of $b$, then $\Pr(\outage)\doteq B\SNR^{-d_{\textrm{AF}_{FD}}^{-}(r_1)}=\SNR^{\log_{\SNR}(B)-d_{\textrm{AF}_{FD}}^{-}(r_1)}\doteq \SNR^{-d_{\textrm{AF}_{FD}}^{-}(r_1)}$.
Following similar steps for Tx$_2$-Rx$_2$, we obtain the DMT region $d_{\textrm{AF}_{FD}}^{-}(r_1,r_2)$ stated in \eqref{eq:af_achievable_dmt}. This completes the proof.
\tend

\noindent
{\bf Comment D.1}. In the high SNR regime, the logarithm in \eqref{eq:eq20} is dominated by the summation of two terms. 
Note that the first term contributes $\big(2-\max\{1,\beta,2\beta-2\}-r_1\big)^+$ to the diversity gain corresponding to transmission Tx$_1$-Rx$_1$, while the second term contributes $\big(2\beta-1-\max\{1,\beta,2\beta-2\}-r_1\big)^+$.
Therefore, $\beta$ does not affect the diversity gain for values of $\beta \le 1$; For $1<\beta<2$, increasing $\beta$ decreases the contribution of the first term due to noise
amplification, and increases the contribution of the second term, and for $\beta>2$ the diversity gain is again independent of $\beta$.
Thus, when the interference is strong enough to allow decoding the interference without decreasing the achievable diversity gain, the maximal achievable diversity gain for the AF-FD scheme is $\min\big\{(1-r_1)^+,(1-r_2)^+\big\}$.

\section{Proof of Theorem \ref{thm:theoremafjd}}
\label{app:AFHDProof}
We first construct a transmission scheme, and then analyze its DMT region.
\vspace{-0.5cm}
\subsection{Overview of the Transmission Scheme}
\label{sec:HDAF1Ach}
We follow the principles of the scheme proposed in \cite{Azarian:05}, which studied HD-AF for the single relay channel: the relay operation is done in consecutive pairs of channel symbols (no overlap). At the first symbol time of each pair, the relay receives the channel output while remaining silent. At the second symbol time, the relay transmits a scaled version of the symbol it received at the first symbol time. Without loss of generality, assume that at time $i$, $i\in\{1,3,5,...,n-1\}$, the relay receives, and at time $i+1$ it transmits.
We also assume that $n$ is even. Tx$_1$ and Tx$_2$ transmit only at the first $n-1$ symbols. At the $n$'th symbol, Tx$_1$ and Tx$_2$ remain silent. The corresponding rate loss is asymptotically negligible. Thus, we have
\begin{eqnarray*}
    Y_{1,i} & = & \sqrt{\SNR}H_{11} X_{1,i} + \sqrt{\SNR^{\alpha}}H_{21} X_{2,i} + Z_{1,i}\\
    Y_{2,i} & = & \sqrt{\SNR^{\alpha}}H_{12} X_{1,i} + \sqrt{\SNR}H_{22} X_{2,i} + Z_{2,i}\\
    Y_{3,i} & = & \sqrt{\SNR}H_{13} X_{1,i} + \sqrt{\SNR}H_{23} X_{2,i} + Z_{3,i}\\
    Y_{1,i+1} & = & \sqrt{\SNR} H_{11} X_{1,i+1} + \sqrt{\SNR^{\alpha}}H_{21} X_{2,i+1}+\sqrt{\SNR^{\beta}}H_{31} G_{R,i} \left(\sqrt{\SNR}H_{13} X_{1,i} + \sqrt{\SNR}H_{23} X_{2,i} + Z_{3,i}\right)  + Z_{1,i+1}\\
    Y_{2,i+1} & = & \sqrt{\SNR^{\alpha}} H_{12} X_{1,i+1} + \sqrt{\SNR}H_{22} X_{2,i+1}+\sqrt{\SNR^{\beta}}H_{32} G_{R,i} \left(\sqrt{\SNR}H_{13} X_{1,i} + \sqrt{\SNR}H_{23} X_{2,i} + Z_{3,i}\right)  + Z_{2,i+1}.
\end{eqnarray*}
The CSI assumptions are the same as those considered in Section \ref{sec:AFregionHD}.  The code construction, encoding and decoding are as follows:

\subsubsection{Code Construction}
    \label{sec:HDAF1code}
        Set $X_k\sim\CN(0,1), k\in\{1,2\}$. For each $m_{k} \in \mathcal{M}_{k}, k\in\{1,2\}$ select a codeword $\xvec_{k}(m_{k})$ according to the p.d.f. $f_{\Xvec_{k}}\big(\xvec_{k}(m_{k})\big)=\prod_{i=1}^n f_{X_{k}}\big(x_{{k},i}(m_{k})\big)$.

\subsubsection{Encoding at the Sources and at the Relay}
    \label{sec:HDAF1encode}
        Tx$_k$ transmits $m_{k}$ using $\xvec_{k}(m_{k}), k\in\{1,2\}$. At each even time index the relay transmits a scaled version of the symbol it received at the previous time index: $X_{3,i+1}=G_{R,i} \left(\sqrt{\SNR}H_{13} X_{1,i} + \sqrt{\SNR}H_{23} X_{2,i} + Z_{3,i}\right)$, where $G_{R,i}$ is set as in \eqref{com:com1} to satisfy the power constraint at the relay.
        At odd time indices the relay does not transmit.

\subsubsection{Decoding at the Destinations}
    \label{sec:HDAF1decode}
    Each receiver jointly decodes $m_{1}$ and $m_{2}$ using a maximum likelihood (ML) decoder. However, note that as the relay transmits a scaled version of its received signal,
    then each odd-indexed channel output is correlated with its subsequent even-indexed channel output. Thus, we consider consecutive pairs of symbols to which we refer as {\em double-symbols}.
    As the codewords are generated i.i.d., it follows that each codeword of length $n$ can be treated a vector of $\frac{n}{2}$ i.i.d.-generated double-symbols and the probability of error can now be calculated using standard ML arguments as in \cite{Azarian:05}.
    For $k\in\{1,2\}$, define $\Xvec^{(D)}_{k,i}\triangleq(X_{k,2i-1}, X_{k,2i})^T, i\in\{1,2,3,...,\frac{n}{2}\}$. It follows that when the $i$'th double-symbol is transmitted by Tx$_k$, the odd-indexed symbol $X_{k,2i-1}$ is transmitted while the relay listens to the sources, and the even-indexed symbol $X_{k,2i}$ is transmitted while the relay transmits. Let $\Yvec^{(D)}_{k,i}$ denote the received signal at Rx$_k$ corresponding to the double-symbol $\Xvec^{(D)}_{k,i}$.
    As discussed above $\left\{\Yvec^{(D)}_{k,i}\right\}_{i=1}^{\frac{n}{2}} \triangleq \Yvec^{(D)}_{k}$ is a vector of the $\frac{n}{2}$ i.i.d. double-symbols received at Rx$_k$.
    Similarly define $\left\{\Xvec^{(D)}_{k,i}\right\}_{i=1}^{\frac{n}{2}} \triangleq \Xvec^{(D)}_{k}$.
    Applying the ML decoding rule, Rx$_k$, $k=1,2$, declares that $(\hat{m}_1,\hat{m}_2)$ was transmitted if
    \begin{equation*}
        (\hat{m}_{1},\hat{m}_{2})=\argmax_{(m_{1}, m_2) \in \mathcal{M}_{1}\times\mathcal{M}_{2}}\!\!\Pr\big(\yvec^{(D)}_{k}\big|\xvec^{(D)}_{1}(m_{1}),\xvec^{(D)}_{2}(m_{2}),\th_k,\th_3\big).
    \end{equation*}
    Using the same notation as in \cite[Section IIV-C]{Tse:04}, we define for each nonempty set $\mathcal{S}\subseteq\{1,2\}$, an error event
        $\mathcal{E}_{\mathcal{S}}\triangleq\{\hat{m}_k\neq m_k,\forall k\in \mathcal{S} \hspace{0.2cm}\mbox{ and }\hspace{0.2cm} \hat{m}_k=m_k,\forall k\in \mathcal{S}^c\}$.
    It follows that the event of decoding error at Rx$_k$ consists of the union of the events $\mE_1$, $\mE_2$ and $\mE_{\{1,2\}}$.
    From \cite[Eqn. (28)]{Tse:04} it follows that the asymptotic probability of error can be evaluated by subtracting from the received signal all signals corresponding
    to the messages in $\mS^c$.


    In the following we evaluate the probability of error at Rx$_1$ by considering each error event and deriving the corresponding probability of error
     for an ML decoder which processes double-symbols:
    \begin{itemize}
        \item $\mathcal{E}_{\{1\}}\triangleq\{\hat{m}_1\neq m_1, \hat{m}_2=m_2\}$:
        Let $(\hat{Y}_{1,2i-1}, \hat{Y}_{1,2i}), i\in\{1,2,3,...,\frac{n}{2}\}$ denote the interference-free signal received at Rx$_1$:
        \begin{eqnarray*}
            \hat{Y}_{1,2i-1} & = & \sqrt{\SNR}H_{11} X_{1,2i-1} +  Z_{1,2i-1}\\
            \hat{Y}_{1,2i} & = & \sqrt{\SNR}H_{11} X_{1,2i} + \sqrt{\SNR^{\beta}}H_{31} G_{R,i} \left(\sqrt{\SNR}H_{13} X_{1,2i-1} + Z_{3,2i-1}\right)  + Z_{1,2i},
        \end{eqnarray*}
        $i\in\{1,2,3,...,\frac{n}{2}\}$. This corresponds to a point-to-point channel whose input is the double-symbol $\Xvec^{(D)}_{1,i}\triangleq(X_{1,2i-1}, X_{1,2i})^T$ and its output is the double-symbol $\hat{\Yvec}^{(D)}_{1,i}\triangleq(\hat{Y}_{1,2i-1},\hat{Y}_{1,2i})^T$.
        Let $\Hmat_{1,i}$, $\Zvec^{(D)}_{1,i}$, $\C_{X_{1,i}^{(D)}}$, and $\C_{Z_{1,i}^{(D)}}$ denote the channel matrix, the noise double-symbol, the covariance matrix of $\Xvec^{(D)}_{1,i}$  and the covariance matrix of the noise at Rx$_1$, respectively:
        \begin{eqnarray*}
        \Hmat_{1,i}&\triangleq&\left[\begin{array}{cc}
        \sqrt{\SNR}h_{11} & \hspace{0.5 cm}0\\
        \hspace{0.5 cm}\sqrt{\SNR^{1+\beta}}h_{13}h_{31}G_{R,i} & \hspace{0.5cm}\sqrt{\SNR}h_{11}\end{array} \right],\;\;\;
        \Zvec^{(D)}_{1,i} \triangleq \left[\begin{array}{c}
        Z_{1,2i-1}\\
        \sqrt{\SNR^{\beta}}h_{31}G_{R,i}  Z_{3,2i-1}+Z_{1,2i}\end{array}\right],
        \end{eqnarray*}
        and
        \begin{eqnarray*}
        \C_{X^{(D)}_{1,i}}&\triangleq&\cov\Big(\Xvec^{(D)}_{1,i}\Big)\triangleq\left[\begin{array}{cc}
        1 &\;\; 0\\
        0 &\;\; 1 \end{array}\right]\equiv \C_{X^{(D)}_{1}},\;
        \C_{Z^{(D)}_{1,i}} \triangleq \cov\Big(\Zvec^{(D)}_{1,i}\Big)\triangleq\left[\begin{array}{cc}
        1 & \hspace{0.3 cm}0\\
        0 & \hspace{0.3 cm}1+\SNR^{\beta}|h_{31}|^2G_{R,i}^2\end{array}\right]\!.
        \end{eqnarray*}
        Hence, $\tilde{\Yvec}^{(D)}_{1,i}=\Hmat_{1,i}\Xvec^{(D)}_{1,i}+\Zvec^{(D)}_{1,i}$.
         Note that the assignment $G_{R,i}^2=\SNR^{-1}$ (\cite[Appendix B]{Azarian:05}) satisfies the power constraint at the relay (see, Eqn. \eqref{com:com1}).
        Setting $G_{R,i}^2=\SNR^{-1}$ for all $i=1,3,5,...,n-1$, we obtain that $\Hmat_{1,i}$ and $\C_{Z^{(D)}_{1,i}}$ do not depend on $i$, hence we denote $\Hmat_{1,i}\equiv\Hmat_{1}$ and $\C_{Z^{(D)}_{1,i}}\equiv\C_{Z^{(D)}_{1}}$.
        Next, following similar steps as those used in \cite{Azarian:05}, we conclude that an upper bound on pairwise error probability (PEP) for the ML decoding rule associated with $\mathcal{E}_{\{1\}}$ is
        \begin{equation}
        \label{eq:PEP}
            P_{{PE}_1}\leq \left(\det\Big(\Imat_2+\frac{1}{2}\Hmat_1\C_{X_1}^{(D)}\Hmat_1^H\left(\C^{(D)}_{Z_1}\right)^{-1}\Big)\right)^{-\frac{n}{2}}.
        \end{equation}
        Plugging $\Hmat_1$, $\C_{X_1}^{(D)}$ and $\C_{Z_1}^{(D)}$ into \eqref{eq:PEP} 
        we obtain
        \begin{eqnarray}
            \label{eq:eq23}P_{{PE}_1}&\leq& \left(\!\!1\!+\!\frac{1}{2}\SNR^{1-\theta_{11}}\! +\! \frac{\frac{1}{2}\SNR^{\beta-\theta_{31}-\theta_{13}}}{1+\SNR^{\beta-\theta_{31}-1}} \!+\! \frac{\frac{1}{4}\SNR^{2-2\theta_{11}}}{1+\SNR^{\beta-\theta_{31}-1}} + \frac{\frac{1}{2}\SNR^{1-\theta_{11}}}{1+\SNR^{\beta-\theta_{31}-1}}\right)^{-\frac{n}{2}}\\
            &\stackrel{(a)}{\leq}& \left(1+\frac{1}{2}\SNR^{1-\theta_{11}} + \frac{\frac{1}{2}\SNR^{\beta-\theta_{31}-\theta_{13}}}{1+\SNR^{\beta-1}} + \frac{\frac{1}{4}\SNR^{2-2\theta_{11}}}{1+\SNR^{\beta-1}} + \frac{\frac{1}{2}\SNR^{1-\theta_{11}}}{1+\SNR^{\beta-1}}\right)^{\!\!\!-\frac{n}{2}}\nonumber\\
            &\stackrel{(b)}{\dot{\le}}&\begin{cases}
            \left(\SNR^{\beta-\theta_{31}-\theta_{13}} + \SNR^{2-2\theta_{11}}\right)^{-\frac{n}{2}} & \!\!\!\!,\beta \!\le\!1\nonumber\\
            \min\!\!\left\{\!\!\left(\SNR^{1-\theta_{11}}\!\!+\SNR^{1-\theta_{31}-\theta_{13}}\right)^{\!-\frac{n}{2}}\!\!,\! \left(\SNR^{1-\theta_{31}-\theta_{13}}\!\!+\SNR^{3-\beta-2\theta_{11}}\right)^{\!-\frac{n}{2}}\right\}& \!\!\!\! ,\beta \!>\!1\;\;\;\;\;\;\;\;
            \end{cases}
        \end{eqnarray}
        where (a) follows as $\theta_{31}\ge0$ and therefore, omitting $\theta_{31}$ from the exponents in the denominators increases the dominators and therefore, decreases the expression in the parentheses in \eqref{eq:eq23} and eventually increases the entire expression;  (b) follows since
        for $\beta \le1$ we obtain $1+\SNR^{\beta-1}\doteq1$  as $\SNR\rightarrow\infty$. Recall that the target rate at Tx$_1$ is $r_1\log\SNR$ bits per channel uses.
        As the target rate of the double-symbols is set to $R^{(D)}_{1,T}=r^{(D)}_1\log(\SNR)$ bits {\em per two channel uses}, then we have a total of $\SNR^{\frac{n}{2}r^{(D)}_1}=\SNR^{\frac{n}{2}2 r_1}$ codewords. Hence, applying the union bound over all the codewords, the probability of error in decoding the message from Tx$_1$ at Rx$_1$ can be upper bounded by
        \begin{equation}
            \label{eq:PE}
            \!\!\Pr(\mE_1)\dot{\leq}\!\!\begin{cases}
            \!\SNR^{-\frac{n}{2}\left[\big(\max\{\beta-\theta_{31}-\theta_{13},2-2\theta_{11}\}\big)^+-2r_1\right]} & \!\!\!\!\!,\beta\!\le\!1\\
            \!\SNR^{-\frac{n}{2}\left[\max\left\{\!\!\big(\max\{1-\theta_{11},1-\theta_{31}-\theta_{13}\}\big)^+\!\!\!
                    -2r_1,\big(\max\{1-\theta_{31}-\theta_{13},3-\beta-2\theta_{11}\}\big)^+
                    \!\!\!-2r_1\!\!\right\}\right]} &\!\!\!\!\!,\beta\!>\!1 \end{cases}
        \end{equation}

        \item $\mathcal{E}_{\{2\}}\triangleq\{\hat{m}_1= m_1, \hat{m}_2\neq m_2\}$: Note that as decoding $m_{2}$ is not required at Rx$_1$, then an outage corresponding to $\mathcal{E}_{\{2\}}$ need not be accounted for at Rx$_1$, and therefore $\mE_{\{2\}}$ does not constrain the achievable DMT region at Rx$_1$.

        \item $\mathcal{E}_{\{1,2\}}\triangleq\{\hat{m}_1\neq m_1, \hat{m}_2\neq m_2\}$: Define the super-symbol $\Xvec^{(S)}_{1,i}\triangleq(X_{1,2i-1}, X_{2,2i-1}, X_{1,2i}$, $X_{2,2i})^T$, $i\in\{1,3,5,...,\frac{n}{2}\}$ as a vector of two consecutive pairs of symbols transmitted by Tx$_1$ and Tx$_2$.
            The corresponding received signal at Rx$_1$ is
        \begin{eqnarray*}
            Y_{1,2i-1} & = & \sqrt{\SNR} H_{11} X_{1,2i-1} + \sqrt{\SNR^\alpha}H_{21} X_{2,2i-1} + Z_{1,2i-1}\\
            Y_{1,2i}   & = & \sqrt{\SNR} H_{11} X_{1,2i} + \sqrt{\SNR^\alpha}H_{21} X_{2,2i} \\
             &&\quad+\sqrt{\SNR^{\beta}}H_{31} G_{R,i} \left(\sqrt{\SNR}H_{13} X_{1,2i-1}+ \sqrt{\SNR}H_{23} X_{2,2i-1} + Z_{3,2i-1}\right)  + Z_{1,2i}.
        \end{eqnarray*}
        Next, define $\Yvec^{(D)}_{1,i}\triangleq(Y_{1,2i-1},Y_{1,2i})^T$ and note that $\left\{\Yvec^{(D)}_{1,i}\right\}_{i=1}^{\frac{n}{2}}$ are i.i.d. Define
        \begin{eqnarray*}
        \Hmat_{1,i}&\triangleq&\left[\begin{array}{cccc}
        \sqrt{\SNR}h_{11} & \hspace{0.5 cm}\sqrt{\SNR^\alpha}h_{21}&\hspace{0.5 cm}0&\hspace{0.5 cm}0\\
        \sqrt{\SNR^{1+\beta}}h_{13} G_{R,i} h_{31}& \hspace{0.5 cm}\sqrt{\SNR^{1+\beta}}h_{23} G_{R,i} h_{31}& \hspace{0.5 cm}\sqrt{\SNR}h_{11}
                & \hspace{0.5 cm}\sqrt{\SNR^\alpha}h_{21} \end{array} \right]\\
        \Zvec_{1,i}^{(D)}&\triangleq&\left[\begin{array}{c}
        Z_{1,2i-1} \\
        \sqrt{\SNR^{\beta}}h_{31}G_{R,i}Z_{3,i-1}+ Z_{1,2i}\end{array}\right]\!, \quad\;\;
        \C_{Z^{(D)}_{1,i}}\triangleq\cov\Big(\Zvec_{1,i}^{(D)}\Big)=\left[\begin{array}{cc}
        1 & \hspace{0.5 cm}0\\
        0 & \hspace{0.5 cm}1+\SNR^{\beta-\theta_{31}}G_{R,i}^2\end{array}\right]\!,
        \end{eqnarray*}
        $\C_{X_{1,i}^{(S)}} \triangleq \cov\Big(\Xvec^{(S)}_{1,i}\Big)=\Imat_4 \equiv \C_{X_{1}^{(S)}}$.
        Hence, $\Yvec^{(D)}_{1,i}=\Hmat_{1,i}\Xvec^{(S)}_{1,i}+\Zvec^{(D)}_{1,i}$. Setting $G_{R,i}^2=\SNR^{-1}$ we satisfy the power constraint at the relay, and obtain that $\Hmat_{1,i}$ and $\C_{Z^{(D)}_{1,i}}$
        are independent of $i$, thus $\Hmat_{1,i}\equiv \Hmat_{1}$ and $\C_{Z^{(D)}_{1,i}} \equiv \C_{Z^{(D)}_{1}}$.
        Following steps similar to those used
        in \cite{Azarian:05}, we obtain that an upper bound on the PEP associated with $\mathcal{E}_{\{1,2\}}$ is given by
        \begin{equation}
            \label{eq:PEP2}
            P_{{PE}_{12}}\leq \det\left(\Imat_2+\frac{1}{2}\Hmat_1\C_{X_1}^{(S)}\Hmat_1^H\left(\C_{Z_1}^{(D)}\right)^{-1}\right)^{-\frac{n}{2}}.
        \end{equation}
        Plugging $\Hmat_1$, $\C_{X_1}^{(S)}$ and $\C_{Z_1}^{(D)}$ into \eqref{eq:PEP2} we obtain
        \begin{eqnarray}
            \!\!\!\!P_{{PE}_{12}}&\leq& \bigg(1+\frac{1}{2}\big(\SNR^{1-\theta_{11}}+\SNR^{\alpha-\theta_{21}}\big) + \frac{\frac{1}{2}\big(\SNR^{1-\theta_{11}}+\SNR^{\alpha-\theta_{21}}\big)}{1+\SNR^{\beta-\theta_{31}-1}} +\nonumber\\
            &&\quad\frac{\frac{1}{4} \big(\SNR^{1-\theta_{11}}+\SNR^{\alpha-\theta_{21}}\big)^2}{1+\SNR^{\beta-\theta_{31}-1}}+\frac{\frac{1}{2}\SNR^{\beta-\theta_{31}-1}\big(\SNR^{1-\theta_{13}}+\SNR^{1-\theta_{23}}\big)} {1+\SNR^{\beta-\theta_{31}-1}}+\nonumber\\
            \label{eq:eq21}&&\quad\qquad\qquad\qquad\frac{\frac{1}{4}\SNR^{\beta-\theta_{31}-1}\big|\sqrt{\SNR^{1+\alpha}}h_{13}h_{21}-\SNR h_{23}h_{11}\big|^2}{1+\SNR^{\beta-\theta_{31}-1}}\bigg)^{-\frac{n}{2}}\\
            &\stackrel{(a)}{\le}&\quad \bigg(\frac{\frac{1}{4} \big(\SNR^{1-\theta_{11}}+\SNR^{\alpha-\theta_{21}}\big)^2}{1+\SNR^{\beta-\theta_{31}-1}} +\frac{\frac{1}{2}\SNR^{\beta-\theta_{31}-1}\big(\SNR^{1-\theta_{13}}+\SNR^{1-\theta_{23}}\big)}{1+\SNR^{\beta-\theta_{31}-1}}\bigg)^{ -\frac{n}{2}}\nonumber\\
            \label{eq:eq26}
            &\stackrel{(b)}{\le}&\quad \bigg(\frac{\frac{1}{4} \big(\SNR^{1-\theta_{11}}+\SNR^{\alpha-\theta_{21}}\big)^2}{1+\SNR^{\beta-1}} +\frac{\frac{1}{2}\SNR^{\beta-\theta_{31}-1}\big(\SNR^{1-\theta_{13}}+\SNR^{1-\theta_{23}}\big)}{1+\SNR^{\beta-1}}\bigg)^{ -\frac{n}{2}}\\
            &\stackrel{(c)}{\dot{\le}}&\quad\nonumber\begin{cases}
            \bigg(\SNR^{2-2\theta_{11}}+\SNR^{2\alpha-2\theta_{21}}+\SNR^{\beta-\theta_{31}-\theta_{13}}\bigg)^{-\frac{n}{2}} & \quad ,\beta\le1\\
            \bigg(\SNR^{3-\beta-2\theta_{11}}+\SNR^{2\alpha+1-\beta-2\theta_{21}}+\SNR^{1-\theta_{31}-\theta_{13}}\bigg)^{-\frac{n}{2}} &\quad ,\beta>1,
            \end{cases}
        \end{eqnarray}
        where\! (a)\! follows since the expression in the parentheses\! in \eqref{eq:eq21} is a summation of nonnegative terms and thus,\! removing nonnegative terms from this summation increases the expression on the right hand side of the inequality;\! (b)\! follows since\! $\theta_{31}\ge 0$; and (c) is obtained by omitting\! nonnegative\! terms\! from\! \eqref{eq:eq26}.\!\! Using similar steps as those used to evaluate the probability of $\mathcal{E}_{\{1\}}$, set the target rate to $R_{k,T}\!=\!r_k\log(\SNR), k\in\{1,2\}$, bits per channel use.\! Thus, when decoding both messages at Rx$_1$, there is a total of $\SNR^{\frac{n}{2}2(r_1+r_2)}$ possible codewords. By applying the union bound over all the codewords we conclude that
        \begin{equation}
            \label{eq:PE2}
            \Pr(\mE_{\{1,2\}})\dot{\leq}\begin{cases} \SNR^{-\frac{n}{2}\Big[\max\big\{(2-2\theta_{11}),(2\alpha-2\theta_{21}),(\beta-\theta_{31}-\theta_{13})\big\}-2r_1-2r_2\Big]^+} & \;\;\;,\beta\le1\\
            \SNR^{-\frac{n}{2}\Big[\max\big\{(3-\beta-2\theta_{11}),(2\alpha+1-\beta-2\theta_{21}),(1-\theta_{31}-\theta_{13})\big\}-2r_1-2r_2\Big]^+} & \;\;\;,\beta\ge1
            \end{cases}
        \end{equation}
    \end{itemize}

 \vspace{-0.5 cm}

\subsection{Evaluating the DMT region of Half-Duplex AF Relaying}
\!\!\!First,\! we evaluate the DMT region corresponding to $\mathcal{E}_{\{1\}}$: Define $\!P_{\outage,1}$ as the probability of the event in which the channel realizations are s.t. the probability of error
corresponding to $\mathcal{E}_{\{1\}}$
cannot be made arbitrarily small. Following similar arguments as those in \cite[\!Proof of Thm. 3]{Azarian:05}, we conclude that $P_{\outage,1}$ can be upper bounded by $P_{\outage,1}\dot{\leq} \SNR^{-d_{HD_1}(r_1)}$ where $d_{HD_1}(r_1)$ is obtained from 
\eqref{eq:PE} as follows:
For $\beta \le 1$ 
the maximal $\Pr(\mE_{\{1\}})$ can be obtained from
the following minimization problem:

 \vspace{-0.9 cm}

\begin{eqnarray*}
     &&\min \theta_{11}+\theta_{31}+\theta_{13}\\
     &&\mbox{s.t. }\max\big\{(\beta-\theta_{31}-\theta_{13}),(2-2\theta_{11})\big\}<2r_1, \qquad \theta_{11}\ge0,\theta_{31}\ge0,\theta_{13}\ge0.
\end{eqnarray*}
This results in the following DMT for $\beta \le 1$:
\begin{equation}
\label{eq:eq271}
d_1(r_1)=(1-r_1)^++(\beta-2r_1)^+.
\end{equation}
Next, 
for $\beta>1$ we observe that the DMT region is given as the maximum of two expressions. The first expression is $d_2(r_1)=\theta_{11}^*+\theta_{31}^*+\theta_{13}^*$, where $(\theta_{11}^*,\theta_{31}^*,\theta_{13}^*)$ are the optimal arguments of the minimization problem:
\begin{eqnarray*}
     &&\min \theta_{11}+\theta_{31}+\theta_{13}\\
     &&\mbox{s.t. }\max\big\{(1-\theta_{11}),(1-\theta_{31}-\theta_{13})\big\}<2r_1, \qquad \theta_{11}\ge0,\theta_{31}\ge0,\theta_{13}\ge0,
\end{eqnarray*}
for which we obtain the DMT region: 
\begin{equation}
\label{eq:eq272}
d_{2}(r_1)=2(1-2r_1)^+.
\end{equation}
 The second expression is given by $d_3(r_1)=\theta_{11}^*+\theta_{31}^*+\theta_{13}^*$, where $(\theta_{11}^*,\theta_{31}^*,\theta_{13}^*)$ are the optimal arguments
 of the optimization problem:
\begin{eqnarray*}
     &&\min \theta_{11}+\theta_{31}+\theta_{13}\\
     &&\mbox{s.t. }\max\big\{(1-\theta_{31}-\theta_{13}),(3-\beta-2\theta_{11})\big\}<2r_1,\qquad \theta_{11}\ge0,\theta_{31}\ge0,\theta_{13}\ge0,
\end{eqnarray*}
which results in the DMT region:
\begin{equation}
\label{eq:eq273}
d_3(r_1)=(1-2r_1)^+ + \left(\frac{3-\beta}{2}-r_1\right)^+.
\end{equation}
Combining \eqref{eq:eq271}, \eqref{eq:eq272}, and \eqref{eq:eq273} we conclude that $d_{HD_1}(r_1)$ is given by:
\begin{equation}
\label{eq:eq19}
d_{HD_1}(r_1)=\begin{cases}
(1-r_1)^++(\beta-2r_1)^+ & \qquad \qquad ,\beta\le1\\
\max\left\{2(1-2r_1)^+,(1-2r_1)^++(\frac{3-\beta}{2}-r_1)^+\right\} & \qquad \qquad, \beta>1
\end{cases}
\end{equation}
Using similar arguments we obtain the DMT corresponding to the error event $\mathcal{E}_{\{1,2\}}$ from \eqref{eq:PE2}:
\begin{equation}
\label{eq:JointDMT}
d_{HD_{12}}(r_1,r_2)=\begin{cases}
(1-r_1-r_2)^++(\alpha-r_1-r_2)^++(\beta-2r_1-2r_2)^+ & \;,\beta\le1\\
(\frac{3-\beta}{2}-r_1-r_2)^++(\frac{2\alpha+1-\beta}{2}-r_1-r_2)^++(1-2r_1-2r_2)^+ & \;,\beta>1
\end{cases}
\end{equation}
Repeating the same derivations for decoding at Rx$_2$ and combining with \eqref{eq:eq19} and \eqref{eq:JointDMT} we obtain \eqref{eq:HDAFDMT1}. This completes the proof.
\tend


\end{document}